\documentclass[english,11pt,a4paper]{article}
\usepackage{microtype}
\usepackage{amsthm}
\usepackage{graphicx} 
\usepackage{array} 
\usepackage{colortbl,makecell}
\usepackage{pifont} 
\usepackage{cite}
\usepackage{amsmath, amssymb, amsfonts, verbatim}
\usepackage{hyphenat, epsfig, subcaption, multirow}
\usepackage{nicefrac}
\usepackage{paralist,enumitem}
\setitemize{noitemsep,topsep=0pt}
\setenumerate{noitemsep,topsep=0pt}

\usepackage{dsfont} 

\usepackage{mathtools, etoolbox}
\usepackage{thmtools}
\usepackage{thm-restate}
\usepackage{xparse}

\usepackage[linesnumbered,lined,algoruled,noend]{algorithm2e}\DontPrintSemicolon
\SetKwProg{Fn}{function}{\string:}{}
\SetKwBlock{Repeat}{repeat}{}
\SetKwFor{ForWithoutDo}{for}{}{}

\usepackage[font=small,labelfont=bf]{caption}

\usepackage{xcolor}
\definecolor{myblue}{HTML}{0088cc}
\definecolor{myorange}{HTML}{f26924}
\colorlet{paleblue}{myblue!10!white}
\colorlet{paleorange}{myorange!10!white}

\usepackage{babel}
\usepackage{csquotes}

\DeclareFontFamily{U}{mathx}{\hyphenchar\font45}
\DeclareFontShape{U}{mathx}{m}{n}{
      <5> <6> <7> <8> <9> <10>
      <10.95> <12> <14.4> <17.28> <20.74> <24.88>
      mathx10
      }{}
\DeclareSymbolFont{mathx}{U}{mathx}{m}{n}
\DeclareMathSymbol{\bigtimes}{1}{mathx}{"91}

\usepackage{tcolorbox}
\tcbuselibrary{skins,breakable}
\tcbset{enhanced jigsaw}

\usepackage{nameref}
\usepackage[linktocpage=true,colorlinks,allcolors=myblue]{hyperref}
\DeclareUrlCommand{\path}{}
\usepackage[noabbrev,nameinlink,capitalize]{cleveref}
\crefname{property}{property}{Property}
\crefname{invariant}{Invariant}{Invariants}
\creflabelformat{property}{(#1)#2#3}
\crefname{equation}{Eq}{Eq}
\crefname{observation}{Observation}{Observations}
\creflabelformat{equation}{(#1)#2#3}

\usepackage{bm}
\usepackage{url}
\usepackage{xspace}
\usepackage[mathscr]{euscript}
\usepackage{mathrsfs}

\usepackage{tikz}
\usetikzlibrary{arrows}
\usetikzlibrary{arrows.meta}
\usetikzlibrary{shapes}
\usetikzlibrary{backgrounds}
\usetikzlibrary{positioning}
\usetikzlibrary{decorations.markings}
\usetikzlibrary{patterns}
\usetikzlibrary{calc}
\usetikzlibrary{fit}
\tikzset{vertex/.style={circle, black, fill=Yellow, line width=1pt, draw, minimum width=8pt, minimum height=8pt, inner sep=0pt}}
\usepackage{wrapfig}

\usepackage[framemethod=TikZ]{mdframed}

\usepackage[margin=1in]{geometry}

\usepackage{soul}

\newtheorem{lemma}{Lemma}[section]

\newtheorem{theorem}[lemma]{Theorem}
\newtheorem{proposition}[lemma]{Proposition}
\newtheorem{corollary}[lemma]{Corollary}

\newtheorem{definition}[lemma]{Definition}
\newtheorem{problem}{Problem}

\newtheorem*{claim*}{Claim}
\newtheorem*{proposition*}{Proposition}
\newtheorem*{lemma*}{Lemma}
\newtheorem*{problem*}{Problem}

\crefname{assumption}{Assumption}{Assumptions}
\crefname{lemma}{Lemma}{Lemmas}
\crefname{claim}{Claim}{Claims}
\crefname{enumi}{Step}{Steps}
\crefname{step}{Step}{Step}

\newtheorem{remark}[lemma]{Remark}

\renewcommand{\qed}{\nobreak \ifvmode \relax \else
      \ifdim\lastskip<1.5em \hskip-\lastskip
      \hskip1.5em plus0em minus0.5em \fi \nobreak
      \vrule height0.75em width0.5em depth0.25em\fi}

\renewcommand{\leq}{\leqslant}
\renewcommand{\geq}{\geqslant}
\renewcommand{\le}{\leqslant}
\renewcommand{\ge}{\geqslant}

\DeclareMathOperator{\Tr}{Tr}
\newcommand{\supp}{\support}

\newcommand{\As}{\mathcal{A}}
\newcommand{\Bs}{\mathcal{B}}
\newcommand{\Es}{\mathcal{E}}
\newcommand{\C}{\mathbb{C}}

\usepackage{booktabs}
\usepackage{array}
\usepackage{longtable}

\newcolumntype{L}[1]{>{\raggedright\arraybackslash}p{#1}}

\DeclarePairedDelimiter{\bracket}[]
\DeclarePairedDelimiter{\paren}()

\DeclarePairedDelimiter{\card}{\vert}{\vert}
\DeclarePairedDelimiter{\norm}{\|}{\|}
\DeclarePairedDelimiter{\abs}{|}{|}

\DeclarePairedDelimiter{\set}{\{}{\}}

\DeclareMathOperator*{\support}{supp}
\DeclareMathOperator*{\expect}{\mathbb{E}}

\DeclareMathOperator{\Trace}{Tr}
\DeclarePairedDelimiterXPP{\tr}[1]{\Trace}(){}{#1}

\DeclarePairedDelimiterXPP{\Ot}[1]{\widetilde{O}}(){}{#1}
\DeclarePairedDelimiterXPP{\Omgt}[1]{\widetilde{\Omega}}(){}{#1}
\DeclarePairedDelimiterXPP{\BigO}[1]{O}(){}{#1}

\NewDocumentCommand{\Prob}{sO{}E{_}{{}}m}{%
  {\boldsymbol{\mathbb{P}}}_{#3}
  \IfBooleanTF{#1}
  {\bracket*{#4}}
  {\bracket[#2]{#4}}
}

\NewDocumentCommand{\Exp}{sO{}E{_}{{}}m}{%
  \expect_{#3}
  \IfBooleanTF{#1}
  {\bracket*{#4}}
  {\bracket[#2]{#4}}
}

\renewcommand{\epsilon}{\varepsilon}
\newcommand{\eps}{\varepsilon}

\newenvironment{tbox}{\begin{tcolorbox}[
		enlarge top by=5pt,
		enlarge bottom by=5pt,
    boxsep=0pt,
    left=8pt,
    right=8pt,
    top=10pt,
    bottom=10pt,
    arc=2pt,
    boxrule=1pt,
    toprule=1pt,
    colback=paleblue,
    colframe=paleblue
    ]
	}
{\end{tcolorbox}}

\newtheorem{mdresult}{Result}
\crefname{mdresult}{Result}{Results}
\Crefname{mdresult}{Result}{Results}
\newenvironment{result}{\begin{tcolorbox}[
		enlarge top by=5pt,
		enlarge bottom by=5pt,
    breakable,
    boxsep=0pt,
    left=8pt,
    right=8pt,
    top=10pt,
    bottom=10pt,
    arc=2pt,
    boxrule=1pt,
    toprule=1pt,
    colback=paleorange,
    colframe=paleorange
    ]
    \begin{mdresult}
	}
{\end{mdresult}\end{tcolorbox}}

\theoremstyle{definition}
\newtheorem{mdalg}{Algorithm}
\crefname{mdalg}{Algorithm}{Algorithms}
\Crefname{mdalg}{Algorithm}{Algorithms}
\newenvironment{alg}{\begin{tcolorbox}[
		enlarge top by=5pt,
		enlarge bottom by=5pt,
    boxsep=0pt,
    left=8pt,
    right=8pt,
    top=10pt,
    bottom=10pt,
    arc=2pt,
    boxrule=1pt,
    toprule=1pt,
    colback=white,
    ]
    \begin{mdalg}
	}
{\end{mdalg}\end{tcolorbox}}

	{\noindent {}{#1}{}}{ \strut\hfill \qed \vspace{2ex}}

\newcommand{\fS}{\mathcal{S}}

\newcommand{\bN}{\mathbb{N}}
\newcommand{\bZ}{\mathbb{Z}}
\newcommand{\bR}{\mathbb{R}}

\newcommand{\ket}[1]{|#1\rangle}
\newcommand{\bra}[1]{\langle#1|}
\newcommand{\braket}[2]{\langle #1|#2\rangle}
\newcommand{\ketbra}[2]{|#1\rangle\langle#2|}
\newcommand{\braUket}[3]{\langle #1|#2|#3 \rangle}

\newcommand{\bC}{\mathbb{C}}

\usepackage{dsfont}

\newcommand{\qLOCAL}{quantum-LOCAL\xspace}

\newcommand{\myaff}[1]{\,$\cdot$\, {\small #1}\par\medskip}

\newenvironment{myabstract}
{\list{}{\listparindent 1.5em
        \itemindent    \listparindent
        \leftmargin    0cm
        \rightmargin   0cm
        \parsep        0pt}%
    \item\relax}
{\endlist}

\newenvironment{mycover}
{\list{}{\listparindent 0pt
        \itemindent    \listparindent
        \leftmargin    0cm
        \rightmargin   1.5cm
        \parsep        0pt}%
    \raggedright
    \item\relax}
{\endlist}

\hypersetup{
    pdftitle={Quantum Advantage for Distributed Symmetry Breaking},
    pdfauthor={Maxime Flin, Longcheng Li, Jukka Suomela}
}

\begin{document}
\begin{mycover}
{\huge\bfseries Quantum Advantage for \\ Distributed Symmetry Breaking \par}
\bigskip
\bigskip

\textbf{Maxime Flin}
\myaff{Aalto University, Finland}


\textbf{Longcheng Li}
\myaff{University of Cambridge, UK}

\textbf{Jukka Suomela}
\myaff{Aalto University, Finland}
\bigskip
\end{mycover}

\begin{myabstract}
\noindent\textbf{Abstract.}
We present a distributed quantum algorithm that $3$-colors cycles in $O(1)$ rounds, with high probability. It follows that all locally checkable labeling problems (LCLs) that have round complexity $O(\log^* n)$ in the classical LOCAL model can be solved in $O(1)$ rounds in the quantum-LOCAL model, with high probability; this includes problems such as maximal independent set and maximal matching in bounded-degree graphs. This presents the first \emph{natural} examples of graph problems with an asymptotic distributed quantum advantage for the LOCAL model; all prior examples that separate LOCAL and quantum-LOCAL are artificial problems constructed merely for the sake of demonstrating quantum advantage.
\end{myabstract}

\thispagestyle{empty}
\newpage
\tableofcontents
\newpage
\pagenumbering{arabic}

\section{Introduction}

We solve some of the biggest open questions related to quantum advantage in distributed algorithms. We show that there are constant-round quantum algorithms for many natural symmetry-breaking problems, such as graph coloring, maximal independent set, and maximal matching, in bounded-degree graphs.

\subsection{Distributed Quantum Advantage in the LOCAL Model}

The most demanding setting for studying \emph{distributed quantum advantage} is the LOCAL model of distributed computing. In the classical \emph{LOCAL model} \cite{linial-1992-locality-in-distributed-graph-algorithms,peleg-2000-distributed-computing-a-locality-sensitive}, the nodes of the input graph are classical computers, edges are classical communication channels, and computation proceeds in synchronous rounds; in each round each node can send a message to each neighbor, receive a message from each neighbor, and update its state; and the \emph{running time} of the algorithm is the number of rounds until all nodes stop and announce their own local output (for example, if we are coloring a graph, each node has to announce its own color). In the \emph{quantum-LOCAL model} \cite{gavoille-kosowski-markiewicz-2009-what-can-be-observed,arfaoui-fraigniaud-2014-what-can-be-computed-without} the nodes are quantum computers and they can exchange qubits with their neighbors. Notably, in the classical version there are no limitations on the size of the message, local memory, or local computation; hence we are asking if quantum communication and computation provide any advantage over a very powerful classical model.

So far there have been primarily negative results \cite{gavoille-kosowski-markiewicz-2009-what-can-be-observed,arfaoui-fraigniaud-2014-what-can-be-computed-without,coiteux-roy-d-amore-etal-2024-no-distributed-quantum,dhar-kujawa-etal-2024-local-problems-in-trees-across-a,akbari-coiteux-roy-etal-2025-online-locality-meets,balliu-coupette-etal-2025-new-limits-on-distributed,brandt-gottlicher-2026-a-post-quantum-lower-bound-for,fraigniaud-magniez-ziccardi-2026-no-distributed-quantum,d-amore-lievonen-2026-superlogarithmic-gap-result-for}. There are, in essence, only \emph{three} problem families \cite{le-gall-nishimura-rosmanis-2019-quantum-advantage-for,balliu-brandt-etal-2025-distributed-quantum-advantage,balliu-casagrande-etal-2026-distributed-quantum} that are known to admit asymptotically faster algorithms in quantum-LOCAL in comparison with classical LOCAL, and all of them are \emph{artificial} problems, engineered just for the sake of being easy to solve with quantum algorithms. We still do not know if there are any \emph{natural} graph problems that admit quantum advantage.

The most pressing case is the seemingly simple problem of graph coloring, and in particular the task of coloring cycles with $3$ colors and more generally coloring graphs of maximum degree $\Delta = O(1)$ with $\Delta + 1$ colors. These are very well-understood problems in the classical setting, and already in the 1990s it was known that they require $\Theta(\log^* n)$ rounds \cite{cole-vishkin-1986-deterministic-coin-tossing-with,goldberg-plotkin-shannon-1988-parallel-symmetry,linial-1992-locality-in-distributed-graph-algorithms,naor-1991-a-lower-bound-on-probabilistic-algorithms-for}. Could these problems be solved faster in the quantum-LOCAL model? This was explicitly mentioned as an open question already in \cite{gavoille-kosowski-markiewicz-2009-what-can-be-observed}, and it has been recently highlighted as one of the biggest open questions in this area by e.g.\ \cite{coiteux-roy-d-amore-etal-2024-no-distributed-quantum, le-gall-2025-recent-developments-in-quantum-distributed,brandt-gottlicher-2026-a-post-quantum-lower-bound-for,coiteux-roy-flin-etal-2026-distributed-quantum}.

So far the evidence has been pointing in the direction that coloring \emph{does not} admit quantum advantage; there are several negative results that exclude e.g.\ very fast quantum algorithms \cite{gavoille-kachigar-zemor-2019-localisation-resistant,le-gall-rosmanis-2022-non-trivial-lower-bound-for-3,gur-li-2026-impossibility-of-one-way-one-round-quantum} or fast quantum algorithms that succeed with probability $1$ in anonymous networks \cite{coiteux-roy-flin-etal-2026-distributed-quantum}. Hence it has been reasonable to conjecture that if there are any natural problems that admit quantum advantage, coloring is most likely not one of them.

\subsection{Our Results}
\label{sec:our-results}

The single result that does all the heavy lifting is, informally, this:
\begin{result}\label{res:main-epsilon}
    For every $\epsilon > 0$, there is a $1$-round one-way quantum-LOCAL algorithm that colors directed cycles with $2\cdot10^4$ colors so that for each edge the local failure probability is at most~$\epsilon$.
\end{result}
Notably, the number of qubits used by the algorithm depends on $\epsilon$, but the round complexity is independent of $\epsilon$. In particular, if we assume that the number of nodes $n$ is known (as is commonly assumed in the context of the LOCAL model), we can plug in $\epsilon = 1/n^{10}$ and apply the union bound to obtain:
\begin{result}\label{res:main-whp}
    There is a $1$-round one-way quantum-LOCAL algorithm that colors directed cycles with $2\cdot10^4$ colors with high probability (w.h.p.).
\end{result}
The number of colors is rather large, but the standard color-reduction techniques \cite{cole-vishkin-1986-deterministic-coin-tossing-with,naor-stockmeyer-1995-what-can-be-computed-locally,kohonen-korhonen-etal-2017-distributed-colour-reduction} can be used to reduce the number of colors from $k$ to $O(\log\log k)$ in one round; for example, we obtain:
\begin{result}\label{res:3color}
    There is a $4$-round quantum-LOCAL algorithm that $3$-colors directed cycles w.h.p.
\end{result}
As soon as we have this ingredient, we can combine it with the usual classical graph coloring techniques \cite{goldberg-plotkin-shannon-1988-parallel-symmetry,panconesi-rizzi-2001-some-simple-distributed-algorithms,barenboim-elkin-2013-distributed-graph-coloring} to obtain a constant-round $(\Delta+1)$-coloring algorithm when $\Delta = O(1)$. Notably, it immediately implies fast quantum-LOCAL algorithms for many other problems in bounded-degree graphs, including the following:
\begin{result}\label{res:color}
    There are $O_\Delta(1)$-round quantum-LOCAL algorithms that compute
    \begin{itemize}
    \item a vertex coloring with ${\Delta+1}$ colors,
    \item an edge coloring with ${2\Delta-1}$ colors,
    \item a maximal independent set, and
    \item a maximal matching
    \end{itemize} 
    with high probability in graphs of maximum degree $\Delta = O(1)$.
\end{result}
All of the above problems are known to require $\Theta(\log^* n)$ rounds in the classical LOCAL model, with or without randomness \cite{linial-1992-locality-in-distributed-graph-algorithms,naor-1991-a-lower-bound-on-probabilistic-algorithms-for}, and these problems are arguably natural; they have been studied especially in the field of distributed algorithms already since the 1990s. We have hence identified the first \emph{natural} graph problems that provably admit a quantum advantage in the LOCAL model.

\subsection{Corollaries}
\label{sec:into-cor}

\paragraph{LCL Problems.}
There are more far-reaching consequences, beyond some isolated problems. Naor and Stockmeyer \cite{naor-stockmeyer-1995-what-can-be-computed-locally} introduced in their seminal Dijkstra-prize-winning work the notion of \emph{locally checkable labeling problems}, or LCLs for short. These are graph problems that can be specified by listing a \emph{finite set} of valid labelings in local neighborhoods; for any constant maximum degree $\Delta$, the above-mentioned problems such as ${(\Delta+1)}$-coloring, maximal independent set, and maximal matching are examples of LCL problems. LCLs have become the most widely studied problem family in the context of the LOCAL model, and thanks to e.g.\ \cite{
    cole-vishkin-1986-deterministic-coin-tossing-with,
    naor-1991-a-lower-bound-on-probabilistic-algorithms-for,
    linial-1992-locality-in-distributed-graph-algorithms,
    naor-stockmeyer-1995-what-can-be-computed-locally,
    brandt-fischer-etal-2016-a-lower-bound-for-the,
    fischer-ghaffari-2017-sublogarithmic-distributed,
    ghaffari-harris-kuhn-2018-on-derandomizing-local,
    balliu-hirvonen-etal-2018-new-classes-of-distributed,
    chang-pettie-2019-a-time-hierarchy-theorem-for-the,
    chang-kopelowitz-pettie-2019-an-exponential-separation,
    rozhon-ghaffari-2020-polylogarithmic-time-deterministic,
    balliu-brandt-etal-2020-how-much-does-randomness-help,
    balliu-brandt-etal-2021-almost-global-problems-in-the,
    chang-2020-the-complexity-landscape-of-distributed,
    grunau-rozhon-brandt-2022-the-landscape-of-distributed
}, we now have a good understanding of the \emph{complexity landscape} of LCL problems in the LOCAL model \cite{suomela-2020-landscape-of-locality-invited-talk}. What is interesting for us is the lower end of the complexity spectrum. First, there are \emph{gaps}: if we can solve some LCL problem with round complexity $o(\log n)$ in the deterministic LOCAL model or with round complexity $o(\log \log n)$ in the randomized LOCAL model (w.h.p.), we can also solve the same problem in $O(\log^* n)$ rounds in the LOCAL model \cite{chang-kopelowitz-pettie-2019-an-exponential-separation}. Second, the family of problems with round complexity $O(\log^* n)$ is a particularly important family of problems; there are numerous \emph{symmetry-breaking} problems for which $\Theta(\log^* n)$ is the tight round complexity.

Remarkably, Chang, Kopelowitz and Pettie \cite{chang-kopelowitz-pettie-2019-an-exponential-separation} showed that for every LCL $\Pi$ with classical complexity $O(\log^* n)$, there is a constant $k$ such that if we have an oracle that finds a distance-$k$ coloring of the input graph (with a reasonable constant number of colors), then one invocation of the oracle suffices to solve $\Pi$ in $O(1)$ rounds with classical algorithms. But if we now apply \cref{res:color} to the $k$th power of the input graph, we have a quantum algorithm that provides exactly this oracle. It follows that:
\begin{result}\label{res:lcls}
    Any LCL problem that can be solved in $O(\log^* n)$ rounds in the classical LOCAL model can be solved in $O(1)$ rounds in the quantum-LOCAL model, w.h.p.
\end{result}

The landscape of LCL problems has been studied across a wide range of models of distributed computing, ranging from models weaker than quantum-LOCAL (such as classical deterministic LOCAL) to models stronger than quantum-LOCAL (such as the randomized online-LOCAL model), and one of the main gaps in our understanding has been what happens in the $O(\log^* n)$ region in quantum-LOCAL. This is now fully resolved by \cref{res:lcls}. To give some examples of corollaries, combining our work with \cite{akbari-coiteux-roy-etal-2025-online-locality-meets} implies:
\begin{result}\label{res:lcl-classes-rooted-trees}
    In rooted trees, every LCL problem has round complexity $O(1)$ or $\Omega(\log \log \log n)$ in the quantum-LOCAL model.
\end{result}
In particular, the case of unlabeled rooted regular trees is now fully understood; combining our work with \cite{dhar-kujawa-etal-2024-local-problems-in-trees-across-a} implies:
\begin{result}\label{res:lcl-classes-rooted-regular-trees}
    In unlabeled rooted regular trees, each solvable LCL falls in one of these classes:
    \begin{enumerate}
        \item The round complexity is $O(1)$ in all models from classical LOCAL to online-LOCAL.
        \item The round complexity is $\Theta(\log^* n)$ in classical LOCAL but $O(1)$ in all models at least as strong as quantum-LOCAL or SLOCAL.
        \item The round complexity is $\Theta(\log n)$ in all models from classical LOCAL to online-LOCAL.
        \item The round complexity is $\Theta(n^{1/k})$ for some positive integer $k$ in all models from classical LOCAL to online-LOCAL.
    \end{enumerate}
\end{result}
In particular, this gives the first known separation between classical LOCAL and quantum-LOCAL in the setting of LCLs in rooted regular trees, and it also shows that quantum-LOCAL is as strong as online-LOCAL in this setting.

\paragraph{Beyond LCL Problems.}
Even beyond LCL problems, the theory of distributed graph algorithms is full of results where the round complexity of an algorithm has an additive term $O(\log^* n)$; almost always it comes from the need to break symmetry, and in many cases our work shows that it disappears in the quantum-LOCAL model. For example, the recent work \cite{boudier-kuhn-etal-2026-classification-of-local} classified local \emph{optimization} problems in the case of directed cycles, and identified four distinct complexity classes in the classical setting. Notably one of their classes consists of problems for which the round complexity is $\Theta(\log^* n)$ in both deterministic and randomized classical models and the complexity of all these problems in quantum-LOCAL was open; the same class now admits $O(1)$-round quantum algorithms. To give some concrete examples of problems, finding a good approximation of a minimum independent dominating set (minimum maximal independent set) or a good approximation of a minimum maximal matching in cycles takes $\Theta(\log^* n)$ rounds in the deterministic and randomized LOCAL models, but now we can solve them in $O(1)$ rounds in the quantum-LOCAL model.

The following result is particularly helpful when navigating between models of distributed computing; this connects the well-understood SLOCAL model \cite{ghaffari-kuhn-maus-2017-on-the-complexity-of-local} with the quantum-LOCAL model:
\begin{result}\label{res:slocal}
    In bounded-degree graphs, any graph problem that can be solved with $O(1)$ locality in the SLOCAL model can be solved in $O(1)$ rounds in the quantum-LOCAL model, w.h.p.
\end{result}

\paragraph{Beyond Bounded Degrees.}
So far we have discussed the case of $\Delta=O(1)$, but the implications go beyond the bounded-degree case. One prominent example is the task of $3$-coloring rooted trees of arbitrary maximum degree $\Delta$. In the classical LOCAL model, it is well-known that the complexity is $\Theta(\log^* n)$ rounds, independent of $\Delta$ \cite{goldberg-plotkin-shannon-1988-parallel-symmetry,linial-1992-locality-in-distributed-graph-algorithms}. Very recently, \cite{fraigniaud-magniez-ziccardi-2026-no-distributed-quantum} proved a lower bound of $\Omega(\log^* \Delta)$ in the quantum-LOCAL model. Our work implies that this is tight. To see this, we can split the tree into up to $\Delta$ collections of paths, $3$-color them in parallel with \cref{res:3color}, and this way obtain a coloring of the tree with $3^{\Delta}$ colors, and then apply fast color reduction (for simplicity, we assume here that both $n$ and $\Delta$ are known):
\begin{result}\label{res:3col-rooted-tree}
    There is a quantum-LOCAL algorithm that $3$-colors rooted trees of maximum degree $\Delta$ in $O(\log^* \Delta)$ rounds w.h.p., and this is tight.
\end{result}

\subsection{Open Questions}

In this work, we present a quantum algorithm for a very limited setting---coloring of directed cycles---and all corollaries in \crefrange{res:main-whp}{res:3col-rooted-tree} follow through classical reductions and classical post-processing. This puts limits on how far we can get with present techniques. For example, if the classical algorithm for maximal matching takes $O(\Delta + \log^* n)$ rounds in classical LOCAL \cite{panconesi-rizzi-2001-some-simple-distributed-algorithms,balliu-brandt-etal-2021-lower-bounds-for-maximal}, our approach can push it down to $O(\Delta)$ rounds in quantum-LOCAL, but not beyond that. Fundamentally different techniques are needed to improve the round complexity as a function of $\Delta$. In particular, these questions are wide open:
\begin{problem}
    Can we find a ${(\Delta+1)}$-coloring in $o(\sqrt{\Delta})$ rounds in quantum-LOCAL?
\end{problem}
\begin{problem}
    Can we find a maximal matching or a maximal independent set in $o(\Delta)$ rounds in quantum-LOCAL? The only lower bound is $\Omega(\log \Delta / \log \log \Delta)$ from \cite{kuhn-moscibroda-wattenhofer-2016-local-computation,balliu-coupette-etal-2025-new-limits-on-distributed}.
\end{problem}

The number of colors in \cref{res:main-epsilon} is large. By \cite{gur-li-2026-impossibility-of-one-way-one-round-quantum} we know that it has to be more than $4$, but the precise value is unknown:
\begin{problem}
    What is the smallest $c$ such that for any $\epsilon > 0$ we can find a $c$-coloring in directed cycles with one-way one-round quantum algorithms so that for each edge the local failure probability is at most~$\epsilon$?
\end{problem}

\subsection{Finitely Dependent Colorings}

Our work is directly connected with \emph{finitely dependent colorings} \cite{holroyd-liggett-2016-finitely-dependent-coloring,holroyd-hutchcroft-levy-2018-finitely-dependent-cycle,holroyd-hutchcroft-levy-2020-mallows-permutations-and}. We say that a distribution of colorings in a directed cycle or directed infinite path is \emph{$k$-dependent} if, for any two sets of nodes $X$ and $Y$ that are at distance more than $k$ from each other, the color assignment of nodes in $X$ is independent of the color assignment of nodes in $Y$. Moreover, we want the process to be invariant under shifts.

There is a direct connection between quantum algorithms and finitely dependent colorings \cite{akbari-coiteux-roy-etal-2025-online-locality-meets}: if a one-way one-round quantum algorithm finds a $q$-coloring in anonymous directed cycles with probability $1$, then its output is also a $1$-dependent $q$-coloring. We do not aim at probability $1$ here, but we nevertheless first define a $1$-dependent distribution of $q$-colorings, and then use that as a starting point for deriving our quantum algorithm that colors with probability $1-\epsilon$.

$1$-dependent $q$-colorings for any $q \ge 4$ are known \cite{holroyd-liggett-2016-finitely-dependent-coloring,holroyd-hutchcroft-levy-2018-finitely-dependent-cycle,holroyd-hutchcroft-levy-2020-mallows-permutations-and}. However, we suspect that the $1$-dependent $q$-coloring presented in this work is fundamentally different from the families presented in prior work; this leads to the final open question we want to highlight here:
\begin{problem}
    Prove or disprove: our $1$-dependent coloring and the family of $1$-dependent colorings from prior work \cite{holroyd-liggett-2016-finitely-dependent-coloring,holroyd-hutchcroft-levy-2018-finitely-dependent-cycle,holroyd-hutchcroft-levy-2020-mallows-permutations-and} are fundamentally different in the sense that there is no constant-locality randomized reduction between them in either direction.
\end{problem}

\subsection{Structure of This Work}

In \cref{sec:intro-techniques}, we give a high-level overview of our construction.
It has three parts: 
\begin{enumerate}
    \item constructing a 1-dependent distribution of proper $q$-colorings,
    \item vectorizing this construction and approximating it with a square to ensure positivity,
    \item constructing finite POVMs that approximate this vectorized construction.
\end{enumerate}
The steps of this construction respectively are described in detail in \cref{sec:fin-dep,sec:vectorization,sec:finite-dim}.
The definition of \qLOCAL along with some mathematical preliminaries is given in \cref{sec:prelims}.
The proof of \cref{res:main-epsilon} is given in \cref{sec:finite-dim} under \cref{cor:one-round-coloring}. The proofs of \crefrange{res:main-whp}{res:3col-rooted-tree} can be found in \cref{sec:corollaries}.

\section{Technical Overview}
\label{sec:intro-techniques}

We construct first a 1-dependent distribution over proper $q$-colorings; then, we use this distribution to build a finite one-way one-round \qLOCAL algorithm
based on the energy-based framework introduced by Gur and Li \cite{gur-li-2026-impossibility-of-one-way-one-round-quantum}.
For any $\eps > 0$, the number of qubits used by the \qLOCAL algorithm can be chosen large enough to ensure that an edge is monochromatic with probability at most $\eps$.

\subsection{Infinite-Dimensional Construction}

\paragraph{Step 1: The 1-Dependent Distribution.}
We consider the operators of $\mathrm{L}(H)$ where $H = \ell_2(G)$ for some group $G$ (described later). We write $\ket{g}\in H$ for the function sending $g$ to one and every other element of $G$ to zero. A key operator is $E = \ketbra{e}{e}$, where $e$ is the neutral element of $G$. For each color $a$, we seek to construct an operator $T_a$ such that
\begin{enumerate}[label=(T\arabic*),leftmargin=1.5cm]
    \item\label[part]{intro-T1} $T_a^2 = 0$ for all $a$,
    \item\label[part]{intro-T2} $\sum_{a = 1}^q T_a = E$, and
    \item\label[part]{intro-T3} $\braUket{e}{T_{a_1} T_{a_2} \ldots T_{a_k}}{e} \geq 0$ for every finite word $w = a_1 a_2 \ldots a_k$.
\end{enumerate}
From such operators, we can define the probability of getting a word $w = w_1 \ldots w_k$ as $p(w) = \braUket{e}{T_w}{e}$ where $T_w = T_{w_1} \cdots T_{w_k}$. \ref{intro-T1} implies that the coloring is proper with probability one. The 1-dependence follows from \ref{intro-T2} as 
\[ 
p(w?w') = \sum_a p(waw') = \braUket{e}{T_w E T_{w'}}{e} = \braUket{e}{T_w}{e} \braUket{e}{T_{w'}}{e} = p(w)p(w')
\] 
by definition of $E$. It is far from obvious that such operators should exist; in fact one can easily verify that finite matrices cannot satisfy \ref{intro-T1} and \ref{intro-T2} simultaneously.

To construct those operators, we first introduce \emph{positive} and \emph{negative} colors. Indeed, a color will be a \emph{signed} generator of $G$, i.e., either $+g$ or $-g$. For each color $a$, we define a projection $P_{a}$ such that $P_{a}P_{-a} = 0$. Then we search for some operator $V$ and define $T_a = P_{+a} V P_{-a}$. We obtain \ref{intro-T1} by definition. Finding the correct $V$ directly is challenging; instead we construct a sequence $V_1, V_2, \ldots$ such that $V = \lim_{n\to\infty} V_n$ defines our finitely dependent distribution. The sequence is produced by this simple iterative process:
\[
V_{n+1} = V_n + (2/s)\paren*{ E - \sum_a P_{+a} V_n P_{-a} }
\]
so that at the limit we are guaranteed that $\sum_a P_{+a} V P_{-a} = E$ and thus obtain \ref{intro-T2}. To prove the convergence, we argue that $\norm{E - \sum_a P_{+a} V_n P_{-a}}_F$ decreases geometrically. This geometric decrease also implies that all $V_n$ are close to $V_1 = (2/s)E$, which is very positive. By leveraging this observation, one can show \ref{intro-T3}.

\paragraph{Step 2: Vectorization and Positivity.}
To realize the 1-dependent construction as a quantum algorithm, we need to
translate the operators $T_a$ into the elements of a bipartite POVM. This
requires addressing two obstacles. First, the $T_a$ act
on an infinite-dimensional space $H= \ell_2(G)$, 
whereas the desired POVM acts on a finite-dimensional bipartite space
$\mathbb{C}^N \otimes \mathbb{C}^N$, so we need an algebraic reformulation
compatible with this structure. Second, condition
\ref{intro-T3} guarantees nonnegative probabilities for finite words, but does
not directly ensure that the resulting measurement operators are positive
semidefinite, as required for a POVM.

We therefore first vectorize the operators $T_a$. In the finite
construction, we choose $V_n$ such that $\norm{E - \sum_a P_{+a} V_n P_{-a}}_F$ is
sufficiently small. Since $V_n$ is obtained from $V_0$ after only $n$ iterations
of the recurrence, it has only finitely many nonzero coefficients in the basis
$\ketbra{g}{h}$. We can thus interpret $V_n$ as a polynomial
$V_n'$ in $\bC[G \times G]$, whose indeterminates are pairs of elements of $G$, denoted by $[g,h]$.
The multiplicative structure is not preserved between the two settings:
the first uses matrix multiplication $\ket{g}\bra{h} \cdot \ket{g'}\bra{h'} =
1_{h=g'}\ket{g}\bra{h'}$, whereas the second uses group multiplication
$[g,h] \cdot [g',h'] = [gg',hh']$. Under this vectorization, the operators
$P_{+a} V_n P_{-a}$ become polynomials $K_a V_n' K_a$, which we will later
evaluate in a finite-dimensional unitary representation of $G \times G$
as part of the POVM construction.

Second, to ensure positivity, we replace $V_n'$ by $X^2\approx V_n'$, where
$X$ is some self-adjoint polynomial. Since $K_a$ is also self-adjoint,
\(K_a X^2 K_a = (X K_a)^\dagger (X K_a),\)
so evaluation under any finite-dimensional unitary representation gives a
positive semidefinite matrix. To construct $X$, we use the fact that
$(s/2)V_n' = I + R$ where $R$ has small norm (by the way $V_n$ was chosen).
Hence we can use a Taylor approximation
\[
\sqrt{\frac{s}{2}} X = I + \frac{1}{2}R - \frac{1}{8}R^2 + \frac{1}{16}R^3 - \ldots \approx \sqrt{I + R},
\]
so that $X^2 \approx V_n'$, and then we are ready to approximate $K_a X^2 K_a$ using finite matrices.

\paragraph{Number of Colors.}
We choose $s=10^4$, which ensures convergence of both
$\{V_n\}$ in Step 1 and the Taylor series in Step 2.
Thus, $\sum_a K_a X^2 K_a$ can be made arbitrarily close to the identity
by taking the number of iterations $n$ and the number of Taylor-series terms sufficiently large.
We did not attempt to optimize the constant $10^4$, which can be reduced
with a more careful analysis.
Since each of the $s$ generators gives two signed colors, the total number
of colors is $q=2s=2\cdot 10^4$.

\subsection{Finite-Dimensional Quantum Realization}

\paragraph{Step 3: The Quantum-LOCAL Algorithm.}
The key ingredient in realizing the above infinite construction as a quantum algorithm is the
energy-based characterization of one-way one-round quantum algorithms introduced
in \cite{gur-li-2026-impossibility-of-one-way-one-round-quantum}.
This characterization reduces the problem to constructing a finite-dimensional
POVM with a fixed number of outcomes and arbitrarily small normalized total energy.

Concretely, 
we consider a one-way one-round quantum algorithm
with $q=2s=2\cdot 10^4$ colors, 
where each node $v_i$ acts as follows.
\begin{enumerate}
\item Prepare two quantum registers $\mathsf{A}_i$ and $\mathsf{B}_i$ in the
maximally entangled state
\[
\frac{1}{\sqrt{N}}\sum_{j=1}^N\ket{j}_{\mathsf{A}_i}\otimes\ket{j}_{\mathsf{B}_i}.
\]
\item Send $\mathsf{B}_i$ to its successor $v_{i+1}$, retain $\mathsf{A}_i$,
and receive $\mathsf{B}_{i-1}$ from its predecessor $v_{i-1}$.
\item Apply a $q$-outcome POVM $\{M_a\}$, identical for all nodes, to
$\mathsf{B}_{i-1}\otimes\mathsf{A}_i$ and output the measurement outcome as its
color $c_i\in[q]$.
\end{enumerate}

The problem reduces to,
for any given $\epsilon>0$, choosing a finite dimension $N$ and a POVM $\{M_a\}$ such that
$\Pr[c_i=c_{i+1}]\leq\epsilon$.
We construct the POVM by evaluating the polynomials from Step 2 at finite
matrices. Suppose that we have a sequence of unitary representations
$g\mapsto U_g^{(N)}\in\mathbb{C}^{N\times N}$ satisfying
\emph{asymptotic orthogonality}: for every fixed $g,h\in G$,
\[
\frac1N\Tr\bigl((U_g^{(N)})^\dagger U_h^{(N)}\bigr)
\longrightarrow 1_{g=h}
\quad\text{as}\quad N\to\infty.
\]

We evaluate polynomials in $\bC[G\times G]$ by the map
\[
Z=\sum_{g,h}Z(g,h)[g,h]
\quad\longmapsto\quad
Z^{(N)}:=\sum_{g,h}Z(g,h)
\bigl(U_g^{(N)}\otimes\overline{U_h^{(N)}}\bigr).
\]

This map 
is a unital $*$-homomorphism, which
preserves the algebraic structure of $\C[G\times G]$.
Asymptotic orthogonality also implies that, for every fixed polynomial $Z$,
\begin{equation}\label{eq:norm-conv}
\frac1N\|Z^{(N)}\|_F\longrightarrow\|Z\|_F
\quad\text{as}\quad N\to\infty.
\end{equation}

Applying this evaluation to the polynomials 
$K_a X^2 K_a$
from Step 2 gives $q=2s$ PSD operators
on $\mathbb{C}^N\otimes\mathbb{C}^N$,
\(
\tilde{M}_a:=K_a^{(N)}(X^{(N)})^2K_a^{(N)},
\)
for $a\in\{\pm 1,\ldots,\pm s\}$. 
Moreover, 
since $\sum_a K_aX^2K_a$ is close to the group-algebra identity $[e,e]$
in Frobenius norm,
the convergence in \eqref{eq:norm-conv} implies that 
its evaluation
$T:=\sum_a\tilde{M}_a$ is therefore close to $I_{N^2}$ in normalized Frobenius
norm, when $N$ is sufficiently large.

The operators $\{\tilde{M}_a\}$ are not yet a POVM
since their sum $T$ is not exactly the identity.
To obtain a valid POVM, we normalize these matrices as follows.
If $T$ is invertible, set
\[
M_a:=T^{-1/2}\tilde{M}_aT^{-1/2},
\qquad\text{so that}\qquad
\sum_a M_a=I_{N^2}.
\]
If $T$ is singular, replace $T^{-1/2}$ by $(T^+)^{1/2}$, where $T^+$ is the
pseudoinverse, and add the residual projector $\Pi_{\ker(T)}$ to any one outcome.
Relabeling the outcomes gives 
the POVM $\{M_a\}_{a\in[q]}$ with $q=2s$ outcomes, 
used in the quantum algorithm above.

\paragraph{Error Analysis.}
To bound the local failure probability, 
the energy-based characterization in \cite{gur-li-2026-impossibility-of-one-way-one-round-quantum}
established that 
\[
\Pr[c_i=c_{i+1}]=\frac1{N^3}\sum_{a\in[q]}\Es(M_a),
\qquad
\Es(M):=\Tr\bigl((\Tr_\As M)^\top\Tr_\Bs M\bigr),
\]
where $\Tr_\As$ and $\Tr_\Bs$ are partial traces on 
the first and second registers, respectively.
The quantity $\Es(M)$ is called the \emph{energy} of $M$ and measures the overlap of its
two one-register marginals. 
Thus it suffices to show the POVM has small total energy. 
A PSD operator $M$ has zero energy precisely when
its support lies in a subspace of the form
\[
\mathcal{V}\otimes\overline{\mathcal{V}}^{\perp}
\cong\mathrm{L}(\mathcal{V}^{\perp},\mathcal{V}).
\]
Under the (inverse) vectorization map $\ket{i}\otimes\ket{j}\mapsto\ketbra{i}{j}$, this is a
space of off-diagonal matrices: its matrices map
$\mathcal{V}^{\perp}$ into $\mathcal{V}$ and vanish on $\mathcal{V}$.

For a signed color $a=\pm j$, the matrix $K_a^{(N)}$ is the orthogonal projector
onto $\mathcal{V}_a\otimes\overline{\mathcal{V}_a}^{\perp}$, where
$\mathcal{V}_a$ is the corresponding $\pm1$ eigenspace of $U_{g_j}^{(N)}$.
Then $\tilde{M}_a=K_a^{(N)}(X^{(N)})^2K_a^{(N)}$
is supported on this subspace, so
$\Es(\tilde{M}_a)=0$. Thus the total energy is zero before normalization.
Since $T$ is already close to the identity in normalized Frobenius norm,
normalization introduces only a small total energy
(see \cref{lem:normalize-povm}). By choosing the polynomial approximation in Step 2
sufficiently accurate and then taking $N$ sufficiently large, we can make
$\Pr[c_i=c_{i+1}]$ arbitrarily small,
while keeping the number of colors $q=2\cdot 10^4$ fixed.

\paragraph{Constructing the Unitary Representations.}
We construct the required representations using the asymptotic freeness theorem
for random matrices \cite{collins-sniady-2006-integration-with-respect-to-the-haar}.
It suffices to consider even dimensions $N=2d$.
For each $d$, let $O_1^{(d)},\ldots,O_s^{(d)}\in\mathrm{O}(d)$ be independent
Haar-random orthogonal matrices, and assign to each generator the matrix
\[
g_i\quad\longmapsto\quad U_{g_i}^{(N)}:=
\begin{pmatrix}
0 & (O_i^{(d)})^\top\\
O_i^{(d)} & 0
\end{pmatrix},
\qquad i=1,\ldots,s.
\]

These matrices are self-adjoint unitaries satisfying $(U_{g_i}^{(N)})^2=I_N$,
so the assignment extends to a unitary representation $g\mapsto U_g^{(N)}$
of $G$. Since $(U_g^{(N)})^\dagger U_h^{(N)}=U_{g^{-1}h}^{(N)}$, asymptotic
orthogonality follows from the claim that, for all $g\in G$,
\[
\frac1N\Tr(U_g^{(N)})\longrightarrow 1_{g=e}
\quad\text{as}\quad N\to\infty.
\]

This claim is immediate for $g=e$. For $g\ne e$, write
$g=g_{i_1}\cdots g_{i_k}$ in reduced form, so that $i_j\ne i_{j+1}$.
If $k$ is odd, $U_g^{(N)}$ is block off-diagonal and has trace zero.
If $k$ is even, its two diagonal blocks are products of the $O_i^{(d)}$ and
their transposes, with distinct adjacent indices. Asymptotic freeness implies
that the normalized trace of each block converges to zero almost surely.
Since $G$ is countable, these limits hold simultaneously for all $g\in G$ 
with probability one. Fixing a realization in this event gives the
required deterministic sequence of representations.

We remark that the entire POVM construction uses only real matrices,
as the polynomials in Step 2 have real coefficients and 
the group representations in Step 3 use real orthogonal matrices. 
Thus the resulting algorithm can be implemented with real-valued 
quantum computation.

\section{Preliminaries}
\label{sec:prelims}

\paragraph{Notation.}
We use $\bZ$, $\bN$, $\bR$, and $\bC$ to denote the integers, the integers greater than zero, the reals and the complex numbers respectively. For $n \geq 1$ an integer, we write $[n] = \set{1, 2, \ldots, n}$. 

For a Hilbert space $H$, we use $\ket{x}$ to denote vectors of $H$ and $\bra{x}$ for
the corresponding covector. 
Let $\mathrm{L}(H)$ denote the bounded operators on $H$.
Given $X \in \mathrm{L}(H)$, we write 
$X(e_i, e_j) := \bra{ e_i } X \ket{ e_j }$ for the
coefficients of $X\in \mathrm{L}(H)$, where $\{\ket{ e_i }\}_i$ is an orthonormal basis of $H$; we say $X$ is Hilbert-Schmidt if
\[
\norm{X}_F := \sqrt{\sum_{i,j} |X(e_i, e_j)|^2 } < \infty,
\]
where $\norm{X}_F$ is called the Frobenius norm of $X$.

Given two finite-dimensional linear spaces $S,T\subseteq \mathbb{C}^N$, we
denote by $\mathrm{L}(S,T)$ the space of linear operators from $S$ to $T$, set
$\mathrm{L}(S) := \mathrm{L}(S,S)$, denote by $S^\perp$ the orthogonal
complement of $S$, and denote by $\Pi_S$ the orthogonal projector onto $S$. 
Given a matrix $X\in\mathbb{C}^{M\times N}$, we denote by 
$\bar{X}$ its conjugate, $X^\top$ its transpose, and $X^\dagger := \bar{X}^\top$ its conjugate transpose.

We refer readers to \cite{davidson1996c} for background on Hilbert spaces and group algebras,
and to \cite{watrous-2018-the-theory-of-quantum-information} for the operator formalism of quantum information.

\paragraph{Models of Distributed Computing.}

Our key results are positive results in the quantum-LOCAL model, but e.g.\ to understand \cref{res:lcls} we also need to explain what we mean by the classical LOCAL model. One slightly annoying detail is that there is no single canonical definition of the LOCAL model, but fortunately for our purposes most of the details are not important. The only crucial assumption that we need to make is that the nodes know the value of $n$, which we will use to denote an \emph{upper bound} on the size of the network. For concreteness, we will give here one definition of the classical deterministic LOCAL model, and one definition of the quantum-LOCAL model. Our results can be directly generalized to other model definitions, but naturally if we e.g.\ strengthen the classical model we need to strengthen the quantum model in an analogous manner if we want to prove a result analogous to \cref{res:lcls}.

\paragraph{Classical Deterministic LOCAL Model.}

When we briefly refer to the LOCAL model in this work, we specifically mean the classical \emph{deterministic} LOCAL model \cite{linial-1992-locality-in-distributed-graph-algorithms,peleg-2000-distributed-computing-a-locality-sensitive}. This is a message-passing model for computer networks. In this model an algorithm $A$ describes computation from the perspective of a single node: how the node chooses its initial state, how it constructs messages that it sends to its neighbors, how it updates its local state based on the messages it receives, and how it determines its local output.

Computation proceeds as follows. The adversary first chooses an input graph $G = (V,E)$ and a value $n$ such that $n \ge |V|$. The adversary also chooses an assignment of \emph{unique identifiers}; there is a known constant $C$ such that each node is labeled with a distinct value from $\{1,2,\dotsc,n^C\}$. Then we apply algorithm $A$ to choose the initial states of all nodes; the initial state of a node $v$ is a function of the degree of $v$, its unique identifier, the value $n$, and possible local inputs associated with node $v$, if any.

After initialization, computation proceeds in synchronous rounds. If a node $v$ has degree $d$, then it first uses algorithm $A$ to construct a $d$-element tuple of outgoing messages, one per neighbor (ordered by \emph{port numbers}), and it then receives a $d$-element tuple of incoming messages (again ordered in the same consistent manner), and finally it uses algorithm $A$ to update its own state based on the messages it received. We will here assume that the set of possible states and the set of possible messages are finite, but the size of the set can depend on~$n$.

After some number of rounds $T(n)$, each node stops and announces its \emph{local output}, which is a function of its final state. Now we say that algorithm $A$ solves a graph problem $\Pi$ in $T(n)$ rounds if in any input graph and for any choice of unique identifiers and for any valid choice of $n$, the algorithm announces the local outputs after $T(n)$ rounds, and these outputs form a valid solution to $\Pi$. For example, if the task at hand is graph coloring, then the local output is the color of the node.

\paragraph{Quantum-LOCAL Model.}

The \emph{quantum-LOCAL model} \cite{gavoille-kosowski-markiewicz-2009-what-can-be-observed,arfaoui-fraigniaud-2014-what-can-be-computed-without} is, in brief, the natural quantum analog of the classical LOCAL model. Each node holds a finite number of finite-dimensional quantum registers (the number of registers and their dimensions can depend on~$n$). The algorithm $A$ determines how each node locally initializes its own registers. The whole network is initially in a product state; we emphasize that the nodes do \emph{not} share any entanglement before computation starts.

For communication, nodes can pass some quantum registers that they hold to their neighbors, and after receiving registers from their neighbors, they can perform arbitrary quantum operations. Finally, after $T(n)$ rounds, each node produces its classical local output as a result of measurement.

We will focus on Monte Carlo algorithms: we say that $A$ solves problem $\Pi$ in $T(n)$ rounds with probability $p$ if the following holds: all nodes stop and announce their local outputs after $T(n)$ rounds, and the probability that the local outputs form a valid solution of $\Pi$ is at least $p$. Usually we are interested in algorithms that are correct \emph{with high probability}, that is, we can choose any constant $C$ and achieve global success probability $p = 1-1/n^C$.

\paragraph{One-Way One-Round Anonymous Quantum Algorithms in Cycles.} Consider a
directed cycle with $n$ nodes $(v_i)_{i\in\mathbb{Z}_n}$. In a \emph{one-way
one-round anonymous quantum algorithm}, all nodes are initially identical and
execute the same algorithm. In the single communication round, each node $v_i$
sends a quantum message to its successor $v_{i+1}$ and receives one from its
predecessor $v_{i-1}$. The local computation and message length are unbounded.
In the high-probability setting, these algorithms are equivalent to one-way
one-round quantum-LOCAL algorithms, since each node can independently sample an
identifier uniformly from a sufficiently large set $[n^C]$ such that the
probability of the sampled identifiers being pairwise distinct is at least
$1-1/n^{C-2}$. 

\paragraph{Other models.}

We will give only brief, informal descriptions of other models that we mention in this work and cite the relevant papers for more detailed discussion.

In the \emph{deterministic SLOCAL model} \cite{ghaffari-kuhn-maus-2017-on-the-complexity-of-local}, computation proceeds in a sequential manner. The adversary chooses both the input graph $G=(V,E)$ and a permutation of nodes $\pi$. The adversary presents nodes one by one, following $\pi$. When a node $v$ is presented, the algorithm gets to see the radius-$T(n)$ neighborhood of $v$, including any previously assigned labels there, and using only this information, the algorithm has to produce the label of $v$, where the label is a pair consisting of the local output of $v$ and whatever local state the algorithm wants to store for itself.

In the \emph{deterministic online-LOCAL model} \cite{akbari-eslami-etal-2023-locality-in-online-dynamic}, the algorithm has also access to a global memory. Put otherwise, the algorithm sees at each point all nodes presented so far, and their radius-$T(n)$ neighborhoods. This is a generalization of the usual setting of online graph algorithms.

In the \emph{randomized online-LOCAL model} \cite{akbari-coiteux-roy-etal-2025-online-locality-meets} the algorithm has access to a source of random bits. We emphasize that the adversary is oblivious, i.e., it fixes the input and the permutation before the algorithm starts to flip coins. What is good to know is that randomized online-LOCAL is strong enough to simulate quantum-LOCAL \cite{akbari-coiteux-roy-etal-2025-online-locality-meets}.

\section{Infinite-Dimensional Construction}
\label{sec:inf}

In this section, we construct a 1-dependent distribution over the proper $q$-colorings of $\bZ$. The essential outcome of this section is not this distribution itself but rather its vectorized version. The key lemma that we prove here is the following (see \cref{def:group-alg} for a definition of $\bC[G \times G]$).

\begin{tbox}
\begin{lemma}
    \label{lem:W}
    For $s \geq 10^4$ and $n \geq 1$, there exists a self-adjoint $W \in \bC[G \times G]$ such that
    \[
    \norm*{[e,e] - \sum_{i} \left(K_{+i} W^2 K_{+i} + K_{-i} W^2 K_{-i}\right)}_F
    \leq (2s+1)\lambda^n \ ,
    \]
    where $K_{\pm i} = ([e,e] \pm [i,e])([e,e] \mp [e,i])/4$ is self-adjoint for all $i \in [s]$ and $\lambda$ is a real number smaller than one.
\end{lemma}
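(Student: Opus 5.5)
We work throughout in $\bC[G\times G]$ with $G=\mathbb{Z}_2^{*s}$ (generators $g_1,\dots,g_s$ of order two). The Frobenius norm is the $\ell_2$ norm of coefficients. Multiplication by a group element $[g,h]$ on either side is an isometry, so $\norm{AB}_F\le \norm{A}_1\norm{B}_F$, where $\norm{A}_1$ is the $\ell_1$ norm of coefficients. Two facts are used repeatedly. First, $([e,e]\pm[i,e])/2$ and $([e,e]\mp[e,i])/2$ are commuting self-adjoint idempotents, so each $K_{\pm i}$ is a self-adjoint projection with $\norm{K_{\pm i}}_1=1$. Second, $K_{+i}$ and $K_{-i}$ are orthogonal.

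Define the linear map $\Phi(Z)=\sum_{a=\pm i}K_aZK_a$; the statement asks for $[e,e]-\Phi(W^2)$ to be small. The plan follows the overview in two stages.

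\textbf{Stage 1: the iteration.} Set $V_1=(2/s)[e,e]$ and
\[
V_{n+1}=V_n+(2/s)\bigl([e,e]-\Phi(V_n)\bigr).
\]
Write $R_n=[e,e]-\Phi(V_n)$. Then $R_{n+1}=(I-(2/s)\Phi)R_n$. The core estimate is the contraction
\[
\norm{(I-(2/s)\Phi)Z}_F\le\lambda\norm{Z}_F
\]
on the relevant subspace, with $\lambda<1$. To prove it, expand $K_aZK_a$ into its 16 terms $[g,h]Z[g',h']/16$. Then split $\Phi(Z)$ into a \emph{diagonal part}, consisting of the terms where the left and right multipliers coincide so that a coefficient returns to its own position, and a \emph{cross part}.
\begin{itemize}
\item The diagonal part equals $(s/2)Z$ up to a constant scaling. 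This makes $(2/s)\Phi(Z)\approx Z$ on coordinates supported away from degenerate words.
\item The cross terms move coefficients to words of different reduced length. Because the group is free, for a fixed word only $O(1)$ generators cancel against its ends. The cross contributions from distinct $i$ therefore land on nearly disjoint coordinates, and a Cauchy--Schwarz count bounds them by $O(1/\sqrt{s})\cdot s$ in the unnormalized operator.
\end{itemize}
After multiplying by $2/s$ this gives $\lambda=1-c+O(1/\sqrt s)$. The value $s\ge10^4$ is what makes the error terms explicitly smaller than the gain. This orthogonality/counting step is where I expect the real difficulty: one must pick the right weighted norm, or use the exact identity coming from the projection structure, so that the constant works out. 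A geometric bound $\norm{R_n}_F\le\lambda^{n-1}\norm{R_1}_F$ follows. Summing the increments $(2/s)R_k$ also gives $\norm{(s/2)V_n-[e,e]}_1\le\rho$ for some small $\rho$. Here the $\ell_1$ norm is controlled because each step enlarges the support by a bounded factor while the coefficients decay geometrically, and for large $s$ the decay dominates.

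\textbf{Stage 2: the square root.} Write $(s/2)V_n=[e,e]+R$ with $R$ self-adjoint and $\norm{R}_1\le\rho<1$. Define
\[
W=\sqrt{2/s}\,\sum_{k=0}^{m}\binom{1/2}{k}R^k .
\]
Then $W$ is self-adjoint, and since $\ell_1$ is a Banach algebra norm, $\norm{W^2-V_n}_1\le C\rho^{m+1}/s$. Applying $\Phi$, which has $\ell_1\to F$ operator norm at most $2s$ because each $K_a$ has $\ell_1$ norm one, and using the triangle inequality,
\[
\norm{[e,e]-\Phi(W^2)}_F\le\norm{R_n}_F+2s\,\norm{W^2-V_n}_1 .
\]
Choosing $m=n$ and $\lambda\ge\max(\lambda_{\text{iter}},\rho)$ gives a bound of the form $\lambda^n+2s\lambda^n=(2s+1)\lambda^n$. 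Constants are absorbed by shifting $n$ or slightly enlarging $\lambda$.
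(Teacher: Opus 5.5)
Your overall architecture matches the paper: iterate $V_{n+1}=V_n+(2/s)([e,e]-\Phi(V_n))$ in $\bC[G\times G]$, take a truncated binomial square root of $(s/2)V_n$, and push the error through $\Phi$. However, Stage 2 rests on a false claim. The quantity $\norm{(s/2)V_n-[e,e]}_1$ is not small, and your justification ("support grows by a bounded factor, coefficients decay geometrically") gets the scaling backwards.

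Write $C_n=(s/2)V_n-[e,e]$, so that $C_{n+1}=C_n+R_n$.
\begin{itemize}
\item Since $\Phi([e,e])=\sum_a K_a=(s/2)[e,e]-\frac12\sum_i[i,i]$, you get $C_2=R_1=\frac1s\sum_i[i,i]$. Hence $\norm{C_2}_1=1$ already.
\item One more step uses $K_{\sigma i}[i,i]K_{\sigma i}=-K_{\sigma i}$. For $j\neq i$, it also uses eight terms $\pm\frac18$ on the distinct pairs $[i,i],[ij,ij],[ji,ji],[jij,jij],[jij,i],[ji,ij],[ij,ji],[i,jij]$.
\item This gives $\norm{R_2}_1=(5s-3)/(2s)$ and $\norm{C_3}_1=(7s-3)/(2s)>3$.
\end{itemize}

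The mechanism is general. The off-diagonal part of $(2/s)\Phi$ sends a coefficient to $\Theta(s)$ distinct, mostly longer, words, each with weight $\Theta(1/s)$. In $\ell_2$ these add in quadrature; that is exactly the $\sqrt{s}$ saving behind the $O(1/\sqrt s)$ term in your contraction. In $\ell_1$ they add linearly, so each iteration multiplies the $\ell_1$ mass by roughly $3/4+7/4$ instead of contracting it. Consequently the binomial series for $\sqrt{[e,e]+C_n}$ has no reason to converge in the Banach algebra $\ell_1(G\times G)$, and the estimate $\norm{W^2-V_n}_1\le C\rho^{m+1}/s$ is unavailable.

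You cannot simply fall back on $\norm{\cdot}_F$ either. It is not submultiplicative for the convolution product of $\bC[G\times G]$, so it does not control the powers $C_n^j$. The missing idea is a single norm that is both contracted by $Z\mapsto Z-(2/s)\Phi(Z)$ and submultiplicative up to a constant. The paper uses $\norm{X}_w=\sum_{k,\ell\ge0}2^{(k+\ell)/2}\norm{X_{k,\ell}}_F$, where $X_{k,\ell}$ is the part of $X$ on pairs of word lengths $(k,\ell)$.
\begin{itemize}
\item Within each length shell it is $\ell_2$, so the orthogonality saving survives. This gives $\norm{Z-(2/s)\Phi(Z)}_w\le\lambda\norm{Z}_w$ with $\lambda=3/4+24/\sqrt{2s}$.
\item Across shells it is a weighted $\ell_1$. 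A Cauchy--Schwarz argument over the cancelled middle segments then gives $\norm{XY}_w\le 4\norm{X}_w\norm{Y}_w$.
\end{itemize}
Since $C_n=(I-(2/s)\Phi)(C_{n-1})+R_1$ and $\norm{R_1}_w=2/\sqrt s$, one gets $\norm{C_n}_w\le 2/((1-\lambda)\sqrt s)<1/4$ for all $n$. The square-root series then converges in $\norm{\cdot}_w\ge\norm{\cdot}_F$.

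You also placed the difficulty in the wrong stage. The Stage 1 contraction already holds in plain Frobenius norm, from two ingredients:
\begin{itemize}
\item the diagonal part has $d_{g,h}\in[s/8,s/2]$, so $(2/s)D$ removes only about a quarter of $Z$, giving the factor $3/4$ (not ``$(2/s)\Phi(Z)\approx Z$'');
\item the involution-counting bound $\norm{F_j(X)}_F\le 2\sqrt{2s}\,\norm{X}_F$.
\end{itemize}
The weighted norm is needed precisely to make Stage 2 work.
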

\end{tbox}

\subsection{Constructing a 1-Dependent Distribution}
\label{sec:fin-dep}

This section covers the construction of a 1-dependent distribution over proper $q$-colorings of paths. We express this in the language of operators:
we construct a Hilbert space $H$ with a distinguished vector $\ket{e}\in H$ and operators $T_1, T_2, \ldots, T_q$ over $H$, one for each color, such that
\begin{enumerate}[label=(T\arabic*),leftmargin=1.5cm,topsep=1em]
\item\label[part]{T1} $T_a^2 = 0$ for all $a$,
\item\label[part]{T2} $\sum_{a = 1}^q T_a = \ketbra{e}{e}$, and
\item\label[part]{T3} $\braUket{e}{ T_{a_1} T_{a_2} \ldots T_{a_k} }{e}  \geq 0$ for every finite word $w = a_1 a_2 \ldots a_k$.
\end{enumerate}
The fact that 1-dependent distributions over proper $q$-colorings of $\bZ$ exist is already known \cite{holroyd-liggett-2016-finitely-dependent-coloring}. Hence, we do not merely seek operators as in \ref{T1}-\ref{T3}; we seek a distribution that can be approximated well by \qLOCAL algorithms. In this section, we describe such a 1-dependent distribution.

\paragraph{Letters and Colors.}
Throughout, we call the number of colors $q$. We choose $q = 2s$ even and see colors as signed, i.e., the colors are $-s, \ldots, -2,-1,+1,+2, \ldots, +s$.
Let 
\[
G = \langle g_1, g_2, \ldots, g_s \vert g_i^2 = e\text{ for all }i\in[s]\rangle
\]
be the group with $s$ generators, all of which are involutions. To ease the notation, we will write $i$ instead of $g_i$. The elements of $G$ are words with (unsigned) letters $1, 2, \ldots, s$ such that we can remove any pair of identical consecutive letters. The empty word and neutral element of $G$ is denoted by $e$.

\paragraph{The Orthogonal Projections.}
Now we consider the Hilbert space $H = \ell_2(G) = \set{ f: G \to \bC : \sum_{g\in G} |f(g)|^2 < \infty}$ and, for all $g\in G$, let us write $\ket{g}$ for the function that maps $g$ to one and every other $h \neq g$ to zero. We define two operators on $H$ for all $g\in G$ as 
\[
L_g\ket{w} = \ket{gw} \ ,\quad\text{and}\quad
R_g\ket{w} = \ket{wg} \ .
\]
For each letter $i$, we define two orthogonal projections on orthogonal subspaces
\[
P_{+i} = (I + L_i)(I + R_i) / 4
\quad\text{and}\quad
P_{-i} = (I - L_i)(I - R_i) / 4 \ .
\]

\begin{lemma}
    \label{lem:orth-projections}
    For all $i\in [s]$, the matrices $P_{+i}$ and $P_{-i}$ are orthogonal projections and $P_{+i} P_{-i} = 0$.
\end{lemma}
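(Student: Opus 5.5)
The plan is to reduce the claim to the elementary algebra of $L_i$ and $R_i$. First I record three facts about these operators on $H=\ell_2(G)$:
\begin{itemize}
\item $L_i$ and $R_i$ permute the orthonormal basis $\{\ket{w}\}_{w\in G}$, so they are unitary.
\item Since $i^2=e$ in $G$, we have $L_i^2=R_i^2=I$, so each is a self-adjoint unitary involution, i.e.\ $L_i=L_i^\dagger=L_i^{-1}$ and likewise for $R_i$.
\item Left and right multiplication commute by associativity: $L_iR_i\ket{w}=\ket{iwi}=R_iL_i\ket{w}$.
\end{itemize}

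From the involution property, $Q^{L}_{\pm}:=(I\pm L_i)/2$ are self-adjoint idempotents:
\[
\bigl((I\pm L_i)/2\bigr)^2=(I\pm 2L_i+L_i^2)/4=(I\pm L_i)/2 .
\]
The same holds for $Q^{R}_{\pm}:=(I\pm R_i)/2$. Moreover $Q^{L}_{+}Q^{L}_{-}=(I-L_i^2)/4=0$, and similarly $Q^R_+Q^R_-=0$.

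Now write $P_{+i}=Q^L_+Q^R_+$ and $P_{-i}=Q^L_-Q^R_-$. The product of two commuting orthogonal projections is an orthogonal projection. Indeed, $(Q^L_+Q^R_+)^\dagger=Q^R_+Q^L_+=Q^L_+Q^R_+$, and
\[
(Q^L_+Q^R_+)^2=(Q^L_+)^2(Q^R_+)^2=Q^L_+Q^R_+ ,
\]
using commutativity to regroup the factors. The same argument applies to $P_{-i}$.

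For orthogonality, I use commutativity once more:
\[
P_{+i}P_{-i}=Q^L_+Q^R_+Q^L_-Q^R_-=(Q^L_+Q^L_-)(Q^R_+Q^R_-)=0 .
\]
In fact one vanishing factor already suffices. Taking the adjoint of this identity, and using self-adjointness, also gives $P_{-i}P_{+i}=0$.

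Nothing here is a real obstacle. The only points requiring care are that $L_i$ and $R_i$ are bounded and self-adjoint on the infinite-dimensional space, which follows from their being basis permutations of order two, and that they commute. Commutativity is the one place where the group structure is used, and I would state it explicitly before regrouping the factors.
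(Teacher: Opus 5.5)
Your proposal is correct and follows essentially the same route as the paper: $L_i$ and $R_i$ are basis permutations of order two, hence self-adjoint unitaries, their commutativity gives idempotence, and regrouping gives $P_{+i}P_{-i}=\frac{1}{16}(I-L_i^2)(I-R_i^2)=0$. Factoring through the commuting projections $(I\pm L_i)/2$ and $(I\pm R_i)/2$ only packages the same calculation more explicitly, and it usefully makes visible that commutativity is also what gives self-adjointness of $P_{\pm i}$.
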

\begin{proof}
    Note that $L_i^2 = I$, $R_i^2 = I$ for all $i$. Since the $L_i$ and $R_i$ permute the basis $\ket{g}$ of $H$, they are unitary matrices, hence $L_i^* = (L_i)^{-1} = L_i$ and $R_i^* = (R_i)^{-1} = R_i$. In particular, the $P_{+i}$ and $P_{-i}$ are also self-adjoint. Using that $L_iR_i = R_i L_i$, simple calculation shows that $P_{+i}^2 = P_{+i}$ and $P_{-i}^2 = P_{-i}$. Finally, since $L_i$, $R_i$ and $I$ commute, we have that
    \[
    \begin{aligned}
    P_{+i}P_{-i}
    &= \frac{1}{16}(I+L_i)(I+R_i)(I-L_i)(I-R_i) \\
    &= \frac{1}{16}(I-L_i^2)(I - R_i^2)
    = 0 \, . \qedhere
    \end{aligned}
    \]
\end{proof}

\paragraph{The Iterated Construction.}
Now we need to find an operator $V$ such that the operator $T_{a} = P_a V P_{-a}$ satisfies \ref{T2} and \ref{T3}. For the resulting distribution to be approximated by \qLOCAL algorithms, we further need that it can be approximated by operators that have finitely many non-zero coefficients in the $\ketbra{g}{h}$ basis.
To do so, we build a sequence $V_0, V_1, V_2, \ldots$ iteratively by starting from $V_0 = 0$ and using the following definition
\begin{equation}
    \label{eq:V-rec}
    V_{n+1} = V_n + \frac{2}{s}(E - S(V_n))
    \quad\text{where}\quad
    S(X) = \sum_{i = 1}^s \left(P_{+i} X P_{-i} + P_{-i} X P_{+i} \right).
\end{equation}
As we show in \cref{thm:fin-dep-V}, the sequence of $V_n$ converges to some operator $V$ such that $S(V) = E$. As such the operators $T_a = P_a V P_{-a}$ satisfy \ref{T1} and \ref{T2}. Positivity (i.e., \ref{T3}) still needs to be proven, but as it is not strictly necessary to obtain \cref{res:color}, we defer the proof to \cref{app:proof-1-dep}. The important properties of this construction are summarized in \cref{thm:fin-dep-V}, which we prove next.

\begin{tbox}
\begin{lemma}
    \label{thm:fin-dep-V}
    Suppose $s \geq 10^4$.
    For every $n \geq 1$, we have that
    \begin{enumerate}[label=(V\arabic*),leftmargin=1.5cm]
        \item\label[part]{V1} $V_n \in \fS = \operatorname{span}\set{ \ketbra{g}{h}: g, h\in G}$, and
        \item\label[part]{V2} $\norm{ E - S(V_n) }_F \leq \lambda^n$ for some $\lambda < 1$.
    \end{enumerate}
\end{lemma}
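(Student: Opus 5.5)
The approach is to prove \ref{V1} by a direct induction, and to prove \ref{V2} by turning the recurrence \eqref{eq:V-rec} into a linear recurrence for the residual $D_n := E - S(V_n)$ and showing that this recurrence contracts in Frobenius norm.

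For \ref{V1}, note that $L_g$ and $R_g$ permute the basis $\{\ket{w}\}$. Hence $P_{\pm i}\ket{g} = (\ket{g} \pm \ket{ig} \pm \ket{gi} + \ket{igi})/4$, and likewise on the bra side. So $X \mapsto P_{\pm i} X P_{\mp i}$ maps $\fS$ into $\fS$. Since $E \in \fS$ and $V_0 = 0$, induction on \eqref{eq:V-rec} gives $V_n \in \fS$. A quantitative version is also available: $V_n$ is supported on pairs $\ketbra{g}{h}$ with $|g|,|h| \le 2n$.

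For \ref{V2}, applying $E - S(\cdot)$ to \eqref{eq:V-rec} gives
\[
D_{n+1} = D_n - \tfrac{2}{s} S(D_n) = \mathcal{A}(D_n), \qquad \mathcal{A} := \mathrm{id} - \tfrac{2}{s} S, \qquad D_0 = E .
\]
On Hilbert--Schmidt operators, each map $\Phi_{\pm i}(X) = P_{\pm i} X P_{\mp i}$ is an orthogonal projection, since left and right multiplication by projections commute and are self-adjoint. Thus $S = \sum_{i}(\Phi_{+i} + \Phi_{-i})$ is a positive operator. The first step of the recurrence can be computed explicitly:
\[
P_{\pm i}\ket{e} = (\ket{e} \pm \ket{i})/2, \qquad S(E) = \tfrac{s}{2} E - \tfrac12 \sum_i \ketbra{i}{i} .
\]
Hence $D_1 = \tfrac1s \sum_i \ketbra{i}{i}$ and $\norm{D_1}_F = s^{-1/2}$. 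So the first step already gives a large drop. The remaining task is to show that $\norm{\mathcal{A} X}_F \le \lambda \norm{X}_F$ for all $X$ in a subspace that contains $D_1$ and is invariant under $\mathcal{A}$. A natural candidate is the span of $\ketbra{g}{h}$ with $(g,h) \ne (e,e)$, possibly together with a correction along $E$. The bound will then be closed by induction, with, say, $\lambda = O(s^{-1/2})$ or at least some $\lambda < 1$ once $s \ge 10^4$.

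The main obstacle is this contraction estimate. The projections $\Phi_{\pm i}$ are not mutually orthogonal, so $S$ is not simply $\tfrac s2$ times a projection. My first attempt will exploit the free structure of $G$ and classify $\ketbra{g}{h}$ by the first and last letters of $g$ and $h$. Consider a reduced word $g$ whose first and last letters differ from $i$. The four words $g, ig, gi, igi$ are then distinct. So for a fixed basis element, only the $O(1)$ letters $i$ that occur as boundary letters of $g$ or $h$ interact nontrivially. For every other $i$, $\Phi_{\pm i}(\ketbra{g}{h})$ has Frobenius norm $1/4$ and is spread over $16$ new basis elements supported on longer words.

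I would therefore write $S = S_{\mathrm{diag}} + S_{\mathrm{off}}$. Here $S_{\mathrm{diag}}$ collects the overlap of $\Phi_{\pm i}(\ketbra{g}{h})$ with $\ketbra{g}{h}$ itself; this is $\approx \tfrac{2s}{16}\ketbra{g}{h}$ up to $O(1)$ boundary corrections. Matching this against the $\tfrac s2$ normalization may force reweighting the norm, or analysing the total weight rather than only the diagonal. The Gram-type sums $\sum_{i\ne j}\langle \Phi_i X, \Phi_j X\rangle$ would be bounded using the near-orthogonality $\norm{\Phi_i \Phi_j}$ small for $i \ne j$ on words not beginning or ending in $i$ or $j$. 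Finally, the boundary exceptions would be absorbed into the $O(1/s)$ slack that $s \ge 10^4$ provides. If a clean spectral bound fails, the fallback is to bound $\norm{D_n}_F$ directly by a weighted sum over word lengths and show that it decays geometrically, with a generating-function recursion that uses $s \ge 10^4$ to control the growth in the number of words.
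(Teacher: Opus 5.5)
Your proof of \ref{V1}, the residual recurrence $D_{n+1}=\mathcal{A}(D_n)$ with $D_0=E$, and the observation that the diagonal of $S$ is generically $s/8$ all match the paper. However, \ref{V2} is entirely the contraction estimate, and there is a genuine gap there: you do not prove it, and the mechanism you propose rests on a false premise.

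\textbf{The $\Phi_a$ are not nearly orthogonal.} Take $X=\ketbra{g}{h}$ with $|g|,|h|\ge 2$, and distinct letters $i,j$ not at the ends of $g$ or $h$. Then
\[
\langle \Phi_{+i}X,\Phi_{\pm j}X\rangle_F=\braUket{g}{P_{+i}P_{\pm j}}{g}\,\braUket{h}{P_{\mp j}P_{-i}}{h}=\tfrac{1}{16}\cdot\tfrac{1}{16}.
\]
The roughly $4s^2$ such pairs therefore contribute about $s^2/64=(s/8)^2$ to $\norm{S(X)}_F^2$. That is as much as the square of the diagonal term, so these cross terms cannot be treated as an error.

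\textbf{Pairwise operator-norm bounds do not help either.} Testing on $\ket{e}$ gives $\norm{\Phi_{+i}\Phi_{+j}}=\norm{P_{+i}P_{+j}}\,\norm{P_{-j}P_{-i}}\ge 3/16$. Summing over the $4s(s-1)$ pairs gives about $\tfrac34 s^2$. Contraction, i.e.\ $\norm{S(X)}_F^2<s\langle X,S(X)\rangle$, tolerates only about $s^2/8$ for the $X$ above. What must actually be shown is that the off-diagonal part of $S$ itself, not a Gram matrix, has norm $O(\sqrt{s})$.

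\textbf{No $\lambda=O(s^{-1/2})$ exists.} For such $X$ one has $\langle X,\mathcal{A}X\rangle_F=1-\tfrac2s\cdot\tfrac s8=\tfrac34$. Even along your orbit, the $\ketbra{i}{i}$-coefficients of $D_2$ equal $(\tfrac34-\tfrac{3}{4s})\tfrac1s$, so $\norm{D_2}_F\ge(\tfrac34-\tfrac{3}{4s})\norm{D_1}_F$.

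\textbf{The missing idea} is the paper's \cref{lem:decompose-S,lem:bound-norm-F}.
\begin{itemize}
\item Write $S(\ketbra{g}{h})=d_{g,h}\ketbra{g}{h}+\tfrac18\sum_{j\le 3}F_j(\ketbra{g}{h})-\tfrac18\sum_{j\ge 4}F_j(\ketbra{g}{h})$ with $d_{g,h}\in[s/8,s/2]$. This immediately gives $|1-\tfrac2s d_{g,h}|\le\tfrac34$; no reweighting is needed for the diagonal.
\item Each $F_j$ sends a basis element $x\in G\times G$ to $\sum_i e_{\phi_{j,i}(x)}$, summed over $i$ with $\phi_{j,i}(x)\neq x$, for involutions such as $(g,h)\mapsto(igi,h)$ or $(ig,ih)$.
\item Order $G\times G$ by $|g|+|h|$. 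At most two letters can make $\phi_{j,i}(x)$ shorter, since they must be end letters of $g$ or $h$, while at most $s$ make it longer.
\item Split $F_j$ into its decreasing and increasing parts and apply Cauchy--Schwarz (\cref{lem:row-col-bound}). This gives $\norm{F_j(X)}_F\le 2\sqrt{2s}\,\norm{X}_F$, and it is where the free-product structure of $G$ is really used.
\end{itemize}
The result is $\norm{\mathcal{A}X}_F\le(\tfrac34+O(s^{-1/2}))\norm{X}_F$ on all of $\fS$, so no invariant subspace is needed. Your candidate subspace is not invariant anyway: $S(\ketbra{i}{i})$ has $E$-coefficient $-\tfrac12$.

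The paper runs this argument in $\norm{\cdot}_w$ because that norm is needed later for the Taylor step; for \ref{V2} alone the Frobenius version suffices. Your weighted-length fallback points in the direction of $\norm{\cdot}_w$. It still depends on exactly this $O(\sqrt{s})$ off-diagonal bound, which your proposal does not supply.
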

\end{tbox}

\paragraph{The Weighted Norm.}
For the positivity in \cref{sec:vec}, bounds on the Frobenius norm do not suffice. The reason is that we see $V_n$ in the group algebra $\bC[G \times G]$ which has a different multiplicative structure for which the Frobenius norm is not submultiplicative, i.e., there is no universal constant $\alpha$ for which $\norm{ X \cdot Y }_F \leq \alpha \norm{ X }_F \norm{ Y }_F$ where $X \cdot Y$ is the product in $\bC[G \times G]$. To overcome this issue, we work with a norm that dominates the Frobenius norm and for which $\norm{ X \cdot Y }_w \leq 4 \norm{ X }_w \norm{ Y }_w$ (see \cref{lem:norm-submul}).

\begin{proposition}
    \label{prop:weighted-norm}
    For $X \in \fS$ let 
    \[
    \norm{X}_w = \sum_{k,\ell \geq 0} 2^{(k + \ell)/2} \norm{ X_{k,\ell} }_F
    \quad\text{where}\quad
    X_{k,\ell} = \sum_{\stackrel{g,h\in G:}{|g|=k, |h| = \ell}} X(g,h) \ketbra{g}{h}
    \]
    We have that $\norm{ \cdot }_w$ is a norm on $\fS$ and $\norm{ X }_F \leq \norm{ X }_w$ for all $X \in \fS$.
\end{proposition}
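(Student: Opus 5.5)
The plan is to check the norm axioms directly from the decomposition $X = \sum_{k,\ell} X_{k,\ell}$, and then compare with the Frobenius norm using the fact that the blocks $X_{k,\ell}$ have disjoint supports in the basis $\ketbra{g}{h}$. The first observation is that for $X \in \fS$ only finitely many coefficients $X(g,h)$ are nonzero. Hence only finitely many blocks $X_{k,\ell}$ are nonzero and the defining sum is finite, so $\norm{X}_w$ is a well-defined nonnegative real number. The map $X \mapsto X_{k,\ell}$ is linear: it is the restriction of the coefficients to the set $\set{(g,h) : |g|=k, |h|=\ell}$, where $|g|$ denotes the reduced word length in $G$.

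The norm axioms then follow block by block. For homogeneity, $(cX)_{k,\ell} = cX_{k,\ell}$, so $\norm{cX}_w = |c|\norm{X}_w$ by homogeneity of $\norm{\cdot}_F$. For the triangle inequality, $(X+Y)_{k,\ell} = X_{k,\ell} + Y_{k,\ell}$; we apply the Frobenius triangle inequality in each block and sum with the nonnegative weights $2^{(k+\ell)/2}$. For definiteness, suppose $\norm{X}_w = 0$. Every term is nonnegative with a positive weight, so every $X_{k,\ell} = 0$. Since every pair $(g,h)$ lies in exactly one block (namely $k=|g|$, $\ell=|h|$), all coefficients of $X$ vanish and $X = 0$.

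For the comparison, the blocks partition the index set $G \times G$, so
\[
\norm{X}_F^2 = \sum_{k,\ell} \norm{X_{k,\ell}}_F^2 \leq \Bigl(\sum_{k,\ell} \norm{X_{k,\ell}}_F\Bigr)^2 \leq \Bigl(\sum_{k,\ell} 2^{(k+\ell)/2}\norm{X_{k,\ell}}_F\Bigr)^2 = \norm{X}_w^2 .
\]
The first inequality holds because all terms are nonnegative, and the second holds because each weight $2^{(k+\ell)/2}$ is at least $1$. Taking square roots gives $\norm{X}_F \leq \norm{X}_w$.

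None of these steps is a real obstacle. The only point needing care is that the decomposition by word lengths is well defined, which requires $|g|$ to be the length of the unique reduced word representing $g$. In $G$ every element has a unique reduced form with no two equal consecutive letters, so this holds and the blocks genuinely partition $G\times G$.
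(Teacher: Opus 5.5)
Your proof is correct and follows the paper's argument. The paper also uses the block decomposition $\norm{X}_F^2=\sum_{k,\ell}\norm{X_{k,\ell}}_F^2$, the weights $2^{(k+\ell)/2}\geq 1$, and $\sum a_{k,\ell}^2\leq(\sum a_{k,\ell})^2$ for nonnegative terms, applying the last two steps in the opposite order; your explicit check of the norm axioms and your remark on unique reduced words fill in what the paper dismisses as easy to verify.
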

\begin{proof}
    It is easy to verify that $\norm{ \cdot }_w$ is a norm. For all $X$, we have
    \[ 
    \norm{X}_F^2 = \sum_{k,\ell} \norm{X_{k,\ell}}_F^2 \leq \sum_{k,\ell} 2^{k + \ell} \norm{X_{k,\ell}}_F^2 \leq \norm{X}_w^2
    \] 
    using that $\paren*{ \sum_{k,\ell} a_{k,\ell} }^2 \geq \sum_{k,\ell} a_{k,\ell}^2$ with $a_{k,\ell} = 2^{(k + \ell)/2}\norm{X_{k,\ell}}_F \geq 0$.
\end{proof}
    
\paragraph{The Contraction Lemma.}
\cref{lem:contraction} is the key technical idea behind \cref{thm:fin-dep-V}. It implies that each time that we apply \cref{eq:V-rec}, we contract geometrically the norm of $R_n = E - S(V_n)$. We quickly prove \cref{thm:fin-dep-V} and the remainder of this section is dedicated to proving \cref{lem:contraction}.

\begin{lemma}[Contraction Lemma]
\label{lem:contraction}
For all $X \in \fS$ we have that
\[
    \norm*{X - \frac{2}{s}S(X)}_w \leq \lambda \norm{X}_w 
    \quad\text{where}\quad
    \lambda = \frac{3}{4} + \frac{24}{\sqrt{2s}}
    \]
\end{lemma}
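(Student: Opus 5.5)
The plan is to expand $S$ on the basis $\ketbra{g}{h}$ of $\fS$. I will split $X - \frac{2}{s}S(X)$ into three parts: a diagonal part, a bounded ``length-increasing'' off-diagonal part, and an $O(1/s)$ exceptional part. Each part is then bounded in $\norm{\cdot}_w$ block by block, i.e.\ on each $X_{k,\ell}$, and the bounds are summed using the triangle inequality.

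\textbf{Step 1 (the action of one letter).} For a fixed $i$, since $L_i,R_i$ are self-adjoint permutations,
\[
P_{\pm i}\ket{g} = \tfrac14\bigl(\ket{g} \pm \ket{ig} \pm \ket{gi} + \ket{igi}\bigr),
\]
and similarly for $\bra{h}P_{\mp i}$. Adding $P_{+i}\ketbra{g}{h}P_{-i}$ and $P_{-i}\ketbra{g}{h}P_{+i}$, the odd cross-parity terms cancel. What survives is $\frac18\ketbra{g'}{h'}$ for the 8 pairs with $g'\in\{g,ig,gi,igi\}$, $h'\in\{h,ih,hi,ihi\}$ in which the number of applications of $i$ has the same parity on both sides. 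One of these pairs is $(g,h)$ itself; the other 7 are of types $(a,b)\in\{(2,0),(0,2),(2,2)\}$, plus four variants of type $(1,1)$.

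Call $i$ \emph{generic} for $(g,h)$ if $i$ is neither the first nor the last letter of $g$ or of $h$; at most $4$ letters are non-generic. For generic $i$ and $g,h\neq e$, each image has lengths $(|g|+a,|h|+b)$, all the $g'$ are distinct, and the coefficient of $\ketbra{g}{h}$ is exactly $1/8$. When $g=e$ or $h=e$, the words $ig=gi$ and $igi=e$ coincide; the same computation then shows the diagonal coefficient becomes $1/4$ or $1/2$, and the off-diagonal images merge but still increase length.

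\textbf{Step 2 (diagonal part).} Summing over generic $i$, the diagonal coefficient is $c(g,h)\cdot(s-O(1))$ with $c\in\{1/8,1/4,1/2\}$. So the generic diagonal contribution multiplies each entry $X(g,h)$ by $1-2c+O(1/s)$, and $|1-2c|\le 3/4$. Because this is entrywise scaling within each block $X_{k,\ell}$, it is bounded by $\frac34\norm{X}_w$ in weighted norm.

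\textbf{Step 3 (off-diagonal generic part).} This is the main step, since naively each $\ketbra{g}{h}$ spawns $\Theta(s)$ images of weight up to $4$ times larger. Fix a type $(a,b)$ and a block $(k,\ell)$. The map
\[
\ketbra{g}{h}\mapsto\sum_{i\text{ generic}}\tfrac18\ketbra{g'}{h'}
\]
is injective on basis elements. Indeed, from a target $\ket{ig}$ of length $k+1$ one recovers both $i$ and $g$, and the same holds for $gi$ and $igi$. Different $i$ therefore give orthogonal images, and the Frobenius norm of the image is at most $\frac{\sqrt s}{8}\norm{X_{k,\ell}}_F$. Multiplying by $\frac2s$ and by the weight gain $2^{(a+b)/2}$, and summing over the 7 types (total weight $2+2+4+4\cdot2=16$), gives a contribution of at most $\frac{16}{4\sqrt s}\norm{X}_w=\frac{4}{\sqrt s}\norm{X}_w$. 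This is below $\frac{24}{\sqrt{2s}}\norm{X}_w$. The merged cases $g=e$ or $h=e$ are checked the same way with constants at most doubled.

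\textbf{Step 4 (non-generic letters).} For each $(g,h)$, at most $4$ letters are non-generic. Their 16 image terms, each of size at most $1/16$ and of weight change at most $2^{2}$ (shortening only helps), contribute at most $\frac{2}{s}\cdot O(1)\cdot\norm{X}_w$. Since these letters depend on $(g,h)$, I would bound this part by grouping the pairs $(g,h)$ by their end letters, so that images from distinct pairs have bounded overlap. This gives $O(1/s)\le O(1/\sqrt{s})$, which fits in the remaining slack of $\frac{24}{\sqrt{2s}}-\frac{4}{\sqrt s}$ for $s\ge 10^4$.

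The main obstacle is carefully verifying injectivity and the resulting bounded-overlap claims in Steps 3 and 4, especially near $e$ and at the non-generic letters. The rest is bookkeeping.
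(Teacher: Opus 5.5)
Your Steps 1--3 are sound. For generic letters the map from $(g,h,i)$ to the target is injective, and this gives the length-increasing part a bound of $4/\sqrt{s}$, which is better than the paper's per-$F_j$ estimates.

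The gap is Step 4. You claim that the non-generic images of distinct pairs have bounded overlap, so that this part costs only $O(1/s)$. That is false. Every length-decreasing image comes from a non-generic letter, and these images are many-to-one. For each $i\neq j$, the letter $i$ is non-generic for $(iji,e)$, and $P_{+i}\ketbra{iji}{e}P_{-i}+P_{-i}\ketbra{iji}{e}P_{+i}=\frac14\bigl(\ketbra{iji}{e}+\ketbra{j}{e}-\ketbra{ji}{i}-\ketbra{ij}{i}\bigr)$. So the single basis element $\ketbra{j}{e}$ receives mass from $s-1$ different sources.

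Take $X=\sum_{i\neq j}\ketbra{iji}{e}$. Then $\norm{X}_w=2^{3/2}\sqrt{s-1}$. The non-generic part of $\frac2s S(X)$ has coefficient $\frac{s-1}{2s}$ on $\ketbra{j}{e}$, so its weighted norm is at least about $2^{-1/2}$. The ratio is about $1/(4\sqrt{s})$, so this part is genuinely of order $1/\sqrt{s}$, not $1/s$. Grouping by end letters cannot restore bounded overlap, because the colliding sources $iji$ all have different end letters. Each source has at most $4$ non-generic letters, but a target can have in-degree of order $s$.

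What is missing is exactly the paper's key tool: a Schur-test/Cauchy--Schwarz estimate of the form $\sqrt{d^-d^+}$, with out-degree $O(1)$ and in-degree at most $s$, as in \cref{lem:row-col-bound}, applied block by block with the weights. Here the many-to-one images all shorten $|g|+|h|$ by at least $2$, so each carries a weight factor of at most $1/2$. If you track the constants, this gives a contribution of order $1/\sqrt{s}$ with a small constant. That fits in your remaining slack $(12\sqrt2-4)/\sqrt{s}$, and it would even yield a smaller $\lambda$ than the paper's. So your decomposition can be repaired, but Step 4 as written does not establish it; that Schur-type estimate is the step you still have to prove.

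A minor point: in Step 2 the generic diagonal factor is $1-\frac2s\cdot\frac18(s-m)=\frac34+\frac{m}{4s}$, slightly above $\frac34$. This is harmless. You can absorb it into the slack, or keep all diagonal terms together so that $d_{g,h}\in[s/8,s/2]$, as the paper does.
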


\begin{proof}[Proof of \cref{thm:fin-dep-V}]
    We prove \ref{V1} by induction on $n$. Indeed, we have $V_1 = (2/s)E$ hence it has only one non-zero coefficient. For all $\ket{g}\in H$, we have that $P_{\pm i} \ket{g} = (1/4)(\ket{g} + \ket{igi} \pm \ket{ig} \pm \ket{gi})$, hence $V_{n+1} = V_n + (2/s)(E - S(V_n))$ also has only finitely many coefficients.
    
    To prove \ref{V2}, define $R_n = E - S(V_n)$ and observe that for all $n \geq 0$ we have
    \[
    \norm{R_{n+1}}_w = \norm{E - S(V_{n+1})}_w = \norm{R_n - (2/s)S(R_n)}_w \leq \lambda\norm{R_n}_w \leq \lambda^{n+1}
    \]
    where the second equality follows from \cref{eq:V-rec}, the first inequality follows from \cref{lem:contraction}, and the last uses that $\norm{R_0}_w = \norm{E}_w = 1$. As $\norm{R_n}_F \leq \norm{R_n}_w$ (by \cref{prop:weighted-norm}), we get \cref{V2}.
\end{proof}

\paragraph{Decomposing $S(X)$.}
Here, it is useful to think of $S$ as an infinite matrix with rows and columns indexed by $G^2$. We bound the contribution of the diagonal terms and off-diagonal terms of $S(X)$ separately. The norm of the diagonal term $D(X)$ is easy to bound while the norm of the off-diagonal terms $F_i(X)$ requires more work.
\begin{lemma}
    \label{lem:decompose-S}
    We have $S(X) = D(X) + (1/8)\sum_{j=1}^3 F_j(X) - (1/8)\sum_{j=4}^7 F_j(X)$ where 
    \[
    \text{for all }X\in\fS\ , \quad
    \norm{X - (2/s)D(X)}_w \leq (3/4)\norm{X}_w
    \]
    and
    \begin{align*}
    F_1(\ketbra{g}{h}) &= \sum_{i:igi\ne g}\ketbra{igi}{h},
    & F_2(\ketbra{g}{h}) &= \sum_{i:ihi\ne h}\ketbra{g}{ihi},
    & F_3(\ketbra{g}{h}) &= \sum_{i:(igi,ihi)\ne(g,h)}\ketbra{igi}{ihi},\\
    F_4(\ketbra{g}{h}) &= \sum_i\ketbra{ig}{ih},
    & F_5(\ketbra{g}{h}) &= \sum_i\ketbra{ig}{hi},
    & F_6(\ketbra{g}{h}) &= \sum_i\ketbra{gi}{ih},\\
    F_7(\ketbra{g}{h}) &= \sum_i\ketbra{gi}{hi}.
  \end{align*}
\end{lemma}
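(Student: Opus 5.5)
The plan is to expand $S(\ketbra{g}{h})$ explicitly on the basis of $\fS$, collect every term proportional to $\ketbra{g}{h}$ into a diagonal operator $D$, and check that what remains is exactly the listed $F_j$'s. Since $D$ will act on each $\ketbra{g}{h}$ by a scalar multiplier, the bound on $X-(2/s)D(X)$ then reduces to a scalar inequality. By linearity it suffices to treat $X=\ketbra{g}{h}$.

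\emph{Expansion.} From the proof of \cref{thm:fin-dep-V}, $P_{\pm i}\ket{g}=\tfrac14(\ket{g}+\ket{igi}\pm\ket{ig}\pm\ket{gi})$. Since $P_{\mp i}$ is self-adjoint, $\bra{h}P_{\mp i}=\tfrac14(\bra{h}+\bra{ihi}\mp\bra{ih}\mp\bra{hi})$. Write $u=\ket{g}+\ket{igi}$ and $v=\ket{ig}+\ket{gi}$, and similarly $u'=\bra{h}+\bra{ihi}$ and $v'=\bra{ih}+\bra{hi}$. Then
\[
P_{+i}\ketbra{g}{h}P_{-i}+P_{-i}\ketbra{g}{h}P_{+i}=\tfrac1{16}\bigl[(u+v)(u'-v')+(u-v)(u'+v')\bigr]=\tfrac18\bigl(uu'-vv'\bigr).
\]
Expanding $uu'$ gives $\ketbra{g}{h}+\ketbra{igi}{h}+\ketbra{g}{ihi}+\ketbra{igi}{ihi}$. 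Expanding $vv'$ gives exactly the four summands of $F_4,\dots,F_7$. None of the terms in $vv'$ can equal $\ketbra{g}{h}$, because $ig\neq g$ and $gi\neq g$; so $F_4,\dots,F_7$ have no diagonal part.

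\emph{Collecting diagonal terms.} In $uu'$, a term coincides with $\ketbra{g}{h}$ in the following cases: the first term always; the second when $igi=g$; the third when $ihi=h$; the fourth when both hold. Removing these cases produces precisely the index restrictions in $F_1,F_2,F_3$. The rest is absorbed into
\[
D(\ketbra{g}{h})=c(g,h)\ketbra{g}{h},\qquad c(g,h)=\tfrac18\bigl(s+n(g)+n(h)+\#\{i:igi=g,\ ihi=h\}\bigr),
\]
where $n(g)=\#\{i: igi=g\}$. This establishes the claimed identity for $S(X)$.

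\emph{Bounding the diagonal part.} It remains to show $0\le 1-2c(g,h)/s\le 3/4$. The upper bound follows from $c\ge s/8$, since every count is nonnegative. The lower bound follows from $c\le 4s/8=s/2$, since each count is at most $s$. Note that $igi=g$ means $i$ lies in the centralizer of $g$. In this free product of $\bZ_2$'s that happens for all $i$ only when $g=e$, and otherwise only for at most one letter; in any case the crude bound by $s$ suffices.

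Now $X-(2/s)D(X)$ multiplies each coefficient $X(g,h)$ by a scalar in $[0,3/4]$ and preserves the lengths $|g|$ and $|h|$. Hence each block satisfies $\norm{(X-(2/s)D(X))_{k,\ell}}_F\le\tfrac34\norm{X_{k,\ell}}_F$, and summing with the weights $2^{(k+\ell)/2}$ gives the claim.

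The only delicate step is the sign bookkeeping in the expansion: one must confirm that the cross terms $uv'$ and $vu'$ cancel between the two summands, and that no term of $vv'$ ever lands on the diagonal.
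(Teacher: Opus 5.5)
Your proposal is correct and follows the paper's proof essentially step for step: the same splitting $P_{\pm i}\ket{g}=\tfrac14(u\pm v)$ that yields $\tfrac18(uu'-vv')$, the same diagonal multiplier $d_{g,h}\in[s/8,s/2]$, and the same blockwise scalar bound summed with the weights $2^{(k+\ell)/2}$. The differences are cosmetic. You rule out diagonal terms in $vv'$ via $ig\neq g$, whereas the paper uses the fact that these terms change word length by one. Your centralizer remark is correct but not needed.
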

\begin{proof}
    Fix $g, h \in G$ and consider $S(\ketbra{g}{h})$. For $i \in [s]$, define $\ket{ x_i(g) } = \ket{g} + \ket{igi}$ and $\ket{ y_i(g) } = \ket{ig} + \ket{gi}$ so that $P_{\pm i}\ket{g} = (\ket{x_i(g)} \pm \ket{y_i(g)})/4$. Simple calculation shows
    \[
    P_{+i}\ketbra{g}{h}P_{-i} + P_{-i}\ketbra{g}{h}P_{+i} = \frac{\ketbra{x_i(g)}{x_i(h)} - \ketbra{y_i(g)}{y_i(h)}}{8} \ .
    \]
    When one expands this, there is a $\ketbra{g}{h}$ which does not depend on $i$ and seven other terms: 
    \begin{align*}
    S(\ketbra{g}{h}) 
    =&\quad (s/8)\ketbra{g}{h} \\
    &+ (1/8)\sum_{i \in [s]} \ketbra{igi}{h} + \ketbra{g}{ihi} + \ketbra{igi}{ihi} \\
    &- (1/8)\sum_{i \in [s]} \ketbra{gi}{hi} + \ketbra{gi}{ih} + \ketbra{ig}{hi} + \ketbra{ig}{ih}
    \end{align*}
    Note that each of the three positive terms $\ketbra{igi}{h} + \ketbra{g}{ihi} + \ketbra{igi}{ihi}$ can result in $\ketbra{g}{h}$ for the correct values of $i$. We aggregate those in the diagonal operator $D(X)$, i.e., we set
    \[
    D(\ketbra{g}{h}) = d_{g,h}\ketbra{g}{h}
    \quad\text{where}\quad 
    d_{g,h} = \frac{s}8 + \frac18 \sum_i 1( igi = g ) + 1( ihi = h ) + 1( igi = g \wedge ihi = h ) \ .
    \]
    The terms of $\ketbra{y_i(g)}{y_i(h)}$ either increase or reduce the length of the word by one, hence cannot result in $\ketbra{g}{h}$. We distribute the seven remaining terms in the operators $F_1, \ldots, F_7$. Note that the sums in $F_1$, $F_2$ and $F_3$ exclude the $i$'s for which we get $\ketbra{g}{h}$, as those are already accounted for in $D(X)$.

    To bound $\norm{X - (2/s)D(X)}_w$, observe that each term in the sum contributes at most $1/8$ and there are at most $3s$ many. Hence $d_{g,h} \in [s/8, s/2]$ and therefore for all $k,\ell \geq 0$,
    \[
    \norm{ (X - (2/s)D(X))_{k,\ell} }_F^2 
    \leq \sum_{g,h: |g|=k, |h|=\ell} |1 - (2/s)d_{g,h}|^2 |X(g,h)|^2 \leq (3/4)^2\norm{X_{k,\ell}}^2_F
    \]
    and thus
    \begin{align*}
    \norm{ X - (2/s)D(X) }_w 
    &= \sum_{k,\ell \geq 0} 2^{(k + \ell)/2} \norm{ (X - (2/s)D(X))_{k,\ell} }_F\\
    &\leq (3/4)\sum_{k,\ell \geq 0} 2^{(k + \ell)/2} \norm{ X_{k, \ell} }_F
    = (3/4)\norm{X}_w \ . \qedhere
    \end{align*}
\end{proof}

\paragraph{Bounding the Off-Diagonal Terms.}
Next, we prove that each of the Frobenius norms of off-diagonal terms $F_i(X)$ increases only by a factor $O(\sqrt{s})$ compared to that of $X$. The proof is identical for every $F_j$ and relies on \cref{lem:row-col-bound}. Again, see $F_j$ as an infinite matrix whose rows and columns are indexed by $\Gamma = G \times G$. The idea is that, by ordering the elements of $\Gamma$ by length (and breaking ties arbitrarily), we can decompose each $F_j$ into two matrices whose rows and columns sum to at most $s$ or $2$. The reason we can order $\Gamma$ in such a way is because each $F_i$ is defined by involutions $\phi_{j, 1}, \phi_{j, 2}, \ldots, \phi_{j, s}$ on $\Gamma$. In \cref{lem:row-col-bound}, we provide a generic way of bounding each $\norm{F_j(X)}_F$ by analyzing the involutions corresponding to $F_j$. As this is a relatively standard proof, we defer it to the appendix.

\begin{restatable}{proposition}{PropRowColBound}
    \label{lem:row-col-bound}
    Let $\Gamma = G \times G$ and $\phi_1, \phi_2, \ldots, \phi_s : \Gamma \to \Gamma$ be such that $\phi_i(\phi_i(g)) = g$ for all $g\in \Gamma$ and $i\in [s]$. Let $\prec$ be the total order on $\Gamma$ and for all $x\in \Gamma$ define
    \[
    d^-(x) = \card{\set{ i : \phi_i(x) \prec x }}
    \quad\text{and}\quad
    d^+(x) = \card{\set{ i : x \prec \phi_i(x) }} \ .
    \]
    Then the map $Y : \fS \to \fS$ defined by
    \[ 
    Y(\ketbra{g}{h}) = \sum_{i=1}^s 1( \phi_i(g,h) \neq (g,h) ) e_{\phi_i(g,h)}
    \quad\text{where}\quad e_{g,h} = \ketbra{g}{h} \ .
    \] 
    is such that
    $\norm*{Y(X)}_F \leq 2\sqrt{ \sup_{x\in \Gamma} d^-(x) } \sqrt{ \sup_{x\in \Gamma} d^+(x) }\norm{X}_F$ for all $X \in \fS$.
\end{restatable}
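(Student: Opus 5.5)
We view $Y$ as an infinite matrix $M$ with rows and columns indexed by $\Gamma$, where $M(y,x) = \left|\set{ i : \phi_i(x) = y \neq x }\right|$. Then $Y(X)(y) = \sum_x M(y,x) X(x)$, with $X$ regarded as a finitely supported vector in $\ell_2(\Gamma)$. The plan is to split $M$ according to the order $\prec$ as $M = M^{\uparrow} + M^{\downarrow}$. Here $M^{\uparrow}$ keeps the entries with $x \prec y$ (the image is larger) and $M^{\downarrow}$ keeps those with $y \prec x$; fixed points were already excluded. By the triangle inequality $\norm{Y(X)}_F \le \norm{M^\uparrow X}_2 + \norm{M^\downarrow X}_2$. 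It therefore suffices to show that each part has operator norm at most $\sqrt{\sup d^-}\sqrt{\sup d^+}$, which gives the factor $2$.

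The bound on each part comes from Schur's test: a nonnegative matrix whose row sums are at most $r$ and whose column sums are at most $c$ has $\ell_2$ operator norm at most $\sqrt{rc}$. The proof is the standard Cauchy--Schwarz step, $|\sum_x M(y,x)X(x)|^2 \le (\sum_x M(y,x))(\sum_x M(y,x)|X(x)|^2)$, followed by summing over $y$. Since $X$ has finite support, all sums are finite and there are no convergence issues.

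The row and column sums are where the involution property enters. For $M^\uparrow$, the column sum at $x$ counts the indices $i$ with $x \prec \phi_i(x)$, so it equals $d^+(x)$. The row sum at $y$ counts pairs $(i,x)$ with $\phi_i(x)=y$ and $x \prec y$. Because $\phi_i$ is an involution, $\phi_i(x) = y$ forces $x = \phi_i(y)$, so each $i$ contributes at most one $x$, and it contributes exactly when $\phi_i(y) \prec y$. Hence the row sum at $y$ is $d^-(y)$. By symmetry, for $M^\downarrow$ the column sums are $d^-(x)$ and the row sums are $d^+(y)$. In both cases Schur's test gives the operator norm bound $\sqrt{\sup_x d^-(x)\,\sup_x d^+(x)}$. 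Adding the two bounds proves the proposition.

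The only delicate point is the row-sum count, which is where $\phi_i\circ\phi_i = \mathrm{id}$ is used. Without the involution property, a single $i$ could map many $x$ to the same $y$, and the row sums would not be controlled by $d^\pm$. Everything else is routine. If the suprema are infinite the bound is vacuous, so we may assume they are finite.
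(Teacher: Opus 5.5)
Your proof is correct and takes essentially the same approach as the paper. The paper also splits $Y$ into the parts $Y^<$ and $Y^>$ according to $\prec$, and uses the involution to write each coefficient as a sum over at most $d^+$ (resp.\ $d^-$) terms. It then applies Cauchy--Schwarz and reindexes via $y=\phi_i(x)$, which is exactly your Schur-test computation with the row and column sums $d^\pm$.
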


We now use \cref{lem:row-col-bound} to prove bounds on the \emph{weighted} norm of each $F_j(X)$. Note that \cref{lem:row-col-bound} uses the Frobenius norm, and it does not hold in general for the weighted norm. In \cref{lem:bound-norm-F}, we use the fact that our involutions change the length of $|g| + |h|$ by at most two when $j \neq 3$, and by at most four when $j = 3$, and only lose a constant factor compared to \cref{lem:row-col-bound} because of the weighted norm.

\begin{lemma}
    \label{lem:bound-norm-F}
    We have $\norm{F_j(X)}_w \leq 6\sqrt{2s}\norm{X}_w$ for every $j \neq 3$ and $\norm{F_3(X)}_w \leq 12\sqrt{2s}\norm{X}_w$.
\end{lemma}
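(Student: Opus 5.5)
The plan is to reduce the weighted-norm bound to the Frobenius bound of \cref{lem:row-col-bound}, applied blockwise by length. Write $X = \sum_{k,\ell} X_{k,\ell}$. Since $\norm{\cdot}_w$ is a norm, the triangle inequality gives $\norm{F_j(X)}_w \le \sum_{k,\ell} \norm{F_j(X_{k,\ell})}_w$. It therefore suffices to prove, for a single block, that
\[
\norm{F_j(X_{k,\ell})}_w \leq c_j\sqrt{2s}\, 2^{(k+\ell)/2}\norm{X_{k,\ell}}_F ,
\]
with $c_j = 6$ for $j \ne 3$ and $c_3 = 12$.

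\textbf{Where the image of a block lands.} Each $F_j$ maps $\ketbra{g}{h}$ with $|g|=k$, $|h|=\ell$ into basis elements $\ketbra{g'}{h'}$ whose lengths change in a controlled way.
\begin{itemize}
\item For $j\ne 3$, each coordinate changes by at most one step (or two for the conjugations $igi$ in $F_1,F_2$), but only one coordinate moves in $F_1$ and $F_2$.
\item For $F_4,\dots,F_7$, the pair $(|g'|,|h'|)$ lies in $\{k\pm1\}\times\{\ell\pm1\}$.
\item For $F_1$, we have $|g'|\in\{k-2,k,k+2\}$ (the case $igi=g$ is excluded) and $h'=h$.
\end{itemize}
In every case with $j\ne3$, the total $k'+\ell'$ exceeds $k+\ell$ by at most $2$, so the weight $2^{(k'+\ell')/2}$ is at most $2\cdot 2^{(k+\ell)/2}$. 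For $F_3$ both coordinates may grow by $2$, giving a factor of at most $4$. This accounts for the doubling from $6$ to $12$.

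\textbf{Number of target blocks.} The image of one block is split over a bounded number of target blocks $(k',\ell')$: at most $4$ for $F_4,\dots,F_7$, at most $3$ for $F_1,F_2$, and at most $9$ for $F_3$. I would avoid paying a factor equal to this number by noting that the different target blocks of $F_j(X_{k,\ell})$ are orthogonal restrictions of a single operator. So $\sum_{(k',\ell')}\norm{(F_j(X_{k,\ell}))_{k',\ell'}}_F$ is at most $\sqrt{\#\text{blocks}}\,\norm{F_j(X_{k,\ell})}_F$ by Cauchy--Schwarz. Alternatively, I could simply bound each piece by the full Frobenius norm. The constants $6$ and $12$ leave room, so either route works once the Frobenius bound is fixed.

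\textbf{Main obstacle: the Frobenius bound.} The crux is showing $\norm{F_j(Z)}_F \le O(\sqrt{s})\norm{Z}_F$, with the constant small enough to fit. I would apply \cref{lem:row-col-bound} with the involutions $\phi_{j,i}$ on $\Gamma$, for instance $\phi_{1,i}(g,h)=(igi,h)$ and $\phi_{4,i}(g,h)=(ig,ih)$. Each is an involution because $i^2=e$. For $F_4,\dots,F_7$, where every $i$ contributes, $Y$ coincides with $F_j$ whenever $\phi_i(x)\ne x$, and $\phi_i$ has no fixed points there since lengths change. The order $\prec$ is by $|g|+|h|$ (ties broken arbitrarily).

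The key combinatorial fact is that in the free product of $\mathbb{Z}_2$'s, left-multiplying by $i$ shortens $g$ only when $g$ begins with $i$, which is at most one letter. Hence $d^-(x)\le 2$: at most one $i$ can decrease $|g|$, and it must equal the first letter of both $g$ and $h$ to decrease the sum under $(ig,ih)$. Trivially $d^+(x)\le s$. This gives $\norm{F_j(Z)}_F\le 2\sqrt{2s}\norm{Z}_F$. Similar reasoning covers the conjugations $igi$, where reduction requires $g$ to begin and end appropriately, and the mixed forms.

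\textbf{Combining.} A factor of $2$ from the weights (or $4$ for $F_3$) times $2\sqrt{2s}$, plus a modest slack from splitting across target blocks, should give $6\sqrt{2s}$ (respectively $12\sqrt{2s}$). The care needed is in tie-breaking when $\phi_i(x)$ has the same total length as $x$, as happens for $F_3$, $F_5$ and $F_6$. There I would order by $(|g|+|h|, |g|)$ lexicographically and verify that $d^-$ stays bounded by a small constant.
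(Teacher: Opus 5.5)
Your architecture matches the paper's: triangle inequality over the blocks $X_{k,\ell}$, then \cref{lem:row-col-bound} for the involutions $\phi_{j,i}$, the order by $|g|+|h|$, and the counts $d^-\le 2$, $d^+\le s$. The gap is in the step that combines the weights with the splitting into target blocks. Done your way, it does not give the stated constants, and your claim that they ``leave room'' is false. Suppose you bound every weight $2^{(k'+\ell')/2}$ by its maximum, $2\cdot 2^{(k+\ell)/2}$ (resp.\ $4\cdot 2^{(k+\ell)/2}$ for $F_3$), and pay the square root of the number of target blocks. The per-block factors are then $2\cdot 2=4$ for $j\in\{4,\dots,7\}$, $2\sqrt3$ for $j\in\{1,2\}$, and $4\cdot 3=12$ for $j=3$. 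Multiplied by the Frobenius constant $2\sqrt{2s}$, this gives $8\sqrt{2s}$, about $6.93\sqrt{2s}$, and $24\sqrt{2s}$, all above the claim. Bounding each piece by the full Frobenius norm is worse still. The missing idea is to keep the weights inside Cauchy--Schwarz. With $\Delta_j$ the set of possible length shifts $(a,b)$,
\[
\sum_{(a,b)\in\Delta_j} 2^{(k+\ell+a+b)/2}\norm{F_j(X_{k,\ell})_{k+a,\ell+b}}_F \le 2^{(k+\ell)/2}\Bigl(\sum_{(a,b)\in\Delta_j}2^{a+b}\Bigr)^{1/2}\norm{F_j(X_{k,\ell})}_F .
\]
The downward shifts carry weight below $1$, so the bracket is $\sqrt{6.25}=2.5$ for $j\in\{4,\dots,7\}$, $\sqrt{5.25}<3$ for $j\in\{1,2\}$, and $5.25<6$ for $j=3$. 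This is exactly what produces $6\sqrt{2s}$ and $12\sqrt{2s}$.

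The same tightness bites for $F_3$: you need $d_3^-\le 2$, not merely ``a small constant.'' The naive count ($i$ is the first or last letter of $g$ or of $h$) gives $4$, and then even the weighted bound only yields $5.25\cdot 2\sqrt{4s}\approx 14.8\sqrt{2s}$. The paper gets $2$ as follows. If $g=e$ or $h=e$, $F_3$ acts like $F_2$ or $F_1$. Otherwise write $|igi|=|g|+2-2a_i$ and $|ihi|=|h|+2-2b_i$ with $a_i,b_i\ge 0$. Only the first and last letters of $g$ (resp.\ $h$) can have $a_i>0$ (resp.\ $b_i>0$), so $\sum_i a_i\le 2$ and $\sum_i b_i\le 2$. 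A non-increasing total length forces $a_i+b_i\ge 2$, so at most two such $i$ exist.

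A minor point: ties in $|g|+|h|$ occur for every $F_j$, not only $F_3,F_5,F_6$; for example $(ig,ih)$ ties when $i$ begins $g$ but not $h$. Your bound $d_4^-\le 2$ holds because at most two letters give a non-increasing total length. That count is what makes arbitrary tie-breaking harmless, so no lexicographic refinement is needed.
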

\begin{proof}
    Let us first describe the involutions $\phi_{j,1}, \ldots, \phi_{j,s} : \Gamma \to \Gamma$ corresponding to each of the $F_j$: for each $i\in [s]$, define
    \begin{align*}
        \phi_{1,i}(g, h) &= (igi, h) \ ,\quad
        \phi_{2,i}(g, h) = (g, ihi) \ ,\quad
        \phi_{3,i}(g, h) = (igi, ihi) \ ,\\
        \phi_{4,i}(g, h) &= (ig, ih) \ ,\quad
        \phi_{5,i}(g, h) = (ig, hi) \ ,\quad
        \phi_{6,i}(g, h) = (gi, ih) \ ,\quad\text{and}\quad\\
        \phi_{7,i}(g, h) &= (gi, hi) 
    \end{align*}
    such that $F_j(\ketbra{g}{h}) = \sum_{i: \phi_{j,i}(g,h) \neq (g,h)} e_{\phi_{j,i}(g,h)}$ for every $j\in[7]$. Note that we need bounds using the weighted norm while \cref{lem:row-col-bound} uses the Frobenius norm.
    Using the triangle inequality, we get $\norm{ F_j(X) }_w \leq \sum_{k,\ell\geq 0} \norm{F_j(X_{k,\ell})}_w$. For all $j$, we have
    \begin{align*}
        \norm{ F_j(X_{k,\ell}) }_w 
        = \sum_{k', \ell' \geq 0} 2^{(k' + \ell')/2} \norm{ F_j(X_{k,\ell})_{k', \ell'} }_F
        \leq \sum_{(a,b) \in \Delta_j} 2^{(k + a + \ell + b)/2} \norm{ F_j(X_{k,\ell})_{k + a,\ell + b} }_F
    \end{align*}
    where $\Delta_j$ is the set of pairs $(a,b)$ such that there exists $i\in [s]$ for which $\phi_{j,i}(g,h) = (g', h')$ with $|g'|-|g|=a$ and $|h'| - |h| = b$. In words, they are all the pairs of length change that some $\phi_{j,i}$ can cause. For completeness, we interpret $F_j(X_{k,\ell})_{k + a,\ell + b}$ as zero when the indices are negative.
    Using Cauchy--Schwarz, we get that
    \[
        \norm{ F_j(X_{k,\ell}) }_w
        \leq 2^{(k+\ell)/2} \paren*{ \sum_{(a,b) \in \Delta_j} 2^{a + b} }^{1/2} \paren*{ \sum_{a,b = -2}^2 \norm{ F_j(X_{k,\ell})_{k + a,\ell + b} }_F^2 }^{1/2}
    \]
    We bound the sum over $2^{a+b}$ first. Note that $\Delta_1 = \set{ (+2, 0), (0,0), (-2,0)}$, $\Delta_2$ is the symmetric, $\Delta_3 = \set{ (a,b): a,b\in \set{-2,0,+2}}$, and $\Delta_j = \set{ (a,b): a,b\in \set{+1,-1}}$ for $j = 4,5,6,7$. One can then easily verify that $\paren*{ \sum_{a,b \in \Delta_j} 2^{a + b} }^{1/2}$ is less than $3$ for $j \neq 3$, less than $6$ for $j=3$.

    For the second sum, observe that each term $\norm{ F_j(X_{k,\ell})_{k + a,\ell + b} }_F^2$, when expanded, corresponds to different coefficients of $F_j(X_{k,\ell})$, hence that $\sum_{(a,b)\in\Delta_j} \norm{ F_j(X_{k,\ell})_{k + a,\ell + b} }_F^2 = \norm{F_j(X_{k,\ell})}_F^2$. Overall, this means that for $j \neq 3$
    \[
        \norm{ F_j(X) }_w 
        \leq \sum_{k,\ell\geq 0} \norm{F_j(X_{k,\ell})}_w
        \leq 3\sum_{k,\ell\geq 0} 2^{(k+\ell)/2} \norm{F_j(X_{k,\ell})}_F \ .
    \]
    As \cref{lem:row-col-bound} implies that $\norm{F(X_{k,\ell})}_F \leq 2\sqrt{d^-_j d^+_j} \norm{X_{k,\ell}}_F$, it follows that for $j \neq 3$
    \[
        \norm{ F_j(X_{k,\ell}) }_w 
        \leq 6\sqrt{d^-_j d^+_j} \norm{ X_{k,\ell} }_w \ ,
    \]
    where $d^\pm_j = \sup_{g,h\in G} d_j^\pm(g,h)$ where $d_j^\pm(g,h)$ is defined as in \cref{lem:row-col-bound} with respect to $F_j$. The total order $\prec$ on $G \times G$ is defined such that $(\ketbra{g}{h}) \prec (g',h')$ if $|g| + |h| < |g'| + |h'|$ and ties are broken arbitrarily. For $j = 3$ the same reasoning applies after replacing $3$ by $6$ and we get that $\norm{ F_3(X_{k,\ell}) }_w 
        \leq 12\sqrt{d^-_3 d^+_3} \norm{ X_{k,\ell} }_w$.
    We argue that $d_j^- d_j^+ \leq 2s$ for each $j\in[7]$, which implies the claimed bound.
    
    Let us begin with $j = 1$. Hence, consider $d_1^-(g, h) = |\set{ i: \phi_{1,i}(g,h) \prec (g,h) }| = |\set{ i: igi \prec g }|$. If $g = e$ then $igi = e = g$ so $d^-(g,h) = 0$. If $|g| = 1$ either $g=i$ and $igi = iii = i = g$ or $g \neq i$ and $|igi| > |g|$; either way $d(g,h) = 0$. Finally, if $|g|\geq2$ then $g = xwy$ for some word $w$ (possibly empty). To have $|igi| \leq |g|$ we need $i = x$ or $i=y$. So there are at most two choices of $i$ that work; hence $d(g) \leq 2$ for all such $g$. On the other hand, we trivially have that $d_1^+(g,h) \leq s$. The case of $d_2^-d_2^+ \leq 2s$ is entirely symmetric.

    For $j = 3$, if $g = e$ or $h = e$, the same argument as for $j=1$ applies. Suppose $g, h \neq e$. If $i\in[s]$ is such that $(igi, ihi) \prec (g,h)$, then we have $|igi| + |ihi| \leq |g| + |h|$. Note that $|igi| \leq |g|+2$ and that the difference between both sides of the inequality is always even. Call $a_i, b_i \geq 0$ the numbers such that $|igi| = |g| + 2 - 2a_i$ and $|ihi| = |h| + 2 - 2b_i$. We have that $\sum_i a_i \leq 2$ and $\sum_i b_i \leq 2$ because $a_i \neq 0$ (resp.\ $b_i \neq 0$) only if the first or last letter of $g$ (resp.\ $h$) is $i$. The inequality then becomes $|igi| + |ihi| = |g| + |h| + 4 - 2(a_i + b_i) \leq |g| + |h|$. Summing over all $i$ for which $(igi, ihi) \prec (g,h)$, we get that $4d^-_3(g,h) \leq 2\sum_i (a_i + b_i) \leq 8$; hence, $d^{-}_3(g,h) \leq 2$.

    Consider now the case of $j=4$. We have $d_4^-(g,h) = |\set{i: (ig, ih) \prec (g,h)}|$. If $(ig, ih) \prec (g,h)$, then we must have $|ig| + |ih| \leq |g| + |h|$. If $g = h = e$, there is no such $i$. If exactly one of the two is $e$, say $g = e$ and $h \neq e$, the only letter $i$ for which $|ig| + |ih| = 1 + |ih| \leq |h|$ is the leftmost letter of $h$. If $g,h \neq e$, the only letters $i$ for which $|ig| + |ih| \leq |g| + |h|$ must be the leftmost letter of either $g$ or $h$. Hence $d_4^- \leq 2$ and trivially $d^+_4 \leq s$. The cases $j=5,6,7$ are symmetric where the letters that matter are some combinations of leftmost and rightmost letters of $g$ and $h$.
\end{proof}

\paragraph{Concluding the Proof of \cref{lem:contraction}.}
By combining the bound on $D(X)$ from \cref{lem:decompose-S} and the bounds on the norms of $F_j(X)$ from \cref{lem:bound-norm-F}, we get that
\begin{align*}
\norm{X - (2/s)S(X)}_w
&\leq \norm{X - (2/s)D(X)}_w + \frac{1}{4s}\sum_{j=1}^7 \norm{F_j(X)}_w \\
&\leq \paren*{ 3/4 + \frac{(6 \times 6 + 12)\sqrt{2s}}{4s} }\norm{X}_w 
= \lambda \norm{X}_w
\end{align*}
where the last equality is by definition of $\lambda$. Note that for $s \geq 10^4$, we have $\lambda < 1$.

\subsection{Vectorization and Positivity}
\label{sec:vectorization}
\label{sec:vec}

In this section, we consider the $V_n$ constructed in \cref{sec:fin-dep} and reformulate it differently so that (1) it can easily be approximated using finite matrices and (2) we ensure positivity of the resulting POVMs by construction.

\paragraph{Vectorization.} By \cref{thm:fin-dep-V}, we know that $V_n$ is in the span of the $\ketbra{g}{h}$ for $g,h\in G$. In particular, it is a \emph{finite} linear combination of these rank-one operators. This means that we can interpret $V_n \in \fS$ as a polynomial whose indeterminates are elements of $G \times G$. We emphasize that the multiplication of those indeterminates respects the group structure of $G \times G$. The purpose of this step is to provide an algebraic object whose multiplicative structure corresponds to that of $\mathrm{L}(\bC^N \otimes \bC^N)$. However, as we change the multiplicative structure, property \ref{T3} does not transfer automatically.

\begin{definition}
    \label{def:group-alg}
    For a group $G$, we denote by $\bC[G \times G]$ the set of polynomials whose coefficients are complex numbers and indeterminates are pairs $[g,h]$ with $g,h\in G$ such that $[g,h] \cdot [g', h'] = [gg', hh']$ for all $g,g',h,h' \in G$.
    For an operator $Z \in \fS = \operatorname{span}\set{ \ketbra{g}{h} }$, denote 
    \[
        Z' = \sum_{g,h} Z(g, h) [g,h] \in \bC[G \times G] \ .
        \]
\end{definition}

Note that $E' = [e,e]$ which is the neutral multiplicative element of $\bC[G \times G]$. Note that $L_i$ and $R_i$ do \emph{not} belong to $\fS$ hence $L_i'$ and $R_i'$ are not well-defined. However, their actions in the basis $\ketbra{g}{h}$ can be replicated by the appropriate left and right multiplications in $\bC[G \times G]$. Indeed, the multiplication on the left by $L_i$ acts in $\bC[G \times G]$ as the multiplication on the left by $[i,e]$, but the multiplication on the \emph{left} by $R_i$ acts in $\bC[G \times G]$ as the multiplication on the \emph{right} by $[i,e]$ because
\begin{align*}
(L_i\ketbra{g}{h})' &= (\ketbra{ig}{h})' = [ig,h] = [i,e] \cdot [g,h] \\
(R_i\ketbra{g}{h})' &= (\ketbra{gi}{h})' = [gi,h] = [g,h] \cdot [i, e] \ .
\end{align*}
As such, the operator $T_{\pm i}^{(n)} = P_{\pm i} V_n P_{\mp i}$ changes slightly when seen in $\bC[G \times G]$. This is captured by \cref{lem:vec-projections}.

\begin{lemma}
    \label{lem:vec-projections}
    For all $i\in[s]$ we have that
    \[
    (P_{\pm i} V_n P_{\mp i})' = K_{\pm i} V_n' K_{\pm i}
    \quad\text{where}\quad
    K_{\pm i} = K_{\pm i}^* = ([e,e] \pm [i,e])([e,e] \mp [e,i])/4 \ . 
    \]
\end{lemma}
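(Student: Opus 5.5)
The identity is linear in $V_n$ and both sides are determined by their values on basis elements. Since $V_n\in\fS$ by \cref{thm:fin-dep-V}, the plan is to check it on each rank-one operator $\ketbra{g}{h}$ and then extend by linearity. The map $Z\mapsto Z'$ is linear, and $Z\mapsto K_{\pm i}ZK_{\pm i}$ is linear in $\bC[G\times G]$, so nothing beyond the basis check is needed for the reduction.

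\textbf{Step 1: expand the left-hand side.} I will treat the sign $+$; the sign $-$ is identical after flipping the signs. The operator $P_{-i}$ is self-adjoint with real coefficients by \cref{lem:orth-projections}, so
\[
P_{+i}\ketbra{g}{h}P_{-i}=(P_{+i}\ket{g})(P_{-i}\ket{h})^\dagger .
\]
The formula $P_{\pm i}\ket{g}=(\ket{g}+\ket{igi}\pm\ket{ig}\pm\ket{gi})/4$, already used in the proof of \cref{thm:fin-dep-V}, then gives
\[
P_{+i}\ketbra{g}{h}P_{-i}=\tfrac1{16}\sum_{g'\in\{g,igi,ig,gi\}}\ \sum_{h'\in\{h,ihi,ih,hi\}}\sigma(h')\ketbra{g'}{h'} ,
\]
where $\sigma(h')=+1$ for $h'\in\{h,ihi\}$ and $\sigma(h')=-1$ for $h'\in\{ih,hi\}$. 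Coincidences such as $igi=g$ cause no problem, because we just add coefficients on both sides in the same way.

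\textbf{Step 2: expand the right-hand side.} The group $G\times G$ has componentwise multiplication, so $[a,b][c,d]=[ac,bd]$. Hence $[i,e]$ and $[e,i]$ commute, and the factor $([e,e]+[i,e])/2$ acts only on the first coordinate while $([e,e]-[e,i])/2$ acts only on the second. This gives
\[
K_{+i}[g,h]K_{+i}=\tfrac1{16}\bigl[(e+i)g(e+i),\,(e-i)h(e-i)\bigr],
\]
where the bracket is extended bilinearly. Expanding, the first coordinate is $g+ig+gi+igi$ and the second is $h-ih-hi+ihi$, exactly matching Step 1 after applying $'$. This is precisely the left/right correspondence noted before the lemma: $L_i$ becomes left multiplication by $[i,e]$ and $R_i$ becomes right multiplication by $[i,e]$. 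The bra side is the reason $[e,i]$ appears in the second coordinate.

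\textbf{Step 3: self-adjointness.} I will use the involution $[g,h]^*=[g^{-1},h^{-1}]$ with conjugated coefficients. Since $i^{-1}=i$, both $[i,e]$ and $[e,i]$ are self-adjoint, and they commute, so the product $K_{\pm i}$ is self-adjoint.

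The only real obstacle is bookkeeping. One must apply the right multiplication by $R_i$ on the bra side correctly, and check that the ordering of the factors inside $K_{\pm i}$ does not matter. Commutativity of the coordinate actions in $G\times G$ settles the latter.
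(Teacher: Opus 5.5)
Your proposal is correct and follows the paper's proof. You reduce by linearity to the basis elements $\ketbra{g}{h}$, match the expansion of $P_{\pm i}\ketbra{g}{h}P_{\mp i}$ against $K_{\pm i}[g,h]K_{\pm i}$ using that $[i,e]$ and $[e,i]$ commute, and get self-adjointness of $K_{\pm i}$ from $i^{-1}=i$. The paper computes the left factor $P_{\pm i}$ and the right factor $P_{\mp i}$ separately as two-sided multiplications before combining them, whereas you expand the whole product at once; the difference is cosmetic.
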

\begin{proof}
    The adjoint in $\bC[G \times G]$ is defined as $X^* = \sum_{g,h} \overline{X(g,h)}[g^{-1}, h^{-1}]$ hence it should be clear that $K_{\pm i}^* = K_{\pm i}$ as $i^{-1} = i$ for each $i\in[s]$. Fix $g,h \in G$; we have 
    \begin{align*}
    (P_{\pm i}\ketbra{g}{h})' &= \frac{1}{4}(\ketbra{g}{h} + \ketbra{igi}{h} \pm \ketbra{ig}{h} \pm \ketbra{gi}{h})' = \frac{1}{4}([e,e] \pm [i,e])[g,h]([e,e] \pm [i,e])\\
    (\ketbra{g}{h}P_{\pm i})' &= \frac{1}{4}(\ketbra{g}{h} + \ketbra{g}{ihi} \pm \ketbra{g}{ih} \pm \ketbra{g}{hi})' = \frac{1}{4}([e,e] \pm [e,i])[g,h]([e,e] \pm [e,i]) \ .
    \end{align*}
    As $([e,e] + [i,e])$ and $([e,e] - [e,i])$ commute, we get that $(P_{\pm i} \ketbra{g}{h} P_{\mp i})' = K_{\pm i} [g,h] K_{\pm i}$ for all $g,h\in G$.
    Since $(X + \lambda Y)' = X' + \lambda Y'$, the lemma follows by linearity
    \[
    (P_{\pm i} V_n P_{\mp i})' 
    = \sum_{g,h \in G} V_n(g,h) (P_{\pm i} \ketbra{g}{h}P_{\mp i})'
    = \sum_{g,h \in G} V_n(g,h) K_{\pm i} [g,h] K_{\pm i} 
    = K_{\pm i} V_n' K_{\pm i} \ . \qedhere
    \]
\end{proof}

\paragraph{Preserving \ref{T1} and \ref{T2}.}
At this stage, it may not be clear what we have gained or lost from the vectorization step. Property \ref{T1} guaranteed that the output coloring was always proper. This aspect is preserved by the vectorization. Indeed, in \cref{sec:finite-dim}, we replace each $[g,h]$ by $U_g \otimes \bar{U}_h$ where the $U_g$ and $U_h$ are finite matrices. From the definition of $K_{\pm i}$, one then directly gets that it has zero energy, hence that the output coloring would always be proper. Recall, however, that the $K_{\pm i} V_n' K_{\pm i}$ do \emph{not} sum to the identity, hence do not define a proper measurement yet.

On the other hand, \ref{T2} (from \cref{thm:fin-dep-V}-\ref{V2}) ensures that the sum of the $K_{\pm i} V_n' K_{\pm i}$ can be made arbitrarily close to the identity. To make this precise, let us define $\norm{X}_F = \paren*{ \sum_{g,h} |X(g,h)|^2 }^{1/2}$ for all $X = \sum_{g,h} X(g,h)[g,h] \in \bC[G \times G]$ so that $\norm{Z}_F = \norm{Z'}_F$ for all $Z\in \fS$. In particular, by taking $Z = E - S(V_n)$ we have that
\[
\norm{[e,e]- S'(V_n')}_F \leq \lambda^n 
\quad\text{where}\quad
S'(X) = \sum_{i\in[s]} K_{+i} X K_{+i} + K_{-i} X K_{-i} \ .
\]
Overall, the vectorization preserved \ref{T1} and \ref{T2} more or less automatically. However, this is \emph{not} the case for \ref{T3} as we significantly changed the multiplicative structure. The remainder of this section is dedicated to enforcing positivity.

\paragraph{Positivity.}
Instead of proving that $K_{\pm i} V_n' K_{\pm i}$ is positive, we search for some $W_n \in \bC[G \times G]$ such that $W_n^2 \approx V_n'$ and use $K_{\pm i} W_n^2 K_{\pm i}$ in \cref{sec:finite-dim}. Doing so introduces some additional error, which we make desirably small as $n$ grows. The idea is to use the following Taylor approximation $\sqrt{1 + x} = 1 + x/2 - x^2/8 + x^3/16 - \ldots$ that is accurate for $x$ close to zero, and where $1$ is replaced by $[e,e]$ and $x$ by $(s/2)V_n' - [e,e]$ which we know to be small.

For the Taylor approximation to work, we need (1) a sub-multiplicative norm $\norm{ \cdot }_w$ and (2) that $\norm{ (s/2)V_n' - [e,e] }_w$ is small. Let us now prove that our weighted norm has the desired submultiplicativity. We emphasize that henceforth all products are taken in $\bC[G \times G]$.

\begin{lemma}
    \label{lem:norm-submul}
    For every $X, Y\in \bC[G \times G]$ we have that $\norm{ X \cdot Y }_w \leq 4\norm{ X }_w \norm{ Y }_w$.
\end{lemma}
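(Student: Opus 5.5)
We reduce to the graded pieces. Write $X = \sum_{k,\ell} X_{k,\ell}$ and $Y = \sum_{k',\ell'} Y_{k',\ell'}$, where $X_{k,\ell}$ collects the monomials $[g,h]$ with $|g|=k$ and $|h|=\ell$. By bilinearity and the triangle inequality,
\[
\norm{X\cdot Y}_w \leq \sum_{k,\ell,k',\ell'} \norm{X_{k,\ell}\cdot Y_{k',\ell'}}_w .
\]
So it suffices to prove a bound for homogeneous pieces,
\[
\norm{X_{k,\ell}\cdot Y_{k',\ell'}}_w \leq 4\cdot 2^{(k+\ell+k'+\ell')/2}\norm{X_{k,\ell}}_F\norm{Y_{k',\ell'}}_F,
\]
because summing this over all indices factors as $4\norm{X}_w\norm{Y}_w$.

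For the homogeneous bound, first handle the group structure. The product $[g,h]\cdot[g',h'] = [gg',hh']$ factors as a tensor product of two copies of the group algebra $\bC[G]$. So first analyze, in $\bC[G]$, the product of $a$ supported on words of length $k$ with $b$ supported on words of length $k'$. Each product $gg'$ with $|g|=k$, $|g'|=k'$ is obtained by cancelling some number $t\le\min(k,k')$ of letters: $g = ux$, $g' = x^{-1}v$ with $|x|=t$, and $gg' = uv$ reduced, of length $k+k'-2t$. The key point is a free-group-style Haagerup inequality. For each fixed $t$, the component $(a\cdot b)_t$ has Frobenius norm at most $\norm{a}_F\norm{b}_F$. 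To see this, index $g$ by $(u,x)$ and $g'$ by $(x,v)$; the coefficient of $uv$ is $\sum_x a(ux)b(x^{-1}v)$, a matrix product. Moreover, the map $(u,v)\mapsto uv$ is injective for fixed $t$, given the lengths and the non-cancellation condition, since $uv$ determines $u$ as its first $k-t$ letters. Cauchy–Schwarz then gives the bound.

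The weight is where the factor $4$ comes from. The $t$-component lives at length $k+k'-2t$, so its weight is $2^{(k+k'-2t)/2}\le 2^{(k+k')/2}2^{-t}$. Doing this in both coordinates (cancellations $t$ and $t'$) gives
\[
\norm{X_{k,\ell}\cdot Y_{k',\ell'}}_w \le \sum_{t,t'\ge0} 2^{(k+k'+\ell+\ell')/2}2^{-t-t'}\norm{X_{k,\ell}}_F\norm{Y_{k',\ell'}}_F .
\]
The geometric sums contribute $\sum_t 2^{-t} \le 2$ in each coordinate, for a total factor of $4$. This is exactly the desired homogeneous bound.

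The step that needs care is the two-coordinate version of the Haagerup-type bound: for fixed $(t,t')$, the $(t,t')$-cancellation component of $X_{k,\ell}\cdot Y_{k',\ell'}$ should have Frobenius norm at most $\norm{X_{k,\ell}}_F\norm{Y_{k',\ell'}}_F$. Treat it as one matrix product. Index the coefficients of $X_{k,\ell}$ by $((u_1,u_2),(x_1,x_2))$ and those of $Y_{k',\ell'}$ by $((x_1,x_2),(v_1,v_2))$, where the $u_i, v_i$ satisfy the reducedness constraints. Then the component is the matrix product, and $\norm{AB}_F\le\norm{A}_F\norm{B}_F$ applies. Different $(u,v)$ give different output monomials, so no extra collisions inflate the norm.

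One subtlety remains. Different $t$ may yield the same output word at the same length only if $t$ agrees, since the length $k+k'-2t$ determines $t$. Hence the components for distinct $(t,t')$ lie in distinct graded pieces $(k+k'-2t,\ \ell+\ell'-2t')$, which is consistent with summing weights separately. In the involution group, "reduced" means no two equal adjacent letters, and $x^{-1}$ is simply the reversed word, so the argument is the same as for free groups.
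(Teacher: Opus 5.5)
Your proof is correct and follows essentially the same route as the paper. Both arguments reduce to homogeneous pieces $X_{k,\ell}\cdot Y_{p,q}$, split the product by the number of cancelled letters in each coordinate (each such component lies in the single graded piece $(k+p-2a,\ \ell+q-2b)$), bound its Frobenius norm by $\|X_{k,\ell}\|_F\|Y_{p,q}\|_F$ via Cauchy--Schwarz over the cancelled parts (your matrix-product formulation is the same computation), and sum the weight factor $2^{-(a+b)}$ over $a,b\geq 0$ to get the constant $4$.
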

\begin{proof}
    By the triangle inequality, we have that $\norm{X \cdot Y}_w \leq \sum_{k,\ell, p, q \geq 0} \norm{ X_{k,\ell} \cdot Y_{p,q} }_w$ hence we focus on bounding the norm of each $X_{k,\ell} \cdot Y_{p,q}$. We expand and reorder the coefficients of this product in terms where all the indeterminates have the same length:
    \begin{align*}
    X_{k,\ell} \cdot Y_{p,q} 
    &= \sum_{g,g',h,h'} X_{k,\ell}(g,h) Y_{p,q}(g',h')[gg',hh']
    = \sum_{a,b \geq 0} Z_{a,b}
    \shortintertext{where}
    Z_{a,b} &= \sum_{\stackrel{|gg'| = k + p - 2a}{|hh'| = \ell + q - 2b}} 
        X_{k,\ell}(g,h) Y_{p,q}(g',h')[gg',hh'] 
    \end{align*}
    for all $0 \leq a \leq \min(k,p)$ and $0 \leq b \leq \min(\ell,q)$.
    All the indeterminates $[gg',hh']$ in $Z_{a,b}$ have $|gg'| = k + p - 2a$ and $|hh'| = \ell + q - 2b$ hence $Z_{a,b} = (Z_{a,b})_{k + p - 2a, \ell + q - 2b}$ meaning that $\norm{ Z_{a,b} }_w = 2^{(k+\ell)/2} 2^{(p+q)/2} 2^{-(a+b)} \norm{ Z_{a,b} }_F$.
    
    Let us now bound $\sum_{a,b \geq 0} \norm{ Z_{a,b} }_F$. To bound this sum, we express the coefficients of $Z_{a,b}$ explicitly. All the terms that contribute to the coefficient $[gg', hh']$ are of the form $X_{k,\ell}(gr^{-1}, ht^{-1}) Y_{p,q}(rg', th')$ with $|r| = a$ and $|t| = b$, where $r$ and $t$ are the parts that got erased in $gg'$ and $hh'$. In other words, for all $g,g',h,h'$ such that $|g| = k - a$, $|g'| = p - a$, $|h| = \ell - b$, $|h'| = q - b$, $|gg'| = |g| + |g'|$, and $|hh'| = |h| + |h'|$, we have that
    \[
    Z_{a,b}(gg', hh') = \sum_{r,t} X_{k,\ell}(gr^{-1}, ht^{-1}) Y_{p,q}(rg', th')
    \]
    where the sum is over the $r,t$ with $|r| = a$ and $|t| = b$.
    By applying Cauchy-Schwarz, we get
    \[
    |Z_{a,b}(gg', hh')|^2 \leq 
        \paren*{ \sum_{r,t} |X_{k,\ell}(gr^{-1}, ht^{-1})|^2 }
        \paren*{ \sum_{r,t} |Y_{p,q}(rg', th')|^2 }
    \]
    where the sum is over $|r| = a$ and $|t| = b$. First, observe that $\norm{ Z_{a,b} }_F^2 = \sum_{g,g',h,h'} |Z_{a,b}(gg', hh')|^2$ where the sum is over all $|g| = k-a$, $|g'| = p-a$, $|h| = \ell - b$, $|h'| = q - b$, $|gg'| = |g| + |g'|$, and $|hh'| = |h| + |h'|$. Indeed, each non-zero coefficient $Z_{a,b}(\alpha,\beta)$ is such that $|\alpha| = k + p - 2a$ and $|\beta| = \ell + q - 2b$, which means there is a unique way to write $\alpha = gg'$ and $\beta = hh'$ because for the given length there cannot be any cancellations in $gg'$ or $hh'$, i.e., $|\alpha| = |g| + |g'|$ and $|\beta| = |h| + |h'|$.
    Using that in the upper bound above the first factor depends only on $g,h$ and the second factor only on $g', h'$, we get
    \begin{align*}
    \norm{ Z_{a,b} }_F^2
    &\leq \paren*{\sum_{g,h} \sum_{r,t} |X_{k,\ell}(gr^{-1}, ht^{-1})|^2 }
         \paren*{\sum_{g',h'} \sum_{r,t} |Y_{p,q}(rg', th')|^2 }
    \end{align*}
    Finally, we claim that those two sums are respectively $\norm{ X_{k,\ell} }_F^2$ and $\norm{ Y_{p,q} }_F^2$. Indeed, each non-zero coefficient $X_{k,\ell}(\alpha,\beta)$ has $|\alpha| = k$ and $|\beta| = \ell$, which means that there is a unique decomposition $\alpha = gr^{-1}$ and $\beta = ht^{-1}$ with $|g| = k - a$, $|r| = a$, $|h| = \ell - b$, and $|t| = b$ because there cannot be any cancellations in $gr^{-1}$ or $ht^{-1}$, i.e., $|\alpha| = |g| + |r|$ and $|\beta| = |h| + |t|$. The same argument applies to the second sum with the $Y_{p,q}(\alpha,\beta)$.
    
    Overall, we have that
    \begin{align*}
    \norm{ X_{k,\ell} \cdot Y_{p,q} }_w 
    &\leq 2^{(k+\ell)/2} 2^{(p+q)/2} \sum_{a,b \geq 0} 2^{-(a+b)} \norm{ Z_{a,b} }_F \\
    &\leq 2^{(k+\ell)/2} 2^{(p+q)/2} \paren*{ \sum_{a,b \geq 0} 2^{-(a+b)} }
         \norm{ X_{k,\ell} }_F \norm{ Y_{p,q} }_F \\
    &= 2^{(k+\ell)/2} 2^{(p+q)/2} \cdot 4 \cdot
        \norm{ X_{k,\ell} }_F \norm{ Y_{p,q} }_F
    \end{align*}
    Hence 
    \[
    \norm{X \cdot Y}_w 
    \leq \sum_{k,\ell,p,q \geq 0} \norm{ X_{k,\ell} \cdot Y_{p,q} }_w 
    \leq 4 \cdot \sum_{k,\ell,p,q \geq 0} 2^{(k+\ell)/2} 2^{(p+q)/2} \norm{ X_{k,\ell} }_F \norm{ Y_{p, q} }_F
    = 4\norm{ X }_w \norm{ Y }_w
    \] 
    which concludes the proof.
\end{proof}

\begin{proposition}
    \label{prop:sqrt-taylor}
    Let $Y \in \bC[G \times G]$ and $C = Y - [e,e]$. If $\norm{C}_w < 1/4$, for any $k \geq 0$ and
    \[
    X = \sum_{j=0}^k \binom{1/2}{j} C^j
    \quad\text{we have that}\quad
    \norm{ Y - X^2 }_w \to 0 
    \quad\text{as}\quad k \to \infty
    \]
    where $\binom{\alpha}{j} = \alpha(\alpha-1)(\alpha-2)\ldots(\alpha-j+1)/j!$.
\end{proposition}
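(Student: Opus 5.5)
The plan is to treat this as a formal power series identity in the commutative subalgebra generated by $[e,e]$ and $C$. Everything in sight is a polynomial in $C$, so all products commute. We then control the tail using \cref{lem:norm-submul}. Write $c_j = \binom{1/2}{j}$ and $X_k = \sum_{j=0}^k c_j C^j$. The scalar identity $\bigl(\sum_{j\ge0} c_j x^j\bigr)^2 = 1+x$ holds as formal power series. So the Cauchy product coefficients satisfy $\sum_{j+j'=m} c_j c_{j'} = 1$ for $m=0,1$ and $=0$ for $m\ge 2$. Expanding the square of $X_k$ in the commutative algebra gives
\[
X_k^2 = \sum_{m=0}^{2k} \Bigl(\sum_{\substack{j+j'=m\\ j,j'\le k}} c_j c_{j'}\Bigr) C^m .
\]
For $m\le k$ the inner sum is the full Cauchy coefficient. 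Hence the terms with $m\le k$ contribute exactly $[e,e] + C = Y$, using $C^0=[e,e]$, the neutral element. Therefore
\[
Y - X_k^2 = -\sum_{m=k+1}^{2k} \Bigl(\sum_{\substack{j+j'=m\\ j,j'\le k}} c_j c_{j'}\Bigr) C^m .
\]

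Next we bound powers. By induction using \cref{lem:norm-submul}, $\norm{C^m}_w \le 4^{m-1}\norm{C}_w^m \le (4\norm{C}_w)^m$ for $m\ge1$. Set $\rho = 4\norm{C}_w < 1$. The binomial coefficients satisfy $|c_j|\le 1$ for all $j$; indeed $|c_{j+1}/c_j| = |1/2 - j|/(j+1) \le 1$. Hence the inner coefficient has absolute value at most $m+1$. The triangle inequality then gives
\[
\norm{Y - X_k^2}_w \le \sum_{m=k+1}^{\infty} (m+1)\rho^m ,
\]
which is the tail of a convergent series since $\rho<1$. It therefore tends to $0$ as $k\to\infty$.

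The only delicate point is making sure the algebraic manipulation is legitimate. We need $\bC[G\times G]$ to be an associative algebra with unit $[e,e]$; this is clear from the group structure. We also need powers of the single element $C$ to commute with each other, which is trivial. Note that we must not assume commutativity of $\bC[G\times G]$ itself, which fails. The submultiplicativity constant $4$ is exactly what forces the hypothesis $\norm{C}_w<1/4$, so no sharper estimate is needed. If one wanted a rate, the same bound gives $\norm{Y-X_k^2}_w = O((k+2)\rho^{k+1}/(1-\rho)^2)$.
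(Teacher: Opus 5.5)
Your proof is correct and follows essentially the same route as the paper's (only sketched) argument: expand $X^2$ for finite $k$, cancel the terms of degree at most $k$ using the scalar identity $\bigl(\sum_j \binom{1/2}{j}x^j\bigr)^2 = 1+x$, and bound the leftover powers $C^m$, $m>k$, via the submultiplicativity of \cref{lem:norm-submul}. Your coefficient bound $m+1$ gives only an $O(k\rho^k)$ rate rather than the paper's stated $(4\norm{C}_w)^k$. You would recover the paper's rate by using $\sum_j \bigl|\binom{1/2}{j}\bigr| = 2$ instead. This is immaterial, since \cref{lem:positive-squares} only uses convergence.
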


One can prove \cref{prop:sqrt-taylor} by expanding $X^2$ for a finite $k$ and proving by induction that $\norm{ Y - X^2 }_w \leq (4\norm{ C }_w)^k$. We omit the proof for conciseness.

\begin{tbox}
\begin{lemma}
    \label{lem:positive-squares}
    For $s \geq 10^4$ and $n \geq 1$, there exists a self-adjoint $W_n \in \bC[G\times G]$ such that
    \[
    \norm{V_n' - W_n^2}_F \leq \lambda^n \ .
    \]
\end{lemma}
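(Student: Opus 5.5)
We will take $Y = (s/2)V_n'$, apply \cref{prop:sqrt-taylor} to obtain a self-adjoint $X$ with $X^2 \approx Y$, and set $W_n = \sqrt{2/s}\,X$. The work is to check the hypotheses of \cref{prop:sqrt-taylor} and that $W_n$ is self-adjoint.

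\textbf{Step 1: the hypothesis $\norm{C}_w < 1/4$.} Set $C = (s/2)V_n' - [e,e]$. Unrolling \cref{eq:V-rec} from $V_0 = 0$ gives $V_n = (2/s)\sum_{m=0}^{n-1} R_m$ with $R_m = E - S(V_m)$ and $R_0 = E$. Hence
\[
(s/2)V_n - E = \sum_{m=1}^{n-1} R_m .
\]
The proof of \cref{thm:fin-dep-V} already shows $\norm{R_m}_w \leq \lambda^m$, and $\prime$ preserves the weighted norm since it preserves coefficients. Therefore
\[
\norm{C}_w \leq \sum_{m\geq 1}\lambda^m = \frac{\lambda}{1-\lambda}.
\]
For $s \geq 10^4$ we have $\lambda = 3/4 + 24/\sqrt{2s} \approx 0.92$, so this bound is about $11$, far above $1/4$. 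This is the main obstacle. \cref{lem:contraction} on its own only contracts by $\lambda$, and the first correction $R_1 = E - (2/s)S(E)$ need not be small in the weighted norm.

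What we would try first is a sharper estimate for $R_1$ computed directly. We have $S(E) = (s/8)E + \tfrac18\sum_i(\ldots)$ plus off-diagonal terms of total weighted norm $O(\sqrt s)$ by \cref{lem:bound-norm-F}. The diagonal coefficient of $E$ is $d_{e,e} = s/8 + 3s/8 = s/2$, so $(2/s)D(E) = E$ exactly. Thus $\norm{R_1}_w = O(1/\sqrt{s})$.

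The same reasoning should apply at every step, since the diagonal contraction is only $3/4$ in the worst case. If it does not, we fall back on reorganizing the series. Replace the $3/4$ factor by a bound using that $R_m$ for $m \geq 1$ has some structure (e.g.\ $d_{g,h}$ close to $s/2$ on the relevant support), and so obtain $\norm{R_m}_w \leq (c/\sqrt{s})\mu^{m-1}$ with $c/(\sqrt{s}(1-\mu)) < 1/4$ for $s \geq 10^4$. If this cannot be arranged, we would instead apply \cref{prop:sqrt-taylor} to a rescaled $Y = \alpha V_n'$ with $\alpha$ chosen to center the spectrum. Either way, the target is $\norm{C}_w < 1/4$ uniformly in $n$.

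\textbf{Step 2: self-adjointness.} Every $V_n$ is self-adjoint as an operator. $E$ is self-adjoint, $S$ preserves self-adjointness since the $P_{\pm i}$ are self-adjoint, and the recursion is real-linear. In $\bC[G\times G]$ this means $V_n' = \sum \overline{V_n(h,g)}[g,h]$, and self-adjointness there requires $V_n(g,h) = \overline{V_n(g^{-1},h^{-1})}$. We would instead verify directly that $V_n'^* = V_n'$ by induction. The base case is $[e,e]^* = [e,e]$. For the step, by \cref{lem:vec-projections} we have $S'(X)^* = \sum K X^* K$, since $K_{\pm i}^* = K_{\pm i}$, so $V_{n+1}'$ is self-adjoint whenever $V_n'$ is. Hence $C^* = C$, and every truncated Taylor polynomial $X = \sum_j \binom{1/2}{j}C^j$ with real coefficients is self-adjoint, so $W_n = \sqrt{2/s}\,X$ is too.

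\textbf{Step 3: choosing $k$.} By \cref{prop:sqrt-taylor} and \cref{prop:weighted-norm},
\[
\norm{V_n' - W_n^2}_F \leq (2/s)\norm{Y - X^2}_w \to 0 \quad\text{as } k \to \infty.
\]
We pick $k$, depending on $n$, large enough that this is at most $\lambda^n$, which gives the bound in the lemma.
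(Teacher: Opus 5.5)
Your overall plan is the paper's, and Steps 2 and 3 are fine: you apply \cref{prop:sqrt-taylor} to $Y=(s/2)V_n'$, take $W_n=\sqrt{2/s}\,X$, and get self-adjointness by induction from $K_{\pm i}^*=K_{\pm i}$. The gap is Step 1. You never establish $\norm{C}_w<1/4$; the step ends with three conditional fallbacks (``should apply'', ``if it does not'', ``if this cannot be arranged''), none of which is carried out. Without this bound \cref{prop:sqrt-taylor} cannot be invoked, so the proof does not go through as written.

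You already have the missing observation. Since $R_{m+1}=R_m-(2/s)S(R_m)$ for every $m$, \cref{lem:contraction} gives $\norm{R_m}_w\le\lambda^{m-1}\norm{R_1}_w$. The point is to start the geometric decay at $R_1$, not at $R_0=E$. Nothing has to be redone ``at every step'', and no further structure of $R_m$ is needed. Hence
\[
\norm{C}_w\le\sum_{m=1}^{n-1}\lambda^{m-1}\norm{R_1}_w\le\frac{\norm{R_1}_w}{1-\lambda}.
\]

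The constant in $\norm{R_1}_w=O(1/\sqrt s)$ also matters, and the route you suggest through \cref{lem:bound-norm-F} is too weak. It yields $\norm{R_1}_w\le 12\sqrt2/\sqrt s$, hence only $\norm{C}_w$ of roughly $2.1$ at $s=10^4$. You need the exact computation:
\begin{itemize}
\item Since $iei=e$, the maps $F_1,F_2,F_3$ vanish on $E$, and $F_4,\dots,F_7$ each give $\sum_i\ketbra{i}{i}$.
\item So $S(E)=(s/2)E-\tfrac12\sum_i\ketbra{i}{i}$ and $R_1=\tfrac1s\sum_i\ketbra{i}{i}$.
\item $R_1$ lies entirely in the $(1,1)$ block, so $\norm{R_1}_w=2\cdot\tfrac1s\cdot\sqrt s=2/\sqrt s$.
\end{itemize}
Then $\norm{C}_w\le 2/((1-\lambda)\sqrt s)=8\sqrt2/(\sqrt{2s}-96)$. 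This is about $0.249<1/4$ at $s=10^4$ and decreases in $s$. The margin is thin, so the exact value of $\norm{R_1}_w$ is genuinely required.

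This completed argument is exactly the paper's. There it is phrased as the affine recursion $C_n=\bigl(C_{n-1}-\tfrac2sS'(C_{n-1})\bigr)+\bigl([e,e]-\tfrac2sS'([e,e])\bigr)$ with $C_1=0$.
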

\end{tbox}
\begin{proof}
    By \cref{prop:sqrt-taylor} with $Y_n = (s/2)V_n'$ and $C_n = Y_n - [e,e]$, if $\norm{C_n}_w < 1/4$, for a large enough $k \geq 0$ the sum $X_n = \sum_{j=0}^k \binom{1/2}{j} C_n^j$ is such that
    $\norm{ (s/2)V_n' - X_n^2 }_F \leq \norm{ (s/2)V_n' - X_n^2 }_w \leq (s/2) \cdot \lambda^n$. We henceforth argue that $\norm{C_n}_w \leq 1/4$, and thus we can take $W_n = \sqrt{2/s}X_n$ and get the claimed bound as
    \[
    \norm{V_n' - W_n^2}_F = \norm{V_n' - (2/s)X_n^2}_F 
    = (2/s)\norm{(s/2)V_n' - X_n^2}_F \leq \lambda^n
    \]
    The fact that $W_n$ is self-adjoint follows directly from $V_n'$ being self adjoint. Prove that $V_n'$ is self-adjoint by induction, using that $V_0' = 0$ is self-adjoint and that $V_{n+1}' = V_n' + (2/s)([e,e]- S'(V_n'))$ preserves self-adjointness because the $K_{\pm i}$ are self-adjoint (\cref{lem:vec-projections}).
    
    It remains to prove that $\norm{C_n}_w < 1/4$. Using the recursive definition of $V_n$ and $C_n$ (\cref{eq:V-rec}), we can rewrite $C_n$ as follows
    \begin{align*}
    \norm{C_n}_w = \norm{ (s/2)V_n' - [e,e]}_w
    &= \norm{ (s/2)V_{n-1}' - S'(V_{n-1}') }_w \tag{using \cref{eq:V-rec}} \\
    &= \norm{ C_{n-1} + [e,e]- (2/s)S'(C_{n-1} + [e,e]) }_w \tag{by definition of $C_{n-1}$}  \\
    &\leq \norm{ C_{n-1} - (2/s)S'(C_{n-1}) }_w + \norm{ [e,e]- (2/s)S'([e,e]) }_w \tag{triangle inequality}\\
    &\leq \lambda\norm{C_{n-1}}_w + \norm{ [e,e]- (2/s)S'([e,e]) }_w \ . \tag{by \cref{lem:contraction}}
    \end{align*}
    We have that $S'([e,e]) = (s/2)[e,e]- (1/2)\sum_i [i,i]$ hence
    \[
    \norm{[e,e]- (2/s)S'([e,e])}_w = (1/s)\norm*{\sum_i [i,i]}_w \leq \frac{2}{\sqrt{s}} \ .
    \]
    This means that 
    \[
    \norm{ C_n }_w \leq \lambda^n\norm{C_1}_w + \frac{2}{\sqrt{s}}\sum_{j=0}^n \lambda^j
    \leq \frac{2}{(1 - \lambda)\sqrt{s}}
    \]
    where the last inequality uses that $C_1 = 0$ and $\lambda < 1$ for $s \geq 10^4$. By expanding the definition of $\lambda$ from \cref{lem:contraction}, we get that
    \[
    \norm{ C_n }_w \leq \frac{8\sqrt{2}}{\sqrt{2s} - 96} < 1/4 \ ,
    \] 
    where the last inequality holds for $s \geq 10^4$.
\end{proof}

We can now prove the main result of \cref{sec:inf}.

\begin{proof}[Proof of \cref{lem:W}]
    Left or right multiplication by a group element in $G\times G$ permutes the
    coefficients of a polynomial and therefore preserves its Frobenius norm.
    Expanding the two factors $K_{\sigma i}$ gives
    \[
    \|K_{\sigma i}ZK_{\sigma i}\|_F\leq
    \frac{1}{16}
    \sum_{\substack{
    \eta,\xi \in \{[e,e],-\sigma[e,g_i],\\
    \sigma[g_i,e],-[g_i,g_i]\}
    }}
    \| \eta Z \xi \|_F = \|Z\|_F,
    \]
    and thus
    \[
    \|S'(Z)\|_F\leq \sum_{i\in[s]}\sum_{\sigma\in\{-1,1\}}\|K_{\sigma i}ZK_{\sigma i}\|_F \leq 2s\|Z\|_F.
    \]
    Since $s\geq 10^4$, \Cref{thm:fin-dep-V,lem:positive-squares}
    give self-adjoint polynomials $V_n',W_n\in\C[G\times G]$ such that
    \[
    \|S'(V_n')-[e,e]\|_F\leq \lambda^n,
    \qquad
    \|V_n'-W_n^2\|_F\leq \lambda^n.
    \]
    Consequently,
    \[
    \|S'(W_n^2)-[e,e]\|_F
    \leq\|S'(V_n')-[e,e]\|_F+\|S'(W_n^2-V_n')\|_F
    \leq(2s+1)\lambda^n. \qedhere
    \]
\end{proof}

\section{Finite-Dimensional Quantum Realization}
\label{sec:finite-dim}

We turn the polynomial construction of \Cref{sec:vectorization}
into a one-way one-round anonymous quantum algorithm for coloring directed cycles,
by the energy-based characterization introduced in \cite{gur-li-2026-impossibility-of-one-way-one-round-quantum}.
First, we recall the definition of the energy of finite-dimensional PSD operators.

\begin{definition}[Energy] \label{def:energy}

    Let $\As=\Bs=\C^N$. Given PSD $M\in\mathrm{L}(\As\otimes\Bs)$,
    define its energy by
    \[
        \Es(M) := \Tr(AB)
        \qquad\text{where}\qquad A := \Tr_\As(M)^\top,\quad B:= \Tr_\Bs(M).
    \]
\end{definition}

The following key lemma relates the total energy of a POVM to the probability
of a monochromatic edge in a one-way one-round quantum algorithm.

\begin{lemma}[Corollary of {\cite[Theorem~1.6]{gur-li-2026-impossibility-of-one-way-one-round-quantum}}]
\label{lem:energy-to-coloring}
Fix an integer $q\geq 2$. Suppose that, for every $\epsilon>0$, there exist a dimension $N\in\mathbb{N}^+$ and a POVM $\{M_a\}_{a\in[q]}$ on $\C^N\otimes\C^N$ such that
\[
    \frac{1}{N^3}\sum_{a\in[q]} \Es(M_a) \leq \epsilon.
\]
Then, for every $\epsilon>0$, there is a one-way one-round anonymous quantum
algorithm 
that $q$-colors directed cycles, so that each edge 
is monochromatic with probability at most $\epsilon$.
\end{lemma}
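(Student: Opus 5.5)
The plan is to prove the statement directly, without appealing to the general theorem of \cite{gur-li-2026-impossibility-of-one-way-one-round-quantum}. Fix $\epsilon>0$ and take the dimension $N$ and POVM $\{M_a\}_{a\in[q]}$ given by the hypothesis. Use them in the three-step algorithm from the technical overview.
\begin{enumerate}
\item Each node $v_i$ prepares $\ket{\phi}=\frac{1}{\sqrt N}\sum_j\ket{j}_{\mathsf A_i}\ket{j}_{\mathsf B_i}$.
\item It sends $\mathsf B_i$ to $v_{i+1}$.
\item It measures $\{M_a\}$ on $\mathsf B_{i-1}\otimes\mathsf A_i$, with $\mathsf B_{i-1}$ as the first tensor factor ($\As$) and $\mathsf A_i$ as the second ($\Bs$).
\end{enumerate}
The algorithm is anonymous, one-way and one-round, and it always outputs a color in $[q]$. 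So it only remains to bound $\Pr[c_i=c_{i+1}]$ for a fixed edge $(v_i,v_{i+1})$.

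First I would reduce to four registers. The outcomes $c_i$ and $c_{i+1}$ come from commuting measurements on $\mathsf B_{i-1}\mathsf A_i$ and $\mathsf B_i\mathsf A_{i+1}$. Their joint law therefore depends only on the reduced state of $\mathsf B_{i-1}\mathsf A_i\mathsf B_i\mathsf A_{i+1}$, since measurements at other nodes act on other registers and do not change this marginal. I assume the cycle has length $n\ge 3$, so these four registers are distinct; the case $n=2$ is degenerate and can be excluded or treated separately. The global state is a product of copies of $\ketbra{\phi}{\phi}$, and tracing out the partner of a register in $\ket\phi$ leaves it maximally mixed. Hence the reduced state is
\[
\rho=\tfrac{I}{N}\otimes\ketbra{\phi}{\phi}_{\mathsf A_i\mathsf B_i}\otimes\tfrac{I}{N},
\]
and
\[
\Pr[c_i=c_{i+1}=a]=\Tr\bigl((M_a\otimes M_a)\rho\bigr),
\]
where the first copy of $M_a$ acts on $\mathsf B_{i-1}\mathsf A_i$ and the second on $\mathsf B_i\mathsf A_{i+1}$.

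Next I would take the partial traces. Absorbing the maximally mixed $\mathsf B_{i-1}$ into the first copy of $M_a$ gives $\frac1N\Tr_\As(M_a)$ acting on $\mathsf A_i$. Absorbing $\mathsf A_{i+1}$ into the second copy gives $\frac1N\Tr_\Bs(M_a)$ acting on $\mathsf B_i$. Then I apply the standard identity
\[
\braUket{\phi}{X\otimes Y}{\phi}=\tfrac1N\sum_{j,k}X_{jk}Y_{jk}=\tfrac1N\Tr(X^\top Y),
\]
with $X=\Tr_\As(M_a)$ and $Y=\Tr_\Bs(M_a)$. This gives
\[
\Pr[c_i=c_{i+1}=a]=\frac1{N^3}\Tr\bigl(\Tr_\As(M_a)^\top\,\Tr_\Bs(M_a)\bigr)=\frac{\Es(M_a)}{N^3}.
\]
Summing over $a$ and using the hypothesis yields $\Pr[c_i=c_{i+1}]\le\epsilon$ for every edge, as required.

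The only delicate step is the bookkeeping of register order and of the transpose. One must check that tracing the \emph{first} register of the measurement at $v_i$ leaves an operator on $\mathsf A_i$, and that tracing the \emph{second} register at $v_{i+1}$ leaves one on $\mathsf B_i$. One must also check that the transpose lands on $\Tr_\As(M_a)$, exactly as in \cref{def:energy}. Since $\Tr(A^\top B)=\Tr(B^\top A)$, the result is in any case insensitive to which factor carries the transpose, so a slip there would be harmless.
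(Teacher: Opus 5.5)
Your proof is correct. The reduced state on $\mathsf B_{i-1}\mathsf A_i\mathsf B_i\mathsf A_{i+1}$ (for $n\ge 3$), the two partial traces, and the identity $\langle\phi|X\otimes Y|\phi\rangle=\frac1N\Tr(X^\top Y)$ together give exactly $\Pr[c_i=c_{i+1}=a]=\Es(M_a)/N^3$, with the register and transpose conventions of \cref{def:energy}.

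The paper gives no argument for this lemma. It labels it a corollary of \cite[Theorem~1.6]{gur-li-2026-impossibility-of-one-way-one-round-quantum}, and in the technical overview it quotes from there the identity $\Pr[c_i=c_{i+1}]=N^{-3}\sum_a\Es(M_a)$ for \cref{alg:coloring}. The logical skeleton is therefore the same as yours: instantiate \cref{alg:coloring} with the given POVM, apply that identity, and sum over colors. The difference is that you derive the identity from first principles instead of importing it.

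Your route makes the lemma self-contained. It also explicitly checks that the paper's choices fit together with \cref{def:energy}: measuring on $\mathsf B_{i-1}\otimes\mathsf A_i$, and which partial trace carries the transpose. And it states the $n\ge 3$ assumption that the paper leaves implicit.

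The citation route instead plugs into the full energy-based characterization of one-way one-round algorithms. That characterization is what the converse, lower-bound direction needs in \cite{gur-li-2026-impossibility-of-one-way-one-round-quantum}. For the achievability statement here, only the forward computation you carried out is required.
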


It therefore suffices to construct POVMs with arbitrarily small normalized
total energy. The following theorem provides such POVMs with $q=2\cdot 10^4$ outcomes.

\begin{tbox}
\begin{theorem} \label{thm:main-povm}
Fix $q=2 \cdot 10^4$. For every $\epsilon>0$, there are a dimension
$N\in\mathbb{N}^+$ and a $q$-outcome POVM
$\{M_a\}_{a\in[q]}$ on $\C^N\otimes\C^N$ such that
\[
\frac{1}{N^3}\sum_{a\in[q]}\Es(M_a)\leq\epsilon.
\]
\end{theorem}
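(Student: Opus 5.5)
The plan follows Step 3 of the technical overview. I first fix $s=10^4$ and take $n$ large, so that \cref{lem:W} gives a self-adjoint $W\in\C[G\times G]$ with $\|[e,e]-\sum_a K_aW^2K_a\|_F\le(2s+1)\lambda^n=:\delta$. Next, I construct representations $g\mapsto U_g^{(N)}$ of $G$ that are asymptotically orthogonal, using independent Haar orthogonal blocks $O_i^{(d)}$ with $N=2d$. For $g\ne e$ in reduced form, the normalized trace of $U_g^{(N)}$ vanishes for odd length. For even length it tends to zero almost surely, by asymptotic freeness (Collins--Śniady). Since $G$ is countable, I fix one realization in which all these limits hold simultaneously. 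The evaluation $Z\mapsto Z^{(N)}=\sum Z(g,h)\,U_g\otimes\overline{U_h}$ is a unital $*$-homomorphism. It gives $\frac1{N^2}\|Z^{(N)}\|_F^2=\sum_{g,h,g',h'}\overline{Z(g,h)}Z(g',h')\,\tau_N(g^{-1}g')\tau_N(h^{-1}h')\to\|Z\|_F^2$ for each fixed polynomial $Z$, where $\tau_N$ is the normalized trace.

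I then set $\tilde M_a=(K_a^{(N)}W^{(N)})^\dagger(K_a^{(N)}W^{(N)})$, using $W^{(N)}K_a^{(N)}$ appropriately so that $\tilde M_a=K_a^{(N)}(W^{(N)})^2K_a^{(N)}$ is PSD. Since $U_{g_i}$ is a self-adjoint involution, $K_{\pm i}^{(N)}=\tfrac14(I\pm U_i\otimes I)(I\mp I\otimes\overline{U_i})$ is the projector onto $\mathcal V\otimes\overline{\mathcal V}^\perp$, where $\mathcal V$ is the $\pm1$ eigenspace of $U_i$. Hence $\tilde M_a$ is supported there. For such $M$, $\Tr_\Bs M$ is supported on $\mathcal V$ and $\Tr_\As M$ on $\overline{\mathcal V}^\perp$, so $(\Tr_\As M)^\top$ lives on $\mathcal V^\perp$ and $\Es(\tilde M_a)=0$. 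Let $T=\sum_a\tilde M_a=Z^{(N)}$ with $Z=\sum_aK_aW^2K_a$. Then $\frac1N\|T-I_{N^2}\|_F\to\|Z-[e,e]\|_F\le\delta$, so for $N$ large the normalized distance is at most $2\delta$.

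The main obstacle is the normalization step, which corresponds to \cref{lem:normalize-povm} and is invoked in the overview. I form $M_a=T^{+1/2}\tilde M_aT^{+1/2}$ and add $\Pi_{\ker T}$ to one outcome. I must show $\frac1{N^3}\sum_a\Es(M_a)=O(\sqrt q\,\delta)$, or some $f(\delta,q)\to0$, using only normalized-Frobenius closeness of $T$ to $I$. My plan is to write $M_a=\tilde M_a+\Delta_a$. Energy is the bilinear form $\Es(M)=\Tr(\Tr_\As(M)^\top\Tr_\Bs(M))$, and $\Es(\tilde M_a)=0$, so $\Es(M_a)$ equals cross terms plus $\Es(\Delta_a)$. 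I bound these with the Cauchy--Schwarz estimate $|\Tr(A^\top B)|\le\|A\|_F\|B\|_F$, together with the partial-trace bound $\|\Tr_\As X\|_F\le\sqrt N\|X\|_F$. This makes every term at most $N\|\cdot\|_F\|\cdot\|_F$, which matches the $N^3$ normalization once Frobenius norms are of order $N$.

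The main bottleneck is controlling $\Delta_a=T^{+1/2}\tilde M_aT^{+1/2}-\tilde M_a$ in Frobenius norm. Frobenius closeness does not give operator-norm closeness, since $T^{+1/2}$ may blow up on a small subspace. To handle this I would split off the spectral projector $Q$ of $T$ onto eigenvalues in $[1/2,2]$. The complement has dimension at most $4\|T-I\|_F^2\le 16\delta^2N^2$. On $Q$, $\|T^{1/2}-I\|$ is controlled and $\sum_a\|\tilde M_a\|_F\lesssim\sqrt q\,N$. The contribution from the complement is bounded using $0\le M_a\le I$ and the rank bound, each term giving $O(N\cdot\mathrm{rank})=O(\delta^2N^3)$. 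Finally, I choose $n$ so that the resulting bound $f(\delta,q)\le\epsilon/2$, then take $N$ large.
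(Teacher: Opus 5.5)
\paragraph{Verdict.} Your setup up to the normalization is the paper's own route: \cref{lem:W}, Haar-orthogonal representations, $\pi_N$, zero energy via the projector $K_a^{(N)}$, and $\frac1N\|T-I\|_F\to\|Z-[e,e]\|_F$, i.e.\ \cref{lem:finite-approx}. Your estimates $|\Tr(\Tr_\As(X)^\top\Tr_\Bs(Y))|\le N\|X\|_F\|Y\|_F$ and $\sum_a\|\tilde M_a\|_F^2\le\|T\|_F^2$ are also correct. The gap is in the normalization step, exactly where you located the difficulty.

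\paragraph{Where the argument fails.} Your error term $\Delta_a=M_a-\tilde M_a$ contains $-\tilde M_a$ on the bad spectral subspace $Q^\perp$, not only $M_a$. On $Q^\perp$ you know only $0\preceq\tilde M_a\preceq T$. The only hypothesis you use, $\frac1N\|T-I\|_F\le2\delta$, allows eigenvalues of $T$ of order $\delta N$ there.
\begin{itemize}
\item ``$0\preceq M_a\preceq I$ plus the rank bound'' controls $Q^\perp M_a$, but not $Q^\perp\tilde M_a$. Rank gives only $\|Q^\perp\tilde M_a\|_F\le\sqrt r\,\|\tilde M_a\|_{\mathrm{op}}$.
\item A trace-norm bound on the cross terms would need $\|\Tr_\As\tilde M_a\|_{\mathrm{op}}\le N\|\tilde M_a\|_{\mathrm{op}}$, which is again unbounded.
\item Working with $QM_aQ$ versus $Q\tilde M_aQ$ does not escape this. $Q\tilde M_aQ$ is no longer zero-energy, and its energy is governed by the same mass of $\tilde M_a$ outside $Q$.
\item The $N$-independent bound $\|T\|_{\mathrm{op}}\le\sum_{g,h}|Z(g,h)|$ is not controlled in terms of $\delta$, so it does not close the argument either.
\end{itemize}

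\paragraph{What is missing, and how the paper supplies it.} You need an estimate that is insensitive to large eigenvalues of $T$. The paper proves \cref{lem:normalize-povm} via \cref{lem:energy-close}: if $M=\Gamma^\dagger\Gamma\preceq I$ and $\Es(Z)=0$, then $\Es(M)\le4N\|\Gamma-\sqrt Z\|_F^2$. This is a quadratic leakage bound, obtained by measuring the weight of $M$ on $\mathcal V^\perp\otimes\C^N$ and on $\C^N\otimes\overline{\mathcal V}$, and using that both partial traces are $\preceq NI$. With $\Gamma_a=\sqrt{\tilde M_a}\sqrt{T^+}$ the errors telescope:
\[
\sum_a\|\Gamma_a-\sqrt{\tilde M_a}\|_F^2=\sum_{\lambda_j>0}(\sqrt{\lambda_j}-1)^2\le\|T-I\|_F^2 .
\]
The kernel projector costs at most $3N\dim\ker T\le3N\|T-I\|_F^2$, and no spectral split is needed.

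\paragraph{Repairing your own route.} Your bilinear approach can also be fixed. Since $0\preceq\tilde M_a\preceq T$, one has exactly $\tilde M_a=T^{1/2}R_aT^{1/2}$ with $R_a=T^{+1/2}\tilde M_aT^{+1/2}$ and $R_a^2\preceq R_a\preceq I$. Setting $D=T^{1/2}-I$, the error is $-(DR_a+R_aD+DR_aD)$, plus $\Pi_{\ker T}$ for one outcome. In an eigenbasis of $D$,
\[
\|DR_aD\|_F^2=\sum_{i,j}|(R_a)_{ij}|^2d_i^2d_j^2\le\sum_i d_i^4\le\|T-I\|_F^2 ,
\]
and $\|D\|_F,\ \sqrt{\dim\ker T}\le\|T-I\|_F$. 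Hence $\|\Delta_a\|_F\le4\|T-I\|_F$, and your Cauchy--Schwarz expansion then yields $\frac1{N^3}\sum_a\Es(M_a)=O(\sqrt q\,\delta+q\delta^2)$.
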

\end{tbox}

Combining \Cref{thm:main-povm,lem:energy-to-coloring} gives the following
corollary, which implies \Cref{res:main-epsilon}.
The remaining algorithmic results follow from the reductions in \Cref{sec:corollaries}.

\begin{corollary}\label{cor:one-round-coloring}
For every $\epsilon>0$, there is a one-way one-round anonymous quantum algorithm
that colors directed cycles with $q=2 \cdot 10^4$ colors, 
so that each edge is monochromatic with probability at most $\epsilon$.
\end{corollary}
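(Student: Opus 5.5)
The corollary is the direct combination of \cref{thm:main-povm} and \cref{lem:energy-to-coloring}. Set $q = 2\cdot 10^4$. \cref{thm:main-povm} says that for every $\epsilon>0$ there are a dimension $N$ and a $q$-outcome POVM $\{M_a\}_{a\in[q]}$ on $\C^N\otimes\C^N$ with normalized total energy $\frac{1}{N^3}\sum_a \Es(M_a)\le\epsilon$. This is exactly the hypothesis of \cref{lem:energy-to-coloring} for this fixed $q\ge 2$. The lemma then gives, for every $\epsilon>0$, a one-way one-round anonymous quantum algorithm that $q$-colors directed cycles with each edge monochromatic with probability at most $\epsilon$.

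Concretely, the algorithm is the one described in Step~3 of \cref{sec:intro-techniques}. Each node prepares the maximally entangled state $\frac{1}{\sqrt N}\sum_j \ket{j}\ket{j}$ on $\mathsf{A}_i\otimes\mathsf{B}_i$. It sends $\mathsf{B}_i$ to its successor and measures $\mathsf{B}_{i-1}\otimes\mathsf{A}_i$ with $\{M_a\}$. The energy identity $\Pr[c_i=c_{i+1}]=\frac1{N^3}\sum_a\Es(M_a)$, which is the content of \cref{lem:energy-to-coloring}, converts the energy bound into the edge failure bound.

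The proof needs nothing further: the quantifiers line up, and $q$ stays fixed at $2\cdot10^4$ independently of $\epsilon$ (only $N$ depends on $\epsilon$). All the real work is in \cref{thm:main-povm}, which is assumed here. That theorem is where one evaluates $K_aW_n^2K_a$ from \cref{lem:W} under asymptotically orthogonal representations of $G$ and then normalizes. If one instead wanted the underlying argument, the hardest step would be controlling the energy introduced by the $T^{-1/2}$ normalization, via \cref{lem:normalize-povm}.
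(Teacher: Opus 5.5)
Your proposal is correct and matches the paper's proof exactly: the corollary is obtained by feeding the POVMs of \cref{thm:main-povm} (fixed $q=2\cdot 10^4$, normalized total energy at most $\epsilon$) into \cref{lem:energy-to-coloring}. Your observation that only $N$, and not $q$, depends on $\epsilon$ is the one point that needs checking, and you handle it correctly.
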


\cref{alg:coloring} details the implementation of the one-way one-round anonymous quantum coloring algorithm obtained from \cref{thm:main-povm}.

\begin{alg}\label{alg:coloring}
Our $q$-coloring algorithm at node $v_i$ with $q=2 \cdot 10^4$ and a given $\epsilon>0$ error.

\begin{enumerate}[leftmargin=*,itemsep=0.5em,topsep=0.5em]
    \item Deterministically choose a dimension $N$ and a POVM
    $\mathcal{M}:=\{M_a\}_{a\in[q]}$ such that 
    \[
    \frac{1}{N^3}\sum_{a\in[q]}\Es(M_a)\leq\epsilon,
    \]
    as guaranteed by \Cref{thm:main-povm}.
    \item Prepare a bipartite maximally entangled state
    \[
    \ket{\Phi_N}_{\mathsf{A}_i\mathsf{B}_i}
    :=\frac1{\sqrt{N}}\sum_{j\in[N]}
    \ket{j}_{\mathsf{A}_i}\otimes\ket{j}_{\mathsf{B}_i}
    \]
    on two registers $\mathsf{A}_i$ and $\mathsf{B}_i$.

    \item In one communication round, send $\mathsf{B}_i$ to the successor
    $v_{i+1}$, keep $\mathsf{A}_i$, and receive $\mathsf{B}_{i-1}$ from the
    predecessor $v_{i-1}$.

    \item Apply $\mathcal{M}$ to the joint register
    $\mathsf{B}_{i-1}\otimes\mathsf{A}_i$, and output the
    measurement outcome $c_i\in[q]$ as the color of $v_i$.
\end{enumerate}
\end{alg}

The remainder of this section is to prove \Cref{thm:main-povm}. 

\subsection{Zero-Energy Operators and Normalization}

We use two helper lemmas about zero-energy operators.
The first characterizes their supports.

\begin{lemma}[Corollary of {\cite[Lemma 5.1]{gur-li-2026-impossibility-of-one-way-one-round-quantum}}] \label{lem:zero-energy}
    For a PSD operator $M\in\mathrm{L}(\C^N\otimes\C^N)$, the following are equivalent:
    \begin{enumerate}
        \item
        $M$ has zero energy, i.e., $\Es(M)=0$.
        \item There exists an orthogonal decomposition $\mathcal{V}\oplus \mathcal{V}^\perp = \C^N$
        such that
        \[
        \supp(M) \subseteq
        \mathcal{V} \otimes \bar{\mathcal{V}}^\perp.
        \]
    \end{enumerate}
\end{lemma}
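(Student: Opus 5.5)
The plan is to prove both directions directly from the structure of the two marginals, rather than invoking \cite{gur-li-2026-impossibility-of-one-way-one-round-quantum}. Write $B = \Tr_\Bs(M)$, an operator on the first factor, and $\Tr_\As(M)$, an operator on the second factor; set $A = \Tr_\As(M)^\top$. Both marginals are PSD because $M$ is PSD. Since $\Tr_\As(M)$ is Hermitian, $A = \Tr_\As(M)^\top = \overline{\Tr_\As(M)}$ is also PSD, and $\mathrm{range}(A) = \overline{\mathrm{range}(\Tr_\As(M))}$.

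The first step is the standard fact that for PSD $A,B$ we have $\Tr(AB)=0$ iff $AB=0$ iff their ranges are orthogonal. To see this, write $\Tr(AB) = \Tr(B^{1/2}AB^{1/2}) = \norm{A^{1/2}B^{1/2}}_F^2$.

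The second step, which is the only nontrivial ingredient, is the support containment
\[
\supp(M) \subseteq \supp(\Tr_\Bs M) \otimes \supp(\Tr_\As M).
\]
I would prove it by decomposing $M = \sum_k \ketbra{\psi_k}{\psi_k}$ and taking a Schmidt decomposition of each $\ket{\psi_k}$. Since $\Tr_\Bs M \succeq \Tr_\Bs \ketbra{\psi_k}{\psi_k}$, the range of $\Tr_\Bs M$ contains all first-factor Schmidt vectors of $\psi_k$. Symmetrically, the range of $\Tr_\As M$ contains all second-factor Schmidt vectors of $\psi_k$. Hence every $\psi_k$, and so $\supp(M)$, lies in the tensor product of the two supports.

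(1)$\Rightarrow$(2): Set $\mathcal{V} = \mathrm{range}(B) = \supp(\Tr_\Bs M)$. Zero energy gives $\mathrm{range}(A) \perp \mathcal{V}$, i.e.\ $\overline{\supp(\Tr_\As M)} \subseteq \mathcal{V}^\perp$. Conjugating, $\supp(\Tr_\As M) \subseteq \overline{\mathcal{V}^\perp} = \bar{\mathcal{V}}^\perp$. The support containment then gives $\supp(M) \subseteq \mathcal{V}\otimes \bar{\mathcal{V}}^\perp$.

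(2)$\Rightarrow$(1): If $\supp(M)\subseteq \mathcal{V}\otimes\bar{\mathcal{V}}^\perp$, then $M$ is a combination of $\ketbra{v}{v'}\otimes\ketbra{w}{w'}$ with $v,v'\in\mathcal{V}$ and $w,w'\in\bar{\mathcal{V}}^\perp$. Hence $\Tr_\Bs M$ is supported on $\mathcal{V}$ and $\Tr_\As M$ is supported on $\bar{\mathcal{V}}^\perp$. Therefore $A$ is supported on $\overline{\bar{\mathcal{V}}^\perp} = \mathcal{V}^\perp$, so $AB = 0$ and $\Es(M)=0$.

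The main obstacle is only bookkeeping. One must keep track of which partial trace lives on which factor, and check that the transpose in the definition of energy turns into complex conjugation of subspaces (using $\overline{S}^\perp = \overline{S^\perp}$). Otherwise the argument is routine linear algebra.
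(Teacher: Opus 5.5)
Your proof is correct, and it takes a genuinely different route from the paper, because the paper does not prove this lemma at all: it imports it as a corollary of \cite[Lemma~5.1]{gur-li-2026-impossibility-of-one-way-one-round-quantum}.

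Your argument is self-contained and rests on two elementary facts:
\begin{itemize}
\item For PSD $A,B$, the identity $\Tr(AB)=\|A^{1/2}B^{1/2}\|_F^2$ shows that zero trace forces orthogonal ranges.
\item A PSD operator on $\As\otimes\Bs$ is supported inside $\supp(\Tr_\Bs M)\otimes\supp(\Tr_\As M)$. You get this from Schmidt decompositions together with $\Tr_\Bs M\succeq\Tr_\Bs\ketbra{\psi_k}{\psi_k}$.
\end{itemize}
Add the bookkeeping $A=\overline{\Tr_\As(M)}$ and $\overline{\mathcal{V}^\perp}=\bar{\mathcal{V}}^\perp$, and both directions follow.

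What your route buys:
\begin{itemize}
\item Independence from the external reference.
\item A canonical, in fact minimal, witness $\mathcal{V}=\supp(\Tr_\Bs M)$, since any valid $\mathcal{V}$ must contain it.
\item Transparency for how the paper uses the lemma. The paper uses (1)$\Rightarrow$(2) in \cref{lem:energy-close} to extract $\mathcal{V}$ from $\Es(Z)=0$. It uses (2)$\Rightarrow$(1) in \cref{lem:finite-approx} to conclude $\Es(M_{\sigma i})=0$ from the support of $\pi_N(K_{\sigma i})$.
\end{itemize}
What the citation buys is brevity.

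One step deserves an explicit line. In (2)$\Rightarrow$(1) you write $M$ as a combination of terms $\ketbra{v}{v'}\otimes\ketbra{w}{w'}$ with $v,v'\in\mathcal{V}$ and $w,w'\in\bar{\mathcal{V}}^\perp$. This uses $M=PMP$, where $P$ is the projector onto $\mathcal{V}\otimes\bar{\mathcal{V}}^\perp$. That identity holds because $M$ is Hermitian, so its range and co-range coincide.
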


The second normalizes a collection of zero-energy operators whose sum is close to
the identity into a POVM with small total energy.
We defer its proof to \Cref{appx:proof-normalize-povm}.

\begin{lemma} \label{lem:normalize-povm}
Let $\{\tilde{M}_a\}_{a\in[q]}$ be zero-energy PSD operators
in $\mathrm{L}(\C^N\otimes\C^N)$ such that
\[
\left\|
\sum_{a\in[q]} \tilde{M}_a - I
\right\|_F^2 \leq \epsilon.
\]
Then there exists a $q$-outcome POVM
$\{M_a\}_{a \in[q]}$ on the same space such that
\[
\sum_{a\in[q]} \Es(M_a)\leq 7\epsilon N.
\]
\end{lemma}

\subsection{Finite-Dimensional Approximation}
\label{sec:finite-approx}

We now evaluate the polynomials from \Cref{sec:vectorization} at
finite-dimensional matrices to obtain 
the POVMs with arbitrarily small total energy.
The following matrix families give unitary representations of the group $G$ that
approximate the orthogonality of distinct group elements in normalized Frobenius inner product.

\begin{lemma} \label{lem:matrix-families}
For every integer $s\geq 1$, there exists
a sequence of matrix families 
\[
\left(
\{U_i^{(N)}
\in \C^{N\times N}\}_{i\in[s]}\right)_{N\in\mathbb{N}^+}
\]
satisfying the following two conditions:
\begin{enumerate}[label=(U\arabic*),leftmargin=1.5cm]
    \item\label[part]{U1} For every $N\in\mathbb{N}^+$ and $i\in[s]$,
    $(U_i^{(N)})^\dagger=U_i^{(N)}$ and $(U_i^{(N)})^2=I_N$.
    \item\label[part]{U2} For $g=g_{i_1}\cdots g_{i_k}\in G$, write
    \[
    U_g^{(N)}:=U_{i_1}^{(N)}\cdots U_{i_k}^{(N)},
    \qquad U_e^{(N)}:=I_N.
    \]
    Then for every fixed $g,h\in G$,
    \[
    \frac1N\Tr\left((U_g^{(N)})^\dagger U_h^{(N)}\right)
    \longrightarrow 1_{g=h}
    \quad\text{as}\quad N\to\infty.
    \]
\end{enumerate}
\end{lemma}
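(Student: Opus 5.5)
We realize the construction sketched in the technical overview and restrict to even dimensions $N=2d$; for odd $N$ we pad with a $1\times 1$ block equal to $1$, which perturbs every normalized trace by $O(1/N)$. For each $d$, let $O_1^{(d)},\dots,O_s^{(d)}$ be independent Haar-distributed orthogonal $d\times d$ matrices, and set
\[
U_i^{(N)}=\begin{pmatrix}0&(O_i^{(d)})^\top\\ O_i^{(d)}&0\end{pmatrix}.
\]
Property \ref{U1} follows by direct computation. The matrix is real and symmetric, hence self-adjoint, and squaring it gives $\mathrm{diag}\bigl((O_i^{(d)})^\top O_i^{(d)},\,O_i^{(d)}(O_i^{(d)})^\top\bigr)=I_N$.

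Because of \ref{U1}, the map $g_i\mapsto U_i^{(N)}$ respects the relations $g_i^2=e$. It therefore extends to a well-defined unitary representation of $G$, and $U_g^{(N)}$ does not depend on the word chosen to represent $g$. Moreover, $(U_g^{(N)})^\dagger=U_{g^{-1}}^{(N)}$, so $(U_g^{(N)})^\dagger U_h^{(N)}=U_{g^{-1}h}^{(N)}$. Condition \ref{U2} thus reduces to showing that $\frac1N\Tr(U_g^{(N)})\to 1_{g=e}$ for every fixed $g\in G$. For $g=e$ this is trivial. For $g\neq e$, write $g=g_{i_1}\cdots g_{i_k}$ in reduced form, so that $i_j\neq i_{j+1}$ for all $j$.

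If $k$ is odd, the product of an odd number of block off-diagonal matrices is block off-diagonal, so its trace is $0$. If $k$ is even, the product is block diagonal. The upper block is $O_{i_1}^\top O_{i_2}O_{i_3}^\top O_{i_4}\cdots O_{i_k}$, an alternating product of $O$'s and transposes, and the lower block is the analogous product with the roles of $O$ and $O^\top$ swapped. Consecutive indices differ, and this holds cyclically as well: if $i_1=i_k$, the cyclic rotation $O_{i_k}O_{i_1}^\top=I$ would cancel, but then we conjugate and reduce. To be safe, we instead invoke asymptotic freeness of independent Haar orthogonal matrices (Collins--\'Sniady) directly on the word. The families $\{O_i,O_i^\top\}$ are asymptotically free for distinct $i$, and each $O_i$ is asymptotically Haar unitary, so $\frac1d\mathrm{tr}(O_i^m)\to 0$ for $m\neq 0$. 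Group the block into maximal runs of a single index. Adjacent letters in each run alternate between $O_i$ and $O_i^\top$ and hence cancel, so within a run the factors alternate $O_i^\top O_i\cdots$.

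Actually, in the upper block the letters alternate between transposes and non-transposes while the indices vary. Consecutive indices are distinct, so every factor in the word is a centered element, namely $O_i$ or $O_i^{-1}$ with limiting trace $0$, and consecutive factors lie in different free families. The only exception is the cyclic pair $(i_k,i_1)$ when $i_1=i_k$. In that case, by cyclicity of the trace we rotate $O_{i_k}$ to the front and merge it with $O_{i_1}^\top$. The merged factor $O_{i_1}O_{i_1}^\top=I$ cancels, leaving a shorter word of the same type. The corresponding group element $g_{i_2}\cdots g_{i_{k-1}}$ is a nontrivial conjugate of $g$ (since $g\neq e$), so we induct on $k$. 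A cyclically reduced word of length $k\geq 2$ consists of centered factors that alternate between free families, and freeness gives $\frac1d\mathrm{tr}\to 0$ almost surely. The lower block is handled in the same way. Hence $\frac1N\Tr(U_g^{(N)})\to 0$ almost surely for each fixed $g\neq e$.

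Finally, $G$ is countable, so with probability one these limits hold simultaneously for all $g\in G$. Fixing one such realization for every $d$ gives a deterministic sequence of matrix families satisfying \ref{U1} and \ref{U2}. The main obstacle is invoking asymptotic freeness correctly: we need the \emph{almost sure} version for Haar orthogonal matrices, including the statement that each $O_i$ is asymptotically Haar unitary, and we need to handle the cyclic reduction carefully.
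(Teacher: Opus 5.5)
Your proof is correct and follows the paper's argument. Both use the same block construction from independent Haar orthogonal matrices with $1\times 1$ padding in odd dimensions, reduce (U2) to $\frac1N\Tr(U_g^{(N)})\to 1_{g=e}$, split on the parity of the reduced length $k$, and take the almost-sure Collins--\'Sniady statement intersected over the countably many $g\in G$. Your cyclic-reduction detour is harmless but unnecessary: freeness only requires linearly adjacent centered factors to lie in different families (e.g.\ $\tau(aba)=0$ for free centered $a,b$), which is why the paper applies \cref{lem:haar} directly to $D_1^{(d)}$ and $D_2^{(d)}$ under the sole condition $i_j\neq i_{j+1}$.
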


We defer the proof of \Cref{lem:matrix-families} to \Cref{sec:matrix-families}.
Fix such a sequence. By \ref{U1}, $g\mapsto U_g^{(N)}$ is a unitary
representation of $G$, since it respects the defining relations $g_i^2=e$.
Define a linear map $\pi_N:\C[G\times G]\to\mathrm{L}(\C^N\otimes\C^N)$ by
\[
\pi_N([g,h]):=U_g^{(N)}\otimes\overline{{U}_h^{(N)}}.
\]

The map $\pi_N$ allows us to realize any polynomial in $\C[G\times G]$ with
finite-dimensional matrices. 
We require the following two properties of $\pi_N$.
The first one shows that $\pi_N$ preserves the algebraic structure of
$\C[G\times G]$.

\begin{proposition}\label{prop:finite-evaluation}
The map $\pi_N$ is a unital $*$-homomorphism,
i.e., it preserves
\begin{enumerate}
    \item multiplication: $\pi_N(XY)=\pi_N(X)\pi_N(Y)$ for all $X,Y\in\C[G\times G]$;
    \item adjoint: $\pi_N(X^\dagger)=\pi_N(X)^\dagger$ for all $X\in\C[G\times G]$;\footnote{
        For $X\in\C[G\times G]$, its adjoint is defined as $X^\dagger:=\sum_{g,h}\overline{X_{g,h}}[g^{-1},h^{-1}]$.
    }
    \item identity: $\pi_N([e,e])=I_{N^2}$.
\end{enumerate}
\end{proposition}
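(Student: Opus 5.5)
Since $\pi_N$ is defined on the basis $\{[g,h]\}$ and extended linearly, each of the three properties reduces to a check on basis monomials, extended by (anti)linearity. The only ingredient beyond routine bookkeeping is that $g\mapsto U_g^{(N)}$ is a well-defined unitary representation of $G$. I will verify this first, and then treat multiplication, adjoint and identity in turn.

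\emph{Well-definedness of the representation.} By \ref{U1}, each $U_i^{(N)}$ satisfies $(U_i^{(N)})^2=I_N$, so the assignment $g_i\mapsto U_i^{(N)}$ respects the defining relations of $G$. By the universal property of the presentation, the product formula in \ref{U2} therefore does not depend on which word represents $g$. The resulting map satisfies $U_{gg'}^{(N)}=U_g^{(N)}U_{g'}^{(N)}$ and $U_e^{(N)}=I_N$. Each $U_i^{(N)}$ is self-adjoint and squares to the identity, hence is unitary, so every $U_g^{(N)}$ is unitary as well. Moreover,
\[
U_{g^{-1}}^{(N)}=(U_g^{(N)})^{-1}=(U_g^{(N)})^\dagger .
\]
Taking complex conjugates, $h\mapsto\overline{U_h^{(N)}}$ is also multiplicative, and $\overline{U_{h^{-1}}^{(N)}}=(\overline{U_h^{(N)}})^\dagger$.

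\emph{Multiplication.} Both sides of $\pi_N(XY)=\pi_N(X)\pi_N(Y)$ are bilinear in $(X,Y)$, so it suffices to check monomials. By the mixed-product property of the Kronecker product,
\[
\pi_N([g,h])\,\pi_N([g',h'])=(U_g^{(N)}U_{g'}^{(N)})\otimes(\overline{U_h^{(N)}}\,\overline{U_{h'}^{(N)}})=U_{gg'}^{(N)}\otimes\overline{U_{hh'}^{(N)}} .
\]
The right-hand side is exactly $\pi_N([gg',hh'])=\pi_N([g,h]\cdot[g',h'])$.

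\emph{Adjoint.} Both sides of $\pi_N(X^\dagger)=\pi_N(X)^\dagger$ are conjugate-linear in $X$, so again it suffices to check a single monomial $[g,h]$. For it,
\[
\pi_N([g,h]^\dagger)=\pi_N([g^{-1},h^{-1}])=(U_g^{(N)})^\dagger\otimes(\overline{U_h^{(N)}})^\dagger=(U_g^{(N)}\otimes\overline{U_h^{(N)}})^\dagger ,
\]
using the inverse relations established above and $(A\otimes B)^\dagger=A^\dagger\otimes B^\dagger$. For a general $X=\sum X_{g,h}[g,h]$, the coefficients $X_{g,h}$ become $\overline{X_{g,h}}$ on both sides, so the identity extends.

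\emph{Identity.} We have $\pi_N([e,e])=I_N\otimes\overline{I_N}=I_{N^2}$.

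The only step with real content is the well-definedness of $g\mapsto U_g^{(N)}$, which rests entirely on the relations $g_i^2=e$ being honored by \ref{U1}. Everything else is bookkeeping with Kronecker products.
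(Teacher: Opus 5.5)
Your proof is correct and follows essentially the same route as the paper: reduce to basis monomials $[g,h]$, apply the mixed-product property with $U_g^{(N)}U_{g'}^{(N)}=U_{gg'}^{(N)}$ and $(U_g^{(N)})^\dagger=U_{g^{-1}}^{(N)}$, and note $\pi_N([e,e])=I_N\otimes I_N$. You also spell out two details the paper leaves implicit: the well-definedness of $g\mapsto U_g^{(N)}$ (which the paper only remarks on just before the proposition) and the conjugate-linear extension in the adjoint case.
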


\begin{proof}
Since $\pi_N$ is linear, 
it suffices to check these properties on the group-algebra basis.
For any $(g,h),(g',h')\in G\times G$, we have
\begin{align*}
\pi_N([g,h])\pi_N([g',h'])
=U_g^{(N)}U_{g'}^{(N)}\otimes
  \overline{U_h^{(N)}U_{h'}^{(N)}}
=U_{gg'}^{(N)}\otimes\overline{U_{hh'}^{(N)}}
=\pi_N([gg',hh']).
\end{align*}
Since $(U_g^{(N)})^\dagger=U_{g^{-1}}^{(N)}$, we also have
\[
\pi_N([g,h])^\dagger
=U_{g^{-1}}^{(N)}\otimes\overline{U_{h^{-1}}^{(N)}}
=\pi_N([g^{-1},h^{-1}]).
\]
The identity element $(e,e)$ maps to $I_N\otimes I_N=I_{N^2}$.
\end{proof}

The next proposition shows that, for every fixed polynomial, the normalized Frobenius
norm of its evaluation under $\pi_N$ converges to the Frobenius norm of the polynomial.

\begin{proposition}\label{lem:norm-converge}
For every fixed polynomial $Z\in\C[G\times G]$,
\[
\frac1N\|\pi_N(Z)\|_F\longrightarrow\|Z\|_F
\quad\text{as}\quad N\to\infty.
\]
\end{proposition}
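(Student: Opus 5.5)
Since $Z$ is a fixed polynomial, it has finite support: write $Z=\sum_{(g,h)\in F} Z(g,h)[g,h]$ for a finite set $F\subseteq G\times G$. The plan is to expand $\|\pi_N(Z)\|_F^2$ as a finite sum of normalized traces and pass to the limit term by term using \ref{U2}. Concretely,
\[
\|\pi_N(Z)\|_F^2=\Tr\bigl(\pi_N(Z)^\dagger\pi_N(Z)\bigr)
=\sum_{(g,h),(g',h')\in F}\overline{Z(g,h)}\,Z(g',h')\,
\Tr\Bigl(\bigl(U_g^{(N)}\otimes\overline{U_h^{(N)}}\bigr)^\dagger\bigl(U_{g'}^{(N)}\otimes\overline{U_{h'}^{(N)}}\bigr)\Bigr).
\]

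Next, I use the multiplicativity of the trace over tensor products together with $\overline{A}^\dagger\overline{B}=\overline{A^\dagger B}$. Each trace then factors as
\[
\Tr\bigl((U_g^{(N)})^\dagger U_{g'}^{(N)}\bigr)\cdot\overline{\Tr\bigl((U_h^{(N)})^\dagger U_{h'}^{(N)}\bigr)}.
\]
Dividing by $N^2$ and applying \ref{U2} to each of the two factors separately, the term indexed by $((g,h),(g',h'))$ converges to $1_{g=g'}\,1_{h=h'}$.

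Because the sum has $|F|^2$ terms and this number does not depend on $N$, the limit passes through the sum:
\[
\frac{1}{N^2}\|\pi_N(Z)\|_F^2\longrightarrow\sum_{(g,h)\in F}|Z(g,h)|^2=\|Z\|_F^2.
\]
Taking square roots, which is continuous on $[0,\infty)$, gives the claim.

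No step is really an obstacle; the only care needed is the conjugation bookkeeping in the second tensor factor. Even if that factor came out as a conjugate, its limit $1_{h=h'}$ is real, so the limit is unaffected.
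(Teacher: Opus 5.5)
Your proof is correct and matches the paper's argument essentially verbatim: both expand $\frac{1}{N^2}\|\pi_N(Z)\|_F^2$ as a finite double sum of products $\frac1N\Tr\bigl((U_g^{(N)})^\dagger U_{g'}^{(N)}\bigr)\cdot\overline{\frac1N\Tr\bigl((U_h^{(N)})^\dagger U_{h'}^{(N)}\bigr)}$, pass to the limit termwise using \ref{U2} (legitimate because the number of terms is independent of $N$), and take square roots. Your conjugation bookkeeping $\overline{A}^\dagger\overline{B}=\overline{A^\dagger B}$ is also right, so the second factor is exactly the complex conjugate of a normalized trace, just as in the paper.
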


\begin{proof}
Write $Z=\sum_{g,h}Z_{g,h}[g,h]$ and set
\[
\tau_N(g,g'):=\frac1N\Tr\left((U_g^{(N)})^\dagger U_{g'}^{(N)}\right).
\]
Expanding the squared Frobenius norm gives
\begin{align*}
\frac1{N^2}\|\pi_N(Z)\|_F^2
&=\sum_{g,h}\sum_{g',h'}
  \overline{Z_{g,h}}Z_{g',h'}\,
  \tau_N(g,g')\overline{\tau_N(h,h')}\\
&\longrightarrow
  \sum_{g,h}\sum_{g',h'}
  \overline{Z_{g,h}}Z_{g',h'}\,1_{g=g'}1_{h=h'}
=\sum_{g,h}|Z_{g,h}|^2=\|Z\|_F^2,
\end{align*}
where the limit follows from \ref{U2}, since $Z\in\mathbb{C}[G\times G]$ has only finitely many nonzero
coefficients. Taking square roots proves the claim.
\end{proof}

Then we prove that evaluating the polynomial construction from \Cref{sec:vectorization} under
$\pi_N$ yields finite-dimensional zero-energy PSD operators whose sum
approximates the identity.

\begin{lemma}\label{lem:finite-approx}
Fix $s\geq 10^4$. For every $\epsilon>0$, there are a dimension
$N\in\mathbb{N}^+$ and zero-energy PSD operators
$\{M_a\}_{a\in[2s]}$ on $\C^N\otimes\C^N$ such that
\[
\frac1N\left\|\sum_{a\in[2s]}M_a-I\right\|_F\leq\epsilon.
\]
\end{lemma}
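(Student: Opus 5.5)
The plan is to take the self-adjoint polynomial $W\in\C[G\times G]$ from \cref{lem:W}, with $n$ chosen large enough, and define, for each signed color $a=\sigma i$ with $\sigma\in\{\pm1\}$ and $i\in[s]$,
\[
M_a:=\pi_N(K_{a})\,\pi_N(W)^2\,\pi_N(K_{a}).
\]
We then check three things: positivity, zero energy, and closeness of $\sum_a M_a$ to the identity.

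\textbf{Positivity.} By \cref{prop:finite-evaluation}, $\pi_N$ is a $*$-homomorphism. Since $W$ and $K_a$ are self-adjoint, $\pi_N(W)$ and $\pi_N(K_a)$ are Hermitian. Hence
\[
M_a=(\pi_N(W)\pi_N(K_a))^\dagger(\pi_N(W)\pi_N(K_a))\succeq 0 .
\]

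\textbf{Zero energy.} We compute
\[
\pi_N(K_{\sigma i})=\tfrac14\bigl(I+\sigma U_i\bigr)\otimes\bigl(I-\sigma\overline{U_i}\bigr),
\]
using the shorthand $U_i=U_i^{(N)}$. By \ref{U1}, $U_i$ is a Hermitian involution, so $\tfrac12(I+\sigma U_i)$ is the orthogonal projector onto the $\sigma$-eigenspace $\mathcal V$ of $U_i$. Likewise $\tfrac12(I-\sigma\overline{U_i})$ is the projector onto the $(-\sigma)$-eigenspace of $\overline{U_i}$, which is $\overline{\mathcal V^\perp}=\bar{\mathcal V}^\perp$. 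So $\pi_N(K_a)$ is the orthogonal projector onto $\mathcal V\otimes\bar{\mathcal V}^\perp$. Then $\supp(M_a)$ is contained in that subspace, and \cref{lem:zero-energy} gives $\Es(M_a)=0$.

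\textbf{Closeness to identity.} Let $\epsilon>0$. First pick $n$ with $(2s+1)\lambda^n\le\epsilon/2$, using $\lambda<1$. Set
\[
Z:=[e,e]-\sum_{a}K_aW^2K_a .
\]
This is a fixed polynomial with finitely many nonzero coefficients, and \cref{lem:W} gives $\|Z\|_F\le\epsilon/2$. Since $\pi_N$ is linear, unital and multiplicative,
\[
I-\sum_a M_a=\pi_N(Z).
\]
By \cref{lem:norm-converge}, $\frac1N\|\pi_N(Z)\|_F\to\|Z\|_F\le\epsilon/2$. So for $N$ large enough, $\frac1N\|\pi_N(Z)\|_F\le\epsilon$.

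The only delicate points are the order of quantifiers (fix $n$, hence $W$, before letting $N\to\infty$) and the correct identification of the conjugated eigenspace in the second tensor factor. Everything else is a direct combination of earlier results.
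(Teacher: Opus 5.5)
Your proposal is correct and follows the paper's proof essentially step for step. You fix $n$ via \cref{lem:W} so that the polynomial error is at most $\epsilon/2$, then take $N$ large via \cref{lem:norm-converge}; positivity comes from the $*$-homomorphism factorization $M_a=(\pi_N(W)\pi_N(K_a))^\dagger(\pi_N(W)\pi_N(K_a))$, and zero energy from $\pi_N(K_{\sigma i})$ being the orthogonal projector onto $\mathcal V\otimes\bar{\mathcal V}^\perp$, with the conjugated eigenspace identified exactly as in the paper.
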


\begin{proof}
We first recall from \Cref{sec:vectorization} that 
\[
K_{\sigma i}=\frac14([e,e]+\sigma[g_i,e])([e,e]-\sigma[e,g_i])
\quad\text{and}\quad
S'(Z)=\sum_{i\in[s]}\sum_{\sigma\in\{-1,1\}}K_{\sigma i}ZK_{\sigma i}.
\]
Since $s \geq 10^4$, by \cref{lem:W}, 
for every $n\geq 1$,
there exists a self-adjoint $X$ such that $\norm{ [e,e] - S'(X^\dagger X) }_F \leq (2s+1)\lambda^n$.
By \(\lambda\in(0,1)\), we can choose a positive integer 
$n \ge \log_{\lambda}\frac{\epsilon/2}{2s+1}$ so that 
\[
\|S'(X^\dagger X)-[e,e]\|_F\leq \epsilon/2.
\]
Applying \Cref{lem:norm-converge} to the fixed polynomial
$S'(X^\dagger X)-[e,e]$ gives a sufficiently large dimension $N$ such that
\[
\frac1N\left\|\pi_N(S'(X^\dagger X))-I\right\|_F\leq\epsilon.
\]

Define $2s$ finite-dimensional operators
\[
M_{\sigma i}:=\pi_N(K_{\sigma i}X^\dagger XK_{\sigma i}) \in \mathrm{L}(\C^N\otimes\C^N),
\qquad i\in[s],\quad\sigma\in\{-1,1\}.
\]
By linearity of $\pi_N$, their sum is
\[
T:=\sum_{i\in[s]}\sum_{\sigma\in\{-1,1\}}M_{\sigma i}
=\pi_N\left(\sum_{i\in[s]}\sum_{\sigma\in\{-1,1\}}K_{\sigma i}X^\dagger XK_{\sigma i}\right)
=\pi_N(S'(X^\dagger X)),
\]
so they satisfy the required bound $\frac{1}{N} \|T-I\|_F\leq\epsilon$.

Furthermore, since $\pi_N$ preserves multiplication and adjoints by \Cref{prop:finite-evaluation},
and $K_{\sigma i}$ is self-adjoint, we have
\[
M_{\sigma i}
=\bigl(\pi_N(X)\pi_N(K_{\sigma i})\bigr)^\dagger
 \bigl(\pi_N(X)\pi_N(K_{\sigma i})\bigr),
\]
which proves that $M_{\sigma i}$ is PSD.

To show they have zero energy, let
$\Pi_{\sigma i}:=(I_N+\sigma U_i^{(N)})/2$ and
$\mathcal{V}_{\sigma i}:=\supp(\Pi_{\sigma i})$.
By \ref{U1},
\[
\Pi_{\sigma i}^2
=\frac{I_N+\sigma^2 (U_i^{(N)})^2+2\sigma U_i^{(N)}}{4}
=\Pi_{\sigma i},\qquad
\Pi_{\sigma i}^\dagger = \frac{I_N+\sigma (U_i^{(N)})^\dagger}{2} = \Pi_{\sigma i},
\]
so $\Pi_{\sigma i}$ is an orthogonal projector. Moreover,
\[
\pi_N(K_{\sigma i})
=\frac{I_N+\sigma U_i^{(N)}}2\otimes
 \overline{\left(\frac{I_N-\sigma U_i^{(N)}}2\right)}
=\Pi_{\sigma i}\otimes\overline{(I_N-\Pi_{\sigma i})}.
\]
Thus $\pi_N(K_{\sigma i})$ is the orthogonal projector onto
$\mathcal{V}_{\sigma i}\otimes\overline{\mathcal{V}_{\sigma i}}^{\perp}$.
Since $M_{\sigma i}$ is supported on this subspace,
\Cref{lem:zero-energy} gives $\Es(M_{\sigma i})=0$.
Relabeling these $2s$ operators by $a\in[2s]$ completes the proof.
\end{proof}

Finally, we apply \Cref{lem:normalize-povm} to normalize these operators to prove
\Cref{thm:main-povm}.

\begin{proof}[Proof of \Cref{thm:main-povm}]
Fix $s= 10^4$.
Applying \Cref{lem:finite-approx} with error $\sqrt{\epsilon/7}$,
we obtain a collection of zero-energy PSD operators $\{\tilde{M}_a\}_{a\in[2s]}$ satisfying
\[
\left\|\sum_{a\in[2s]}\tilde{M}_a-I\right\|_F^2\leq\frac{\epsilon N^2}{7}.
\]
Then by \Cref{lem:normalize-povm}, there exists a POVM
$\{M_a\}_{a\in[2s]}$ with $2s=2\cdot 10^4$ outcomes such that
\[
\sum_{a\in[2s]}\Es(M_a)
\leq 7N \cdot \frac{\epsilon N^2}{7}=\epsilon N^3. 
\]
Thus $\frac{1}{N^3}\sum_{a\in[2s]}\Es(M_a)\leq\epsilon$, as required.
\end{proof}

\subsection{Constructing the Matrix Families}
\label{sec:matrix-families}

The required matrices of \Cref{lem:matrix-families} can be constructed from
independent Haar-random orthogonal matrices. We use the following consequence of
the asymptotic freeness theorem by Collins and \'{S}niady
\cite{collins-sniady-2006-integration-with-respect-to-the-haar}, which states
that reduced products of independent Haar-random orthogonal matrices have
vanishing normalized traces.

\begin{lemma}[Corollary of {\cite[Theorem~5.2]{collins-sniady-2006-integration-with-respect-to-the-haar}}]\label{lem:haar}
Fix $s\geq 1$. For each $d\geq 1$, let
$O_1^{(d)},\ldots,O_s^{(d)}\in\mathrm{O}(d)$ be independent
Haar-random matrices, all defined on a common probability space.
For \emph{fixed} $k\geq 1$, indices $i_1,\ldots,i_k\in[s]$ with
$i_j\neq i_{j+1}$ for $j\in[k-1]$, and signs $\sigma_1,\ldots,\sigma_k\in\{-1,1\}$,
we have
\[
\frac1d\Tr\left(
(O_{i_1}^{(d)})^{\sigma_1}\cdots
(O_{i_k}^{(d)})^{\sigma_k}
\right)\longrightarrow 0
\quad\text{as}\quad d\to\infty,
\]
with probability one.
\end{lemma}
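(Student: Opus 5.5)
The statement to prove is \cref{lem:haar}. The plan is to derive it from the almost-sure asymptotic freeness of independent Haar orthogonal matrices (Collins--\'{S}niady, Theorem~5.2), which we take in the following form. For each $i$, the family $\{O_i^{(d)}, (O_i^{(d)})^\top\}$ converges almost surely in $*$-distribution to a Haar unitary $u_i$ in a tracial $W^*$-probability space $(\mathcal{A},\tau)$. Moreover, the families for different $i$ are asymptotically free, almost surely.

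First I would identify the limiting moments of a single family. For fixed $m\neq 0$ we have $\frac1d\Tr((O^{(d)})^m)\to 0$ almost surely. In expectation this is classical: by Diaconis--Shahshahani the traces $\Tr(O^m)$ converge to Gaussians with bounded mean, and the variance is bounded. Almost sure convergence then follows from Gaussian concentration of measure on $\mathrm{SO}(d)$ and $\mathrm{O}(d)$, combined with Borel--Cantelli, because $O\mapsto \frac1d\Tr(O^m)$ is $O(|m|/\sqrt{d})$-Lipschitz in the Hilbert--Schmidt norm. Hence the limit $u_i$ satisfies $\tau(u_i^m)=0$ for all $m\neq0$, i.e.\ it is a Haar unitary. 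Here $u_i^{-1}=u_i^*$ corresponds to $O^\top = O^{-1}$, so the exponents $\sigma_j=\pm1$ are covered.

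Next I would apply the definition of freeness to the reduced word. Set $a_j := (O_{i_j}^{(d)})^{\sigma_j}$. In the limit each $a_j$ becomes $u_{i_j}^{\sigma_j}$, which is centered because $\tau(u^{\pm1})=0$. The word is alternating, since consecutive indices differ. Freeness of $u_1,\dots,u_s$ then gives $\tau(u_{i_1}^{\sigma_1}\cdots u_{i_k}^{\sigma_k})=0$. Almost sure convergence in $*$-distribution of the joint family means that every fixed noncommutative $*$-monomial has normalized trace converging to its $\tau$-value. Therefore $\frac1d\Tr(a_1\cdots a_k)\to0$ almost surely, as claimed.

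One point deserves care: note that exponents $\sigma_j$ on adjacent letters with different indices are unconstrained. Consequently, only the alternation of indices matters, and no block-merging step is needed.

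The main obstacle is upgrading from convergence in expectation to convergence with probability one. The version of Collins--\'{S}niady I recall gives almost sure asymptotic freeness, but in case it only gives convergence in expectation, I would supplement it with concentration. The map $(O_1,\dots,O_s)\mapsto\frac1d\Tr(O_{i_1}^{\sigma_1}\cdots O_{i_k}^{\sigma_k})$ is $(k/\sqrt d)$-Lipschitz with respect to the Hilbert--Schmidt norm on $\mathrm{O}(d)^s$. Haar measure on $\mathrm{SO}(d)$ satisfies a log-Sobolev inequality with constant $O(1/d)$; the component issue for $\mathrm{O}(d)$ is handled by conditioning on the determinants. Together these give tail bounds of order $\exp(-c d^2 t^2/k^2)$, which are summable in $d$. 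Borel--Cantelli then yields almost sure convergence for each fixed word. Since there are only countably many words, these events hold simultaneously, as needed downstream in \cref{lem:matrix-families}.
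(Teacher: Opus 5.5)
Your proposal is correct and takes the same route as the paper, which gives no argument beyond citing the asymptotic freeness of independent Haar orthogonal matrices from Collins--\'{S}niady; you make explicit the centering $\tau(u_i^{\pm1})=0$ of the limiting Haar unitaries and apply the definition of freeness to the alternating word. Your concentration-plus-Borel--Cantelli fallback is also sound, with one step to spell out: conditioning on determinants only gives concentration around the conditional means, so you need those means to agree asymptotically. They do, since translating a component by a fixed reflection moves $\mathrm{O}(d)^s$ by Hilbert--Schmidt distance $2$ per component, so the conditional means differ by at most $O(k\sqrt{s/d})$.
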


The above lemma gives a probability-one event for each 
fixed choice of length, indices and signs. 
Since there are only countably many such choices,
we can strengthen the lemma to hold simultaneously for all choices.

\begin{corollary}\label{cor:haar-simultaneous}
Fix $s\geq 1$. For each $d\geq 1$, let
$O_1^{(d)},\ldots,O_s^{(d)}\in\mathrm{O}(d)$ be independent Haar-random matrices.
With probability one, we have
\[
\frac1d\Tr\left(
(O_{i_1}^{(d)})^{\sigma_1}\cdots
(O_{i_k}^{(d)})^{\sigma_k}
\right)\longrightarrow 0
\quad\text{as}\quad d\to\infty,
\]
\emph{simultaneously} for every $k\geq 1$, indices $i_1,\ldots,i_k\in[s]$
with $i_j\neq i_{j+1}$ for $1\leq j<k$, and signs
$\sigma_1,\ldots,\sigma_k\in\{-1,1\}$.
\end{corollary}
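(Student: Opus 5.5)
The plan is a countable-intersection argument applied directly to \cref{lem:haar}. First fix the index set. Let $\mathcal{W}$ be the set of all tuples $(k, i_1,\ldots,i_k,\sigma_1,\ldots,\sigma_k)$ with $k\geq 1$, $i_j\in[s]$, $i_j\neq i_{j+1}$ for $1\leq j<k$, and $\sigma_j\in\{-1,1\}$. For each fixed $k$ there are at most $s^k 2^k$ such tuples, so $\mathcal{W}$ is a countable union of finite sets and hence countable.

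Second, for each $w\in\mathcal{W}$, define the event
\[
\Omega_w:=\Bigl\{\tfrac1d\Tr\bigl((O_{i_1}^{(d)})^{\sigma_1}\cdots(O_{i_k}^{(d)})^{\sigma_k}\bigr)\to 0 \text{ as } d\to\infty\Bigr\}.
\]
I would first check that $\Omega_w$ is measurable. Each normalized trace is a continuous, hence measurable, function of the finitely many random matrices $O_1^{(d)},\ldots,O_s^{(d)}$. The convergence event of a sequence of real or complex random variables is measurable, since it can be written with countably many unions and intersections over rational tolerances. By \cref{lem:haar}, which applies because all matrices live on a common probability space, $\Pr[\Omega_w]=1$ for every $w$.

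Third, set $\Omega^\ast:=\bigcap_{w\in\mathcal{W}}\Omega_w$. By countable subadditivity,
\[
\Pr[(\Omega^\ast)^c]\leq\sum_{w\in\mathcal{W}}\Pr[\Omega_w^c]=0,
\]
so $\Pr[\Omega^\ast]=1$. On $\Omega^\ast$ the required convergence holds for every $w$ at once, which is exactly the claimed simultaneous statement.

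There is no substantive obstacle. The only points needing care are that $\mathcal{W}$ is genuinely countable, which holds because $k$ ranges over $\mathbb{N}$ and each level is finite, and that the same probability space is used for all $d$, which is assumed. If \cref{lem:haar} is read as producing its probability-one event separately for each $w$, the intersection still takes place on that single common space, so nothing further is needed.
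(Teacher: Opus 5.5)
Your proposal is correct and follows essentially the same route as the paper: apply \cref{lem:haar} to each of the countably many admissible tuples $(k,i_1,\ldots,i_k,\sigma_1,\ldots,\sigma_k)$ and intersect the resulting probability-one events. Your remarks on measurability and countable subadditivity spell out details that the paper's two-line proof leaves implicit.
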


\begin{proof}
For each admissible choice of $k$, $\{i_j\}_{j\in[k]}$ and $\{\sigma_j\}_{j\in[k]}$, \Cref{lem:haar}
gives an event of probability one on which the corresponding limit holds.
There are only countably many such choices, so the intersection of these
events also has probability one.
\end{proof}

The following construction now proves \Cref{lem:matrix-families}.

\begin{lemma}\label{lem:matrix-families-realization}
Fix an integer $s\geq 1$. For each $d\geq 1$, let
$O_1^{(d)},\ldots,O_s^{(d)}\in\mathrm{O}(d)$ be independent Haar-random matrices.
For each $i\in[s]$, define
\[
U_i^{(2d)}:=
\begin{pmatrix}
0 & (O_i^{(d)})^\top\\
O_i^{(d)} & 0
\end{pmatrix},
\qquad
U_i^{(2d+1)}:=U_i^{(2d)}\oplus (1),
\]
and set $U_i^{(1)}:=(1)$.
With probability one, the matrix families
$\left(\{U_i^{(N)}\}_{i\in[s]}\right)_{N\in\mathbb{N}^+}$
satisfy conditions \ref{U1} and \ref{U2}. In particular, deterministic families
with these properties exist.
\end{lemma}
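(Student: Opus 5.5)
We check the two conditions separately, working on the probability-one event of \Cref{cor:haar-simultaneous}.

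\textbf{Condition \ref{U1}.} This holds deterministically. For $N=2d$, $O:=O_i^{(d)}$ is real orthogonal, so $O^\top O=OO^\top=I_d$. The block matrix $U_i^{(2d)}$ is real and symmetric, since its upper-right block is the transpose of its lower-left block; hence it is self-adjoint. Squaring gives $\mathrm{diag}(O^\top O, OO^\top)=I_{2d}$. Taking a direct sum with $(1)$ preserves both properties, and the case $N=1$ is trivial.

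\textbf{Condition \ref{U2}.} First reduce to traces of single group elements. By \ref{U1}, $g\mapsto U_g^{(N)}$ is a unitary representation, so $(U_g^{(N)})^\dagger U_h^{(N)}=U_{g^{-1}h}^{(N)}$. Since $g^{-1}h=e$ iff $g=h$, it suffices to show $\frac1N\Tr(U_g^{(N)})\to 1_{g=e}$ for every fixed $g$. For $g=e$ this is immediate.

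Now fix $g\neq e$ in reduced form $g=g_{i_1}\cdots g_{i_k}$ with $i_j\neq i_{j+1}$; then $U_g^{(N)}$ is the product of the matrices $U_{i_j}^{(N)}$. Consider first even $N=2d$. Each factor is block off-diagonal. Multiplying two such factors gives
\[
\begin{pmatrix}0&O_a^\top\\ O_a&0\end{pmatrix}\begin{pmatrix}0&O_b^\top\\ O_b&0\end{pmatrix}=\begin{pmatrix}O_a^\top O_b&0\\ 0&O_aO_b^\top\end{pmatrix}.
\]
\begin{itemize}
\item If $k$ is odd, $U_g^{(2d)}$ is block off-diagonal, so its trace is $0$.
\item If $k$ is even, it is block diagonal. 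By induction, the upper block is $O_{i_1}^\top O_{i_2}O_{i_3}^\top O_{i_4}\cdots O_{i_k}$, with alternating exponents $\sigma_j=(-1)^j$ (writing $O^{-1}=O^\top$). The lower block is $O_{i_1}O_{i_2}^\top\cdots O_{i_k}^\top$. In both blocks, adjacent indices are distinct.
\end{itemize}
Thus $\frac1{2d}\Tr U_g^{(2d)}$ is the average of two normalized traces of the form in \Cref{cor:haar-simultaneous}. On the probability-one event both tend to $0$ as $d\to\infty$, and this holds simultaneously for all $g$ because $G$ is countable.

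For odd $N=2d+1$, we have $U_g^{(2d+1)}=U_g^{(2d)}\oplus(1)$. Hence
\[
\frac1{2d+1}\Tr U_g^{(2d+1)}=\frac{2d}{2d+1}\cdot\frac1{2d}\Tr U_g^{(2d)}+\frac1{2d+1}\to 0 .
\]
Both the even and odd subsequences converge, so the full sequence in $N$ converges. Since the event has probability one, it is nonempty, and fixing any realization in it yields deterministic families satisfying \ref{U1} and \ref{U2}.

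\textbf{Main obstacle.} The only delicate step is the bookkeeping of the diagonal blocks. One must check that they have exactly the alternating transpose pattern with distinct adjacent indices, which is guaranteed because the word is reduced. The rest is routine.
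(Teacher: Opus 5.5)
Your proposal is correct and follows essentially the same route as the paper. You verify \ref{U1} deterministically, reduce \ref{U2} to normalized traces of single group elements via $(U_g^{(N)})^\dagger U_h^{(N)}=U_{g^{-1}h}^{(N)}$, split on the parity of the reduced word length, and feed the alternating-transpose diagonal blocks into \Cref{cor:haar-simultaneous}, with the $\oplus(1)$ padding handling odd $N$. The only difference is cosmetic: you do the reduction to single traces at the start, while the paper does it at the end.
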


\begin{proof}
We first prove \ref{U1}, which holds deterministically.
Each $U_i^{(N)}$ is real symmetric and hence self-adjoint.
Since $O_i^{(d)}$ is orthogonal,
\[
(U_i^{(2d)})^2
=\begin{pmatrix}
(O_i^{(d)})^\top O_i^{(d)} & 0\\
0 & O_i^{(d)}(O_i^{(d)})^\top
\end{pmatrix}
=I_{2d}.
\]
One can verify that $(U_i^{(2d+1)})^2=I_{2d+1}$ also holds.
Thus \ref{U1} holds for all $N$, and
$g\mapsto U_g^{(N)}$ is a unitary representation of $G$.

To prove \ref{U2}, fix a realization in the probability-one event of
\Cref{cor:haar-simultaneous}. We claim that, for every $g\in G$,
\[
\frac1N\Tr(U_g^{(N)})\longrightarrow 1_{g=e}
\quad\text{as}\quad N\to\infty.
\]
For $g=e$, the normalized trace is exactly $1$.
Otherwise, write $g=g_{i_1}\cdots g_{i_k}$ in reduced form, so that
$k\geq 1$ and $i_j\neq i_{j+1}$ for $1\leq j<k$.
We first consider even dimension $N=2d$.

\begin{enumerate}
\item If $k$ is odd, then $U_g^{(2d)}$ is block off-diagonal and has trace zero.
\item If $k$ is even, then
    \[
    U_g^{(2d)}=
    \begin{pmatrix}
    D_1^{(d)} & 0\\
    0 & D_2^{(d)}
    \end{pmatrix},
    \quad\text{where}\quad
    \begin{array}{l}
    D_1^{(d)}:=(O_{i_1}^{(d)})^\top O_{i_2}^{(d)}\cdots
             (O_{i_{k-1}}^{(d)})^\top O_{i_k}^{(d)},\\[1ex]
    D_2^{(d)}:=O_{i_1}^{(d)}(O_{i_2}^{(d)})^\top\cdots
             O_{i_{k-1}}^{(d)}(O_{i_k}^{(d)})^\top.
    \end{array}
    \]
    Since $(O_i^{(d)})^\top=(O_i^{(d)})^{-1}$ and adjacent indices differ,
    \Cref{cor:haar-simultaneous} applies to both $D_1^{(d)}$ and $D_2^{(d)}$. Hence
    \[
    \frac1{2d}\Tr(U_g^{(2d)})
    =\frac12\left(\frac1d\Tr(D_1^{(d)})
                +\frac1d\Tr(D_2^{(d)})\right)
    \longrightarrow 0.
    \]
\end{enumerate}

For odd dimensions $N=2d+1$, we have
$U_g^{(2d+1)}=U_g^{(2d)}\oplus(1)$, so
\[
\frac1{2d+1}\Tr(U_g^{(2d+1)})
=\frac{2d}{2d+1}\left(\frac1{2d}\Tr(U_g^{(2d)})\right)
 +\frac1{2d+1}
\longrightarrow 1_{g=e}.
\]
Thus the claimed limit holds as $N\to\infty$.
Finally, since $g\mapsto U_g^{(N)}$ is a unitary representation,
for every $g,h\in G$ we have
\[
\frac1N\Tr\left((U_g^{(N)})^\dagger U_h^{(N)}\right)
=\frac1N\Tr(U_{g^{-1}h}^{(N)})
\longrightarrow 1_{g^{-1}h=e}=1_{g=h}.
\]
Therefore \ref{U2} holds on the same probability-one event.
Any realization in this event gives the required deterministic families.
\end{proof}

\begin{remark}
    The POVMs 
    for \Cref{thm:main-povm} can be constructed with only real 
    matrices, since we use random orthogonal matrices,
    and the required polynomials all have real coefficients.
\end{remark}

\section{Corollaries}\label{sec:corollaries}

We have now established \cref{res:main-epsilon}. Now we will describe how \crefrange{res:main-whp}{res:3col-rooted-tree} follow. We state the results in a bit more precise manner here.

\subsection{Coloring Cycles and Paths}

Let us first establish \cref{res:main-whp}:
\begin{theorem}\label{thm:oneway-whp}
    Assume that all nodes know the value of $n$. Then for any $C \ge 1$ there is a $1$-round one-way quantum algorithm that colors directed cycles with $2\cdot10^4$ colors with probability $1-1/n^C$.
\end{theorem}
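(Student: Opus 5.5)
The plan is to instantiate \cref{cor:one-round-coloring} with a failure parameter that depends on $n$ and then apply a union bound over the edges of the cycle. Since every node knows $n$ (an upper bound on the number of nodes), all nodes can compute the same value $\epsilon := n^{-(C+1)}$. They then deterministically run \cref{alg:coloring} with this $\epsilon$. More precisely, each node deterministically fixes the same dimension $N = N(\epsilon)$ and the same POVM $\{M_a\}_{a\in[q]}$ guaranteed by \cref{thm:main-povm}, with $q = 2\cdot 10^4$. These choices are a deterministic function of $n$ and $C$ alone. The algorithm therefore remains anonymous and uniform, and it uses exactly one round of one-way communication. The register size may depend on $n$, which the quantum-LOCAL model allows.

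Next, I will check correctness. By \cref{lem:energy-to-coloring} (through \cref{cor:one-round-coloring}), each edge $(v_i, v_{i+1})$ of the directed cycle is monochromatic with probability at most $\epsilon$. The actual cycle has at most $n$ nodes, hence at most $n$ edges. By the union bound, the probability that some edge is monochromatic is at most $n\epsilon = n^{-C}$. Hence the output is a proper $2\cdot 10^4$-coloring with probability at least $1 - 1/n^C$.

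The only point that needs care is that the per-edge guarantee of \cref{lem:energy-to-coloring} must hold for a cycle of any length, uniformly in the actual size $|V| \le n$. In particular, it must not require the cycle to be long compared to $N$. I expect this to hold because the per-edge failure probability equals $\frac{1}{N^3}\sum_a \Es(M_a)$, and this quantity depends only on the joint state of the two adjacent nodes' registers. Those registers are $\mathsf{B}_{i-1}, \mathsf{A}_i, \mathsf{B}_i, \mathsf{A}_{i+1}$, and they are in a product of maximally entangled pairs whenever the cycle has length at least $3$. For very short cycles, such as length $1$ or $2$, one may either exclude them, since they are not simple cycles, or note that the same marginal computation applies. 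Apart from this check, the argument is a direct union bound and presents no real obstacle.
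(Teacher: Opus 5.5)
Your proposal is correct and matches the paper's proof: set $\epsilon = n^{-(C+1)}$ in \cref{cor:one-round-coloring} (the paper cites it as \cref{res:main-epsilon}), then take a union bound over the at most $n$ edges to get failure probability at most $n \cdot n^{-(C+1)} = n^{-C}$. One small correction to your side remark: for a cycle of length $2$ the edge marginal is not the same, since both entangled pairs then straddle the edge instead of $\mathsf{B}_{i-1}$ and $\mathsf{A}_{i+1}$ being maximally mixed, so the right option is to exclude such degenerate cycles, as is standard for simple graphs.
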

\begin{proof}
    Set $\epsilon = 1/n^{C+1}$ in \cref{res:main-epsilon}. Then for each edge the probability that it is monochromatic is at most $1/n^{C+1}$. By the union bound, the probability that there is at least one monochromatic edge is at most $n \cdot 1/n^{C+1}$.
\end{proof}

Then let us switch from one-way algorithms in directed cycles to regular 1-round quantum algorithms in undirected cycles:
\begin{theorem}\label{thm:one-round-undirected-whp}
    Assume that all nodes know the value of $n$. Then for any $C \ge 1$ there is a $1$-round quantum-LOCAL algorithm that colors undirected cycles with $2\cdot10^4+2$ colors with probability $1-1/n^C$.
\end{theorem}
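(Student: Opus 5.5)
The plan is to reduce to \cref{thm:oneway-whp} by orienting each edge locally and then merging the resulting partial colorings. In an undirected cycle in the quantum-LOCAL model, nodes have unique identifiers and degree $2$, and each node can tell its two ports apart. Orient each edge $\{u,v\}$ from the endpoint with the smaller identifier to the one with the larger identifier. Each node knows the identifiers of its neighbors only after communicating, but we can avoid that extra round. Node $v$ runs two copies of the one-way protocol of \cref{thm:oneway-whp}, one through each port: it prepares two entangled pairs and sends one half of a maximally entangled state to each neighbor, together with its identifier written as a classical message in the same round. Every node then receives one register and one identifier from each neighbor.

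\textbf{Decomposition into directed paths.} With the orientation by identifiers, the cycle splits into maximal directed paths. Their endpoints are local minima (sources) and local maxima (sinks). If the cycle is consistently oriented as a whole, it is a single directed cycle, but identifiers can never increase all the way around, so there is always at least one source and at least one sink. On each directed path, every internal node has exactly one in-neighbor and one out-neighbor, and it applies the POVM $\{M_a\}$ of \cref{thm:main-povm} to $\mathsf{B}_{\text{in}}\otimes\mathsf{A}_{\text{out}}$, i.e.\ to the register received from its in-neighbor together with the register it kept for its out-neighbor. Each node knows which neighbor is which from the identifiers it received. Registers that go unused are simply discarded.

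\textbf{Error analysis.} The error analysis of \cref{lem:energy-to-coloring} is local: the joint state of two consecutive internal nodes on a directed path is exactly the same as in the directed-cycle protocol, because the relevant registers form a chain of maximally entangled pairs. Hence, for every edge between two internal nodes of a path, the monochromatic probability is at most $\epsilon$, as in \cref{cor:one-round-coloring}. I expect the main obstacle to be justifying this locality claim rigorously from the energy characterization, namely that the marginal on two adjacent measurements depends only on the three pairs involved. If that is awkward to show directly, I would instead embed each path into a virtual directed cycle, which does not change the local marginals.

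\textbf{Colors and failure probability.} Sources and sinks receive the two extra colors $2\cdot10^4+1$ and $2\cdot10^4+2$. Two sources are never adjacent, and neither are two sinks, and an extra color never clashes with a color in $[2\cdot10^4]$. Every edge either touches an endpoint, in which case it is properly colored deterministically, or is internal, in which case it fails with probability at most $\epsilon$. Setting $\epsilon=1/n^{C+1}$ and taking a union bound over the $n$ edges gives total failure probability at most $1/n^C$. This uses $2\cdot 10^4+2$ colors and a single round.
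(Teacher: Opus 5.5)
Your proposal is correct and matches the paper's proof essentially step for step. Each node prepares two entangled pairs and sends one half of each, together with its identifier, in a single round; edges are oriented by increasing identifier; local minima and maxima get the two extra colors; and every other node applies the one-way POVM to the register received from its smaller-identifier neighbor tensored with the register it kept for its larger-identifier neighbor. The locality point you flag is easy to settle, and it is a step the paper leaves implicit: the reduced state on the four registers measured by two adjacent internal nodes is $\frac{I}{N}\otimes\ketbra{\Phi_N}{\Phi_N}\otimes\frac{I}{N}$, exactly as in a directed cycle, so the per-edge bound $\frac{1}{N^3}\sum_a \Es(M_a)\le\epsilon$ applies directly.
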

\begin{proof}
    Recall that the algorithm of \cref{thm:oneway-whp} consists of these steps:
    \begin{enumerate}
        \item Node $v$ prepares a pair of entangled registers $(A_v,B_v)$.
        \item Node $v$ sends $B_v$ to its successor $v+1$.
        \item Node $v$ receives $B_{v-1}$ from its predecessor $v-1$.
        \item Node $v$ measures $(B_{v-1}, A_v)$.
    \end{enumerate}
    This, of course, assumes that there is a unique well-defined predecessor and successor. In the undirected setting, each node has two neighbors, and it can refer to these with port numbers $1$ and $2$, but the port numbering can be arbitrary; it does not reveal any globally consistent orientation.
    
    However, unique identifiers will help. If each node compares its identifier with those of its two neighbors, it can distinguish between the following two cases:
    \begin{enumerate}
        \item The node is a local minimum or a local maximum w.r.t.\ unique identifiers.
        \item The node is in the middle of an increasing path of unique identifiers, and the orientation in which the identifiers grow gives us a consistent direction for the path fragment between the nearest local minimum and local maximum.
    \end{enumerate}
    So the idea is simple: we use the two extra colors for local minima and local maxima, and apply the algorithm for directed paths for the oriented segments between two such extrema.

    This gives a trivial $2$-round algorithm: first exchange identifiers to discover the identifiers of your two neighbors, and then you also know in which direction to send messages for \cref{thm:oneway-whp}.

    But we can do better; we can solve the task in $1$ round as follows. Each node creates \emph{two} pairs of entangled registers, $(A_v^1,B_v^1)$ and $(A_v^2,B_v^2)$, and then for each port $p$, it sends in the same round both its identifier $v$ and register $B_v^p$ to port $p$. Then when it receives the messages, it knows if it is a local minimum or maximum, and if so, uses the special colors. Otherwise it follows the one-way algorithm; now it knows which of the ports $p$ points in the direction of increasing identifiers. Then node $v$ defines that $B_{v-1}$ is the register it received from the neighbor with the smaller identifier (``predecessor''), $A_v = A_v^p$ is the register it kept for itself, while $B_v = B_v^p$ is the register it sent to the neighbor with the larger identifier (``successor'').

    Now if an edge $\{u,v\}$ is such that at least one of the nodes is a local minimum or maximum, it will be trivially properly colored. Otherwise, it is in the middle of an increasing path of identifiers, and a proper coloring follows from the correctness of the one-way algorithm.
\end{proof}

Then \cref{res:3color} follows easily:
\begin{theorem}\label{thm:3color}
    Assume that all nodes know the value of $n$. Then for any $C \ge 1$ there is a $4$-round quantum-LOCAL algorithm that $3$-colors undirected cycles and paths with probability $1-1/n^C$.
\end{theorem}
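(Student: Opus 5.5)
The plan is to run the $1$-round algorithm of \cref{thm:one-round-undirected-whp} first, and then finish with a purely classical, deterministic color reduction that uses the remaining $3$ rounds. The algorithm of \cref{thm:one-round-undirected-whp} produces, with probability $1-1/n^C$, a proper coloring of the undirected cycle with $k := 2\cdot 10^4+2$ colors. Since any classical LOCAL algorithm can be simulated in quantum-LOCAL, the output of the whole procedure is then a proper $3$-coloring whenever the first phase succeeds. So the failure probability is exactly that of the first phase, and \cref{thm:3color} reduces to a deterministic claim: a proper $k$-coloring of a cycle or path with $k$ a fixed constant can be reduced to a proper $3$-coloring in $3$ rounds.

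\textbf{Paths.} For paths we treat the path as a cycle segment. Endpoints have degree $1$, and all the reduction rules below only require that each node avoid the colors of its at most two neighbors, so they apply verbatim. For the first phase on a path, an endpoint can simply take one of the two extremum colors. This is safe because its unique neighbor is either also an extremum (and then the two have different identifiers, so one is a local minimum and the other a local maximum) or lies in the interior of a monotone segment, which never uses the extremum colors.

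\textbf{Color reduction.}
\begin{enumerate}
\item Round $2$: one-round color reduction. Each node learns its neighbors' colors $y,z$ and outputs an element of $F_x\setminus(F_y\cup F_z)$, where $\{F_x\}_{x\in[k]}$ is a family of subsets of $[m]$ in which no set is covered by the union of two others. This is Linial's cover-free family argument for maximum degree $2$, and $m=O(\log k)$ suffices. Two adjacent nodes then output distinct elements, since each node's output lies outside its neighbor's set. This brings $20002$ colors down to a few dozen.
\item Round $3$: apply the same reduction again, with a cover-free family on the new, smaller palette. Alternatively, and for sharper constants, use the one-round reductions of \cite{kohonen-korhonen-etal-2017-distributed-colour-reduction}, which give $k\mapsto O(\log\log k)$ on cycles. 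The goal is to reach at most $4$ colors, or at least a palette small enough for the next step.
\item Round $4$: eliminate the excess colors. A single greedy step removes one color: nodes of color $4$ form an independent set, and each recolors to a color in $\{1,2,3\}$ unused by its two neighbors.
\end{enumerate}
In steps 1–3 the rounds are pipelined. Since nodes already see colors of their radius-$r$ ball after $r$ further rounds, one may equivalently argue that a radius-$3$ view of the $k$-coloring determines a proper $3$-coloring.

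\textbf{Main obstacle.} I expect the hard part to be the arithmetic of constants, namely showing that three classical rounds really suffice to go from $20002$ colors to $3$. Two rounds of cover-free reduction give roughly $k\to O(\log k)\to O(\log\log k)$, and it must be checked that this lands at $\le 4$ so that one greedy round finishes. If the generic bounds are too loose, I would instead invoke the explicit, computer-verified one-round reductions of \cite{kohonen-korhonen-etal-2017-distributed-colour-reduction} for small palettes, or reorganize the budget. In particular, the identifier exchange of the first round can already be used to start the reduction, which saves a round if needed.
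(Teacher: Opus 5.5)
Your overall structure matches the paper's: one quantum round via \cref{thm:one-round-undirected-whp}, then three rounds of deterministic classical color reduction, so the failure probability is inherited from the first round. Your treatment of path endpoints as extrema in the first phase is also a valid alternative to the paper's simulation of virtual nodes. The gap is the classical part, which is the whole content beyond \cref{thm:one-round-undirected-whp}. You flagged it as the main obstacle but left it unresolved, and your primary schedule provably fails.

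Here is why. A family in which no set is covered by the union of two others is in particular an antichain. By Sperner's theorem, a one-round cover-free reduction from $k$ colors to the palette $[m]$ therefore needs $\binom{m}{\lfloor m/2\rfloor}\ge k$.
\begin{itemize}
\item Starting from $k=2\cdot 10^4+2$, round 2 leaves $m\ge 17$ colors, since $\binom{16}{8}=12870$.
\item Round 3 (cover-free again) leaves at least $6$ colors, since $\binom{5}{2}=10<17$.
\item The greedy round 4 removes only one color class, leaving at least $5$ colors.
\end{itemize}
Actual cover-free families are far worse than the Sperner bound. The underlying reason is that cover-free families extract only one logarithm per round, $k\mapsto\Theta(\log k)$, whereas a radius-$1$ view permits $k\mapsto O(\log\log k)$.

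Your fallback, switching to \cite{kohonen-korhonen-etal-2017-distributed-colour-reduction} in round 3, comes after round 2 has already been spent on the weak step, and you do not verify that it closes. Your other fallback, saving a round by using the identifier exchange of round 1, does not work either. The colors are only produced by the measurement at the end of round 1, so no node knows anything about its neighbors' colors before round 2.

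The fix, which is exactly what the paper does, is to apply \cite{kohonen-korhonen-etal-2017-distributed-colour-reduction} as a black box for rounds 2--4. It reduces a proper $10^{100}$-coloring of a path or cycle to a proper $3$-coloring in three rounds, needs no consistent orientation, and applies here because $2\cdot 10^4+2\le 10^{100}$. With this, the constant arithmetic you worried about disappears.

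If you use that algorithm as a black box on paths, you also need to handle endpoints, for example by letting each endpoint simulate a few virtual neighbors as the paper does. Your remark that the rules only require avoiding the neighbors' colors was specific to the cover-free and greedy steps.
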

\begin{proof}
    In the first round, apply \cref{thm:one-round-undirected-whp}. Then apply the $3$-round color reduction algorithm from \cite{kohonen-korhonen-etal-2017-distributed-colour-reduction} that reduces the number of colors from $10^{100}$ to $3$ in three rounds.

    This covers the case of cycles. If we are in a path, each endpoint of a path can pretend that it is in the middle of a path; it is sufficient to simulate $4$ additional nodes.
\end{proof}

\subsection{Coloring Rooted Trees}

It turns out that it will be convenient to establish next \cref{res:3col-rooted-tree}:
\begin{theorem}\label{thm:3-col-rooted-tree}
    Assume that all nodes know the values of $n$ and $\Delta$. Then for any $C \ge 1$ there is an $O(\log^* \Delta)$-round quantum-LOCAL algorithm that $3$-colors rooted trees of maximum degree $\Delta$ with probability $1-1/n^C$.
\end{theorem}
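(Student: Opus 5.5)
The plan follows the sketch in \cref{sec:into-cor}. Decompose the rooted tree into directed paths, color each path with few colors, combine these into a proper coloring of the tree with $3^{\Delta}$ colors, and then reduce the number of colors classically.

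\textbf{Decomposition into paths.} Every non-root node $v$ knows its parent, and the parent assigns each of its at most $\Delta$ children a distinct index in $[\Delta]$; learning this index costs one round. For each $j \in [\Delta]$, let $\mathcal{P}_j$ be the set of edges $\{v, \mathrm{parent}(v)\}$ where $v$ is the child with index $j$. Each node has at most one child with index $j$ and at most one parent, so every connected component of $\mathcal{P}_j$ is a directed path, oriented toward the root. Every tree edge lies in exactly one $\mathcal{P}_j$.

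\textbf{Coloring the paths and combining.} On each $\mathcal{P}_j$ we run the algorithm of \cref{thm:3color} in parallel for all $j$, and we set the failure parameter so that each run fails with probability at most $1/(\Delta n^{C})$. A union bound over $j$ then gives overall failure probability at most $1/n^C$. All $\Delta$ runs happen at the same time, and each node takes part in at most two path edges per index, so this costs $O(1)$ rounds. Since the paths are directed, we could instead use the one-way algorithm of \cref{thm:oneway-whp} followed by a constant-round color reduction. Each node then outputs the vector $(c_1(v), \dots, c_\Delta(v)) \in [3]^\Delta$, where nodes not on any path for index $j$ use a default value. If $v$ is the $j$th child of $u$, then $c_j(u) \neq c_j(v)$, so this is a proper coloring of the tree with $3^\Delta$ colors.

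\textbf{Color reduction, the main step.} We still need to go from $3^\Delta$ colors to $3$ colors in $O(\log^* \Delta)$ rounds. The tree is rooted, so we can apply the Cole--Vishkin / Goldberg--Plotkin--Shannon color reduction, in which each node compares its color only with its parent's. This takes a $k$-coloring to a $6$-coloring in $O(\log^* k)$ rounds, and with $k = 3^\Delta$ that is $O(\log^* \Delta)$ rounds. We then go from $6$ colors to $3$ with the standard shift-down trick in $O(1)$ rounds. I expect this to be the part needing most care, but only as routine bookkeeping: the reduction must start from an arbitrary proper coloring rather than identifiers, and the total failure probability must stay bounded.

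\textbf{Tightness and the role of $\Delta$.} The matching lower bound $\Omega(\log^* \Delta)$ is the quantum-LOCAL result of \cite{fraigniaud-magniez-ziccardi-2026-no-distributed-quantum}. Since every node knows $\Delta$, all nodes can agree in advance on the exact number of reduction steps.
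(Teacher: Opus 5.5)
Your proposal is correct and follows essentially the same route as the paper: split the edges into $\Delta$ classes of directed paths by child index, $3$-color all classes in parallel with \cref{thm:3color}, combine the results into a proper $3^{\Delta}$-coloring via color vectors, and reduce to $3$ colors in $O(\log^* \Delta)$ rounds, finishing with shift-down. The differences are only bookkeeping. The paper reduces to $4$ colors and then shifts down once, while you go via Cole--Vishkin to $6$ colors and shift down three times. The paper also union-bounds over up to $\Delta n \le n^2$ path instances with error $1/n^{C+2}$ each. Your union bound over the $\Delta$ classes is equally valid, because the failure analysis behind \cref{thm:3color} is a per-edge union bound over at most $n$ edges and so covers a whole disjoint union of paths at once.
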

\begin{proof}
    We will follow the convention that the edges are oriented towards the root (so each node has at most $1$ outgoing edge).
    
    First, we partition the edges into $\Delta$ classes, as follows: for each node, the $i$th incoming edge is assigned to class $i$. Note that for each $i$, the edges of class $i$ form a collection of (directed) paths; each node belongs to up to $\Delta$ classes, but each edge belongs to exactly one class.

    Then, in parallel for each class $i$, we apply \cref{thm:3color} to $3$-color the collection of paths (there are at most $\Delta n \le n^2$ applications of \cref{thm:3color}, so we can choose error probability $1/n^{C+2}$ for each application, so that they all succeed with probability at least $1-1/n^C$). We label each node $v$ with a color vector $(c_i)_i$, where $c_i$ is its color for class $i$ (or $1$ if it did not participate in this class). Note that these vectors form a proper coloring of the tree, with $3^{\Delta}$ colors (edge $\{u,v\}$ is properly colored, since it belongs to some class $i$, and then component $i$ of the color vectors differs between $u$ and $v$).
    
    The rest uses standard techniques \cite[Chapter~3]{barenboim-elkin-2013-distributed-graph-coloring}: For example, we can use the color reduction technique of \cite{naor-stockmeyer-1995-what-can-be-computed-locally} to reduce the number of colors from $3^{\Delta}$ to $4$, in $O(\log^* 3^{\Delta}) = O(\log^* \Delta)$ rounds. Then we can shift colors down, so that for each node all children have the same color. In particular, for nodes of color $4$ there is at least one color from $\{1,2,3\}$ that is not used by any of their neighbors; nodes of color $4$ can hence safely recolor themselves.
\end{proof}

\subsection{Coloring Graphs, with Simple Applications}

Now that we can color rooted trees, we can also use standard techniques to color graphs. Let us first recover Linial's $O(\Delta^2)$-coloring algorithm:
\begin{theorem}\label{thm:delta-squared-coloring}
    Assume that all nodes know the values of $n$ and $\Delta$. Then for any $C \ge 1$ there is an $O(\log^* \Delta)$-round quantum-LOCAL algorithm that finds an $O(\Delta^2)$-coloring in graphs of maximum degree $\Delta$ with probability $1-1/n^C$.
\end{theorem}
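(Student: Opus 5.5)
The plan is to follow the classical reduction from graph coloring to rooted-forest coloring (Goldberg–Plotkin–Shannon, Panconesi–Rizzi) and use \cref{thm:3-col-rooted-tree} as the only quantum subroutine. The reduction is:
\begin{enumerate}
\item orient the graph acyclically;
\item split the edges into $\Delta$ rooted forests;
\item $3$-color all forests in parallel;
\item take the product coloring and reduce its colors.
\end{enumerate}

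\textbf{Forest decomposition.} Each node first exchanges unique identifiers with its neighbors in one round. Orient every edge from the endpoint with the larger identifier to the one with the smaller identifier; this orientation is acyclic. Each node $v$ numbers its outgoing edges $1,\dots,\deg^+(v)\le\Delta$ using its port numbers. Let $F_j$ consist of the outgoing edges labeled $j$. In $F_j$ every node has out-degree at most one and there are no cycles, so $F_j$ is a rooted forest with maximum degree at most $\Delta$, oriented towards the roots. This matches the convention in the proof of \cref{thm:3-col-rooted-tree}. Every node knows its parent in $F_j$ and its children in $F_j$ after the identifier exchange.

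\textbf{Parallel $3$-coloring.} Run the algorithm of \cref{thm:3-col-rooted-tree} on all $\Delta$ forests simultaneously, with failure probability $1/n^{C+1}$ for each. The forests share nodes but not edges, so the quantum messages for different $j$ travel on disjoint edges of $G$. Each node simply runs $\Delta$ independent copies, which costs $O(\log^*\Delta)$ rounds in total. A union bound over $\Delta\le n$ forests gives overall success probability at least $1-1/n^C$. A root of $F_j$, or a node isolated in $F_j$, still receives some color, which is all we need.

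\textbf{Combining and reducing colors.} Give each node the vector $(c_1,\dots,c_\Delta)\in[3]^\Delta$, where $c_j$ is its color in $F_j$. Every edge of $G$ lies in some $F_j$, and its endpoints differ in coordinate $j$, so the result is a proper $3^\Delta$-coloring of $G$. Then apply Linial's color reduction, which is deterministic and classical, on $G$. Each step maps a proper $k$-coloring to a proper $O(\Delta^2\log k)$-coloring in one round. Starting from $k=3^\Delta$, the first step already yields $O(\Delta^3)$ colors, and $O(\log^*\Delta)$ further steps reach $O(\Delta^2)$ colors. The total round count is $1+O(\log^*\Delta)+O(\log^*\Delta)=O(\log^*\Delta)$.

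The main obstacle I expect is verifying that \cref{thm:3-col-rooted-tree} can be applied as a black box to forests embedded in $G$, where the parallel executions share nodes. In that theorem's own proof, the $\Delta$ path classes also share nodes, so the same parallel-simulation argument should carry over directly. The remaining concerns are bookkeeping: the bound on the number of parallel invocations for the union bound, and the fact that the required knowledge of $n$ and $\Delta$ is available.
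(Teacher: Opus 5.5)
Your proposal is correct and follows the paper's proof: an identifier-based acyclic orientation, splitting the edges into at most $\Delta$ rooted forests by outgoing-edge index, $3$-coloring them in parallel with \cref{thm:3-col-rooted-tree} under a union bound, taking the $3^{\Delta}$ product coloring, and finishing with Linial's cover-free-family color reduction. One small correction to your description of that last step: iterating $k \mapsto O(\Delta^2\log k)$ by itself stalls at $O(\Delta^2\log\Delta)$, so the drop from $O(\Delta^3)$ to $O(\Delta^2)$ instead uses Linial's algebraic (low-degree polynomial) cover-free family, which takes one round, so the reduction from $3^{\Delta}$ colors costs only $O(1)$ rounds.
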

\begin{proof}
    Orient the edges from smaller to larger unique identifier. Divide the edges into up to $\Delta$ classes so that for each node its $i$th outgoing edge belongs to class $i$. Each class forms a collection of rooted trees. Apply \cref{thm:3-col-rooted-tree} in parallel for each class (again with an appropriately chosen error probability). Put together, the colorings of the forests form a $3^{\Delta}$-coloring of the original graph. Then use Linial's \cite{linial-1992-locality-in-distributed-graph-algorithms} color reduction technique (based on cover-free families) to reduce the number of colors first down to $O(\Delta^2)$ in $O(\log^* 3^{\Delta}) = O(\log^* \Delta)$ rounds.
\end{proof}

We could combine this primitive with state-of-the-art distributed graph coloring algorithms \cite{maus-tonoyan-2022-linial-for-lists,barenboim-elkin-goldenberg-2022-locally-iterative}, and obtain an algorithm for coloring graphs of maximum degree $\Delta$ in $\tilde{O}(\sqrt{\Delta})$ rounds. However, we will settle for the following much simpler bound that suffices for us; overall, on top of \cref{res:main-epsilon}, we only need textbook-level ingredients \cite[Chapter 4]{hirvonen-suomela-2020-distributed-algorithms-2020}:

\begin{theorem}\label{thm:delta+1-coloring}
    Assume that all nodes know the values of $n$ and $\Delta$. Then for any $C \ge 1$ there is an $O(\Delta)$-round quantum-LOCAL algorithm that finds a ${(\Delta+1)}$-coloring in graphs of maximum degree $\Delta$ with probability $1-1/n^C$.
\end{theorem}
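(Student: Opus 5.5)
Start from \cref{thm:delta-squared-coloring}: in $O(\log^* \Delta)$ rounds and with probability at least $1-1/n^{C+1}$, it produces a proper coloring with $c = O(\Delta^2)$ colors. Then apply a purely classical, deterministic color reduction from $O(\Delta^2)$ colors down to $\Delta+1$ colors in $O(\Delta)$ rounds. If the first phase succeeded, this second phase always succeeds, so the overall success probability is at least $1-1/n^C$.

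The simplest textbook reduction, which removes one color per round, would take $O(\Delta^2)$ rounds. To reach $O(\Delta)$, use the standard grouping trick. Given a proper $m$-coloring with $m \le k(\Delta+1)$, partition the color classes into $k$ groups of $\Delta+1$ colors each. Combine pairs of groups as follows.
\begin{itemize}
\item Group $2j-1$ and group $2j$ together contain $2(\Delta+1)$ color classes.
\item Their union induces a properly colored subgraph of maximum degree $\Delta$.
\item This subgraph can be recolored to $\Delta+1$ colors with the one-color-per-round reduction. In each round, every node of the currently largest color picks the smallest color unused by its neighbors; such nodes form an independent set, so they can do this simultaneously. This takes $\Delta+1$ rounds.
\item All pairs are handled in parallel, using disjoint palettes for different pairs.
\end{itemize}
One phase halves the number of groups, so it maps $k(\Delta+1)$ colors to $\lceil k/2\rceil(\Delta+1)$ colors in $O(\Delta)$ rounds.

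Repeating this naively would give $O(\Delta\log\Delta)$ rounds. To avoid the logarithmic factor, apply the reduction in a Kuhn--Wattenhofer style sweep instead. Start with $k = O(\Delta)$ groups, and at step $t$ merge the current accumulated class with group $t+1$. Each merge reduces $2(\Delta+1)$ colors to $\Delta+1$ in $\Delta+1$ rounds, and there are $k-1$ merges. That alone is $O(\Delta^2)$ rounds again, so the sequential sweep is not enough.

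The actual fix is the standard one: \emph{pipelining}. Equivalently, use the textbook fact (\cite[Chapter 4]{hirvonen-suomela-2020-distributed-algorithms-2020}) that for a proper coloring with $O(\Delta^2)$ colors the greedy recoloring can be organized so that each node is finalized within $O(\Delta)$ rounds. Alternatively, since only $O(\Delta)$ rounds are needed, a simple route is to first reduce from $O(\Delta^2)$ to $O(\Delta)$ colors and then remove one color per round. For the first reduction I would use the textbook algorithm: Linial's cover-free family technique together with a defective-coloring step, or the additive-group ``$x \mapsto a + bx \bmod p$'' trick with prime $p=\Theta(\Delta)$. Here each node repeatedly tests whether its line value collides with any neighbor's value, and it commits as soon as it does not. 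Each neighbor's line meets a node's line at most once, so among $\Delta+1$ consecutive evaluation points some point is free of collisions. After $O(\Delta)$ rounds every node has committed to a pair (evaluation index, value), which gives $O(\Delta)$ colors. Then the one-by-one greedy reduction removes the remaining $O(\Delta)$ colors in $O(\Delta)$ rounds. I expect this step — correctly citing and verifying the classical $O(\Delta^2)\to O(\Delta)$ reduction — to be the only nontrivial part.

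The final bound is $O(\log^* \Delta) + O(\Delta) = O(\Delta)$ rounds. The whole argument uses randomness and quantumness only in the first phase, inherited from \cref{thm:delta-squared-coloring}.
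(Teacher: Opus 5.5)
Your final route is the paper's: \cref{thm:delta-squared-coloring} gives an $O(\Delta^2)$-coloring, the additive-group coloring of \cite{barenboim-elkin-goldenberg-2022-locally-iterative} reduces it to $O(\Delta)$ colors in $O(\Delta)$ rounds, and one-color-per-round greedy reduction finishes in another $O(\Delta)$ rounds. The paper simply cites these classical steps, and citing them is enough. You can delete the grouping, sweep, and ``pipelining'' detours. The pipelining claim, as phrased, is not a standard fact you can lean on.

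What does not work is your self-contained sketch of the additive-group step, because it mixes two different schemes. Suppose a node commits to the pair (evaluation index, value). Then committed nodes can be ignored, and your count is right: distinct lines $x \mapsto a+bx$ meet at most once, so one of $\Delta+1$ consecutive points is collision-free. But the palette then has $(\Delta+1)\cdot p \ge (\Delta+1)^2$ colors, since the $\Delta+1$ evaluation points must be distinct elements of $\mathbb{Z}_p$. So nothing has been reduced. To get $p = O(\Delta)$ colors, the final color must be the value alone. Then a committed node must freeze its value, and active neighbors must keep avoiding it. A neighbor's trajectory is now no longer a single line. It can block you once while both of you are active, and once more after it freezes, since your moving value passes its frozen value once per $p$ steps. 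So both ``at most once'' and the window of $\Delta+1$ are wrong.

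The correct analysis is exactly that of \cite{barenboim-elkin-goldenberg-2022-locally-iterative}. Take a prime $p > 2\Delta$ with $p^2$ at least the number of input colors. Check collisions against all neighbors' current values; this rules out simultaneous commitments to the same value, so the coloring stays proper. Then every node commits within $2\Delta+1$ rounds to one of $p = O(\Delta)$ values. With that fix, the rest of your argument goes through, including the $O(\log^*\Delta)+O(\Delta)$ round count and the probability bound.
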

\begin{proof}
    Apply \cref{thm:delta-squared-coloring}. Then apply additive-group coloring from \cite{barenboim-elkin-goldenberg-2022-locally-iterative} to reduce the number of colors to $O(\Delta)$ in $O(\Delta)$ rounds, and finally do trivial color reduction to push it down to $\Delta+1$ in $O(\Delta)$ rounds.
\end{proof}
\begin{theorem}
    Assume that all nodes know the values of $n$ and $\Delta$. Then for any $C \ge 1$ there is an $O(\Delta)$-round quantum-LOCAL algorithm that finds a maximal independent set in graphs of maximum degree $\Delta$ with probability $1-1/n^C$.
\end{theorem}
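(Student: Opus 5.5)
We obtain the maximal independent set from the $(\Delta+1)$-coloring of \cref{thm:delta+1-coloring} via the standard sequential sweep over color classes. The only quantum ingredient is the coloring itself; everything afterwards is classical and deterministic, so it cannot fail once the coloring is proper.

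\textbf{Step 1: compute the coloring.} We run the algorithm of \cref{thm:delta+1-coloring} with the same constant $C$. This takes $O(\Delta)$ rounds, and with probability at least $1-1/n^C$ it outputs a proper coloring $c : V \to \{1,\dots,\Delta+1\}$.

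\textbf{Step 2: sweep the color classes.} We process the color classes in $\Delta+1$ phases, each taking one round. In phase $j$, every node $v$ with $c(v)=j$ that has no neighbor already in the independent set joins the set $I$ and informs its neighbors. Every node learns the decisions of its neighbors within that round, so every later phase has accurate information. The sweep therefore adds $\Delta+1 = O(\Delta)$ rounds, and the total running time is $O(\Delta)$.

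\textbf{Step 3: correctness, conditioned on the coloring being proper.}
\begin{itemize}
\item \emph{Independence.} Suppose two adjacent nodes both joined $I$. They have different colors, so one of them, say $u$, joined in an earlier phase than the other, $v$. When $v$'s phase came, $u$ was already in $I$, so $v$ would not have joined. This is a contradiction.
\item \emph{Maximality.} Take any node $v \notin I$. In phase $c(v)$, the only reason $v$ declined to join is that some neighbor of $v$ was already in $I$. So every node outside $I$ has a neighbor in $I$.
\end{itemize}
The only failure event is that the coloring from Step 1 is improper, so the overall success probability is at least $1-1/n^C$.

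There is no real obstacle here. The only points needing care are that each phase costs a constant number of rounds and that the failure probability is inherited unchanged from \cref{thm:delta+1-coloring}.
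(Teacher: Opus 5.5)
Your proposal is correct and follows the paper's proof: compute the $(\Delta+1)$-coloring of \cref{thm:delta+1-coloring} and then add nodes greedily, one color class per phase. Your independence and maximality arguments and the $O(\Delta)$ round count just spell out the one-line justification the paper gives.
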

\begin{proof}
    Apply \cref{thm:delta+1-coloring}, and then proceed greedily by color classes.
\end{proof}
\begin{theorem}
    Assume that all nodes know the values of $n$ and $\Delta$. Then for any $C \ge 1$ there are $O(\Delta)$-round quantum-LOCAL algorithms that find a maximal matching and a ${(2\Delta-1)}$-edge coloring in graphs of maximum degree $\Delta$ with probability $1-1/n^C$.
\end{theorem}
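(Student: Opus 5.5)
The plan is to reduce both problems to vertex coloring problems that \cref{thm:delta+1-coloring} already solves, and then finish with the standard classical greedy post-processing. Every quantum ingredient will be used through \cref{thm:delta+1-coloring}, which we may assume.

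\emph{Edge coloring.} We work in the line graph $L(G)$. Its maximum degree is at most $2\Delta-2$, so a $(2\Delta-1)$-vertex coloring of $L(G)$ is exactly a $(2\Delta-1)$-edge coloring of $G$. First we simulate quantum-LOCAL on $L(G)$ inside $G$. Each edge $\{u,v\}$ is simulated by one endpoint, say the one with the smaller identifier. The pair of identifiers serves as a unique identifier in $[n^{2C}]$. Two adjacent edges of $L(G)$ share an endpoint, so their simulating nodes are at distance at most $2$ in $G$. Hence a round on $L(G)$ is simulated with $O(1)$ rounds on $G$, where quantum registers are forwarded along paths of length at most $2$; quantum-LOCAL allows this because nodes may pass registers to neighbors. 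Since $|E(L(G))| \le \Delta n \le n^2$, the high-probability guarantee is kept by running \cref{thm:delta+1-coloring} with the exponent $C$ enlarged by a constant factor, so that the failure probability is at most $1/n^C$ in terms of the original $n$. Nodes know $n$ and $\Delta$, so they know the upper bounds $n^2$ and $2\Delta-2$ needed as parameters. Applying \cref{thm:delta+1-coloring} to $L(G)$ with degree bound $2\Delta-2$ then produces a $(2\Delta-1)$-coloring of $L(G)$ in $O(2\Delta)=O(\Delta)$ rounds of $L(G)$, which is $O(\Delta)$ rounds of $G$.

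\emph{Maximal matching.} A maximal matching is a maximal independent set of $L(G)$. We take the $(2\Delta-1)$-edge coloring above and process its color classes one at a time. In phase $c$, every edge of color $c$ whose endpoints are both still unmatched joins the matching. This step is safe because edges of the same color are pairwise non-adjacent, so two edges added in the same phase never conflict. Each phase takes $O(1)$ rounds, so the $2\Delta-1$ phases add $O(\Delta)$ rounds. The result is maximal: any edge left out was skipped in its own phase because an endpoint was already matched. The result is correct whenever the coloring is correct, so the success probability is at least $1-1/n^C$.

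The only step needing care is the simulation of $L(G)$. Two points must be checked. First, quantum register passing between line-graph neighbors takes a constant number of rounds of $G$. Second, the algorithm's requirement of unique identifiers and of a known size bound is met by identifier pairs and by the bound $n^2$. Everything else is the textbook reduction.
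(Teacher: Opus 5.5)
Your proof is correct, but it takes a different route from the paper. The paper's proof is a one-line appeal to the Panconesi--Rizzi algorithms. Those algorithms run directly on $G$ in $O(\Delta+\log^* n)$ rounds, and their only $\log^* n$ dependence comes from 3-coloring rooted forests. Substituting the quantum rooted-tree coloring of \cref{thm:3-col-rooted-tree}, which costs $O(\log^*\Delta)$, leaves $O(\Delta)$ rounds for both maximal matching and $(2\Delta-1)$-edge coloring.

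You instead treat \cref{thm:delta+1-coloring} as a black box and run it on the line graph $L(G)$. Since $L(G)$ has maximum degree $2\Delta-2$, this gives $2\Delta-1$ colors. You then obtain the matching by processing the edge-color classes greedily.

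The paper's route needs no simulation argument, but it asks the reader to trust the internal structure of the Panconesi--Rizzi algorithms. Yours uses only results already proven in the paper, at the price of the line-graph simulation, which you handle correctly:
\begin{itemize}
\item simulators of adjacent line-graph nodes are within distance $2$, so registers can be forwarded in $O(1)$ rounds;
\item pairs of identifiers give unique identifiers in $[(n^2)^{C}]$;
\item $n^2$ is a valid size bound.
\end{itemize}
One small slip: the relevant count is the number of nodes $|V(L(G))|=|E(G)|\le \Delta n/2$, not $|E(L(G))|$.
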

\begin{proof}
    Apply the algorithms of \cite{panconesi-rizzi-2001-some-simple-distributed-algorithms}, replacing the part where rooted trees are colored with \cref{thm:3-col-rooted-tree}.
\end{proof}
These establish \cref{res:color}.

\subsection{LCL Problems}

Let us then turn our attention to LCL problems. We will use $R$ to denote the ``checking radius'' of the LCL.

Let us first establish \cref{res:lcls}; it will be convenient to directly write it in the form that applies to all $o(\log n)$-round algorithms:
\begin{theorem}\label{thm:lcls}
    Let $\Pi$ be an LCL problem, and assume that there is an algorithm $A$ that solves $\Pi$ in $T(n) = o(\log n)$ rounds in the classical deterministic LOCAL model. Assume that all nodes know the value of $n$. Then for any $C \ge 1$ there is an $O(1)$-round quantum-LOCAL algorithm that solves $\Pi$ with probability $1-1/n^C$.
\end{theorem}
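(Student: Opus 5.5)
The plan is to combine the classical gap theorem of Chang, Kopelowitz and Pettie \cite{chang-kopelowitz-pettie-2019-an-exponential-separation} with our constant-round coloring primitive. The first step uses their result that any LCL solvable in $o(\log n)$ deterministic rounds is also solvable in $O(\log^* n)$ rounds. Their proof in fact gives more. There are constants $k$ and $c$, depending only on $\Pi$ (and $\Delta$, $R$), and a classical deterministic $O(1)$-round algorithm $A'$ with the following property: given as input a proper coloring of the power graph $G^k$ with $c$ colors, $A'$ solves $\Pi$. Here $A'$ uses the colors as ``fake identifiers'' from a constant-size space. It simulates $A$ with a fake value $n_0 = O(1)$ for which $T(n_0)$ is small compared with $\log n_0$, so that the distance-$k$ coloring makes all identifiers in every radius-$(T(n_0)+R)$ ball distinct. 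I will invoke this as a black box, stating precisely the form needed: the coloring is the only symmetry breaking required, and the remaining computation is $O(1)$ rounds.

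The second step produces this distance-$k$ coloring in $O(1)$ quantum rounds, with high probability. The power graph $G^k$ has maximum degree $\Delta' \le \Delta^k = O(1)$. Any $r$-round algorithm on $G^k$ can be simulated in $kr$ rounds on $G$ in quantum-LOCAL: quantum registers are forwarded along paths of length at most $k$, and nodes relay them without measuring. Identifiers remain unique, and $n$ is unchanged. Applying \cref{thm:delta+1-coloring} (or \cref{thm:delta-squared-coloring}) on $G^k$ with $\Delta'$ known then gives a $(\Delta'+1)$-coloring in $O(\Delta') = O(1)$ rounds of $G^k$. This costs $O(k\Delta') = O(1)$ rounds of $G$ and fails with probability at most $1/n^C$. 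If the constant $c$ required by $A'$ is smaller than $\Delta'+1$, we reduce colors further with standard classical reduction on $G^k$. Alternatively, we can choose the simulation parameter $n_0$ large enough that $c \ge \Delta'+1$; this is possible because $T(n_0) = o(\log n_0)$.

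The final step runs $A'$ for $O(1)$ more rounds. Whenever the coloring is proper, the output is a valid solution of $\Pi$, so the overall success probability is at least $1-1/n^C$.

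The main obstacle is the first step: making sure that the Chang--Kopelowitz--Pettie reduction really takes the form ``a distance-$k$ coloring with constantly many colors suffices, followed by $O(1)$ deterministic rounds.'' In particular, the simulation must not require that the fake identifiers be globally unique or bounded by the true $n$. I expect this to follow by checking that $A$'s output at a node depends only on its radius-$T(n_0)$ view. Identifiers that are distinct within each radius-$(T(n_0)+R)$ ball are indistinguishable from a valid global assignment in an $n_0$-node graph. Standard locality arguments then show that the local constraints of $\Pi$ are satisfied everywhere. The choice of $n_0$ must also satisfy $\Delta^{T(n_0)+R} \ll n_0$, which holds for $T(n_0) = o(\log n_0)$.
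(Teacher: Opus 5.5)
Your proposal is correct and is essentially the paper's proof. Both lie to $A$ that the network has some large constant size $n_0$ (the paper's $N$), produce fake identifiers by quantum-coloring the power graph $G^{r}$ with $r = 2T(n_0)+2R$ (the paper uses \cref{thm:delta-squared-coloring} where you use \cref{thm:delta+1-coloring}; both give $O(1)$ rounds), and transfer any local failure to a genuine instance with at most $n_0$ nodes, exactly as you sketch. One small slip: you cannot in general reduce below $\Delta'+1$ colors on $G^k$. Your second option, choosing $n_0$ large enough that the identifier space exceeds $\Delta^{2T(n_0)+2R}$, is the correct one and is what the paper does via $\Delta^r \ll N$.
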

\begin{proof}
    This is a direct application of \cite{chang-kopelowitz-pettie-2019-an-exponential-separation} combined with \cref{thm:delta-squared-coloring}. In essence, we choose an appropriately large constant $N$, and lie to $A$ that the number of nodes is $N$. We only need to argue that if $A$ fails under this lie, $A$ will also fail in some graph that honestly has $N$ nodes.

    The only issue is that $A$ can assume that in a graph with $N$ nodes, the unique identifiers are polynomial in $N$, while in our original input graph $G$ the unique identifiers can be much larger, polynomial in $n$. We will use fast graph coloring algorithms to produce fake unique identifiers that are sufficiently small.

    Fix $\Pi$; this also fixes the maximum degree $\Delta$ of our input graph $G$. Let $r = 2T(N)+2R$. We apply the result of \cref{thm:delta-squared-coloring} to $G^r$; note that $G^r$ has maximum degree at most $\Delta^r$, and our choice of $N$ then ensures that $T(N) \ll \log_{\Delta} N$ and therefore $\Delta^r \ll N$. Hence \cref{thm:delta-squared-coloring} yields a coloring with polynomial-in-$N$ colors in $O(\log^* \Delta^r) = O(1)$ rounds. These are appropriate fake identifiers that we can give to $A$ when we lie that the number of nodes is $N$. If $A$ fails to solve $\Pi$ locally in some local neighborhood of $G$, then we can also construct a genuine $N$-sized instance $G'$ with the same unique identifier assignment, and our deterministic algorithm $A$ fails also in $G'$.
\end{proof}

Now \cref{res:lcl-classes-rooted-trees} follows directly from \cref{thm:lcls} and \cite{akbari-coiteux-roy-etal-2025-online-locality-meets}, and \cref{res:lcl-classes-rooted-regular-trees} follows directly from \cref{thm:lcls} and \cite{dhar-kujawa-etal-2024-local-problems-in-trees-across-a}.

\subsection{SLOCAL Model}

Recall that the SLOCAL model \cite{ghaffari-kuhn-maus-2017-on-the-complexity-of-local} is the sequential counterpart of the LOCAL model. In the LOCAL model all nodes make decisions simultaneously in parallel, but in the SLOCAL model they make decisions in some adversarially chosen sequential order. This ordering trivially breaks symmetry, and hence many problems that cannot be solved in constant time in the LOCAL model admit $O(1)$-locality algorithms in the SLOCAL model.

Our \cref{res:slocal} shows that in bounded-degree graphs, constant-round quantum-LOCAL algorithms are at least as strong as constant-locality SLOCAL algorithms; a bit more formally, we have:

\begin{theorem}\label{thm:slocal}
    Fix a constant $\Delta$. Let $\Pi$ be any graph problem in graphs of maximum degree $\Delta$, and assume that $A$ is a deterministic SLOCAL-model algorithm with locality $T=O(1)$ that solves $\Pi$. Assume that all nodes know the value of $n$. Then for any $C \ge 1$ there is an $O(1)$-round quantum-LOCAL algorithm that solves $\Pi$ with probability $1-1/n^C$.
\end{theorem}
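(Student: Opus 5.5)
We simulate $A$ in $O(1)$ rounds once we have a suitable distance coloring, using the standard reduction from SLOCAL to LOCAL via colorings.

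\textbf{Step 1: distance coloring.} Let $r = 2T+1$, or a slightly larger constant to be safe. Apply \cref{thm:delta-squared-coloring} to the power graph $G^{r}$. It has maximum degree at most $\Delta^{r}=O(1)$, and one round in $G^r$ is simulated by $r$ rounds in $G$. Run it with failure probability $1/n^{C}$; this is fine since nodes know $n$ and $\Delta$, hence also $\Delta^r$. The output is a proper coloring of $G^r$ with $k=O(\Delta^{2r})=O(1)$ colors in $O(r\log^*\Delta^r)=O(1)$ rounds. Equivalently, it is a coloring of $G$ in which nodes at distance at most $r$ get distinct colors.

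\textbf{Step 2: sequential simulation by color classes.} Process the color classes $1,\dots,k$ in order; this fixes the order $\pi$ presented to $A$. In phase $c$, every node of color $c$ gathers its radius-$T$ neighborhood, including labels already fixed in earlier phases, and computes its SLOCAL label with $A$. Nodes processed in the same phase are at distance greater than $2T$. Hence no such node lies in the radius-$T$ view of another, and their views are disjoint in the relevant sense: a node's decision cannot affect what another same-phase node sees. So the parallel execution coincides exactly with a sequential execution of $A$ under an order $\pi$ consistent with the color ordering, where ties within a class are broken arbitrarily. Each phase costs $O(T)$ rounds to collect the view, and there are $k=O(1)$ phases. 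In total this is $O(kT)+O(1)=O(1)$ rounds. Alternatively, each node can gather its radius-$kT$ neighborhood once and simulate all phases locally.

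\textbf{Step 3: correctness.} Since $A$ is deterministic and correct for every adversarial order, the outputs are a valid solution whenever Step 1 succeeds, which happens with probability at least $1-1/n^C$. The identifiers needed by $A$ are the original unique identifiers, so no lie about $n$ is required.

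The main subtlety is in Step 2. We must check that the same-phase independence really reproduces a legitimate sequential run. In particular, a node's view must include the states stored by earlier nodes within distance $T$, and those earlier nodes must already be finalized. Both hold because every earlier-phase node finished before the current phase began. Choosing $r=2T+1$ guarantees that same-phase nodes never see each other's not-yet-fixed labels.
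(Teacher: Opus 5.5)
Your proposal is correct and follows the paper's proof essentially step for step. It computes a distance coloring of $G^r$ via \cref{thm:delta-squared-coloring} (the paper takes $r=2T+2$; your $r=2T+1$ equally guarantees disjoint radius-$T$ views). It then runs $A$ in parallel within each of the $O(1)$ color classes, in class order, which is indistinguishable from a legitimate sequential execution.
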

\begin{proof}
    Let $r = 2T+2$. Use \cref{thm:delta-squared-coloring} to find a coloring of $G^r$. Then proceed by color classes; in step $i$ we apply $A$ in parallel to all nodes of color class $i$. The radius-$T$ neighborhoods of the nodes of each color class are disjoint, and hence the result of such a parallel application of $A$ is indistinguishable from an adversary that applies $A$ first to all nodes of color class $1$ in some arbitrary order, then to all nodes of color class $2$ in some arbitrary order, etc. Each step can be simulated in $O(T)$ rounds, and hence overall the round complexity will be bounded by a constant (that depends on $T$ and~$\Delta$).
\end{proof}

We also note that \cref{thm:lcls} can be derived as a corollary of \cref{thm:slocal} and the well-known connections between the LOCAL and SLOCAL models \cite{chang-kopelowitz-pettie-2019-an-exponential-separation,ghaffari-kuhn-maus-2017-on-the-complexity-of-local}. Indeed, all of these are known to represent the same class of LCL problems \cite{chang-kopelowitz-pettie-2019-an-exponential-separation}:
\begin{enumerate}
    \item LCLs solvable in classical randomized LOCAL in $o(\log \log n)$ rounds.
    \item LCLs solvable in classical deterministic LOCAL in $o(\log n)$ rounds.
    \item LCLs solvable in classical deterministic LOCAL in $O(\log^* n)$ rounds.
    \item LCLs solvable in deterministic SLOCAL with $O(1)$ locality.
\end{enumerate}
Now we know that this class of problems admits an $O(1)$-round quantum-LOCAL algorithm.

\subsection{From Monte Carlo to Las Vegas}

We have focused on Monte Carlo algorithms in this work. We briefly note that essentially all of our results can be also turned into Las Vegas algorithms, using the standard idea that we can verify the coloring locally and stop early if a node is locally happy, and nodes that are stopping later can ensure that their color choice is compatible with the choices that their neighbors that stopped have already made earlier; see e.g.\ \cite{korman-sereni-viennot-2011-toward-more-localized-local}.

Let us use \cref{thm:3color} as a concrete example. We can proceed as follows. We have two color palettes, the final colors $\{1,2,3\}$ and \emph{tentative} colors $\{4,5,6\}$. We repeat the following steps:
\begin{enumerate}
    \item \label{step:undecided} All nodes that are still undecided apply \cref{thm:3color} to produce a coloring with $\{4,5,6\}$ in $4$ rounds (this might fail).
    \item Each node then \emph{verifies} if the color it got is different from the colors of its neighbors. If so, it switches to the \emph{finalization} phase.
    \item Nodes in the finalization phase will do $3$ steps of greedy color reduction. First nodes of color $6$ recolor themselves with the smallest color from $\{1,2,3\}$ that is not used by their neighbors. Then we do the same for nodes of color $5$, and nodes of color $4$. After this, the node has decided its final color and can stop.
    \item Nodes that did not make it to the finalization phase remain undecided and start again from step~\ref{step:undecided}.
\end{enumerate}

\section{AI Methodology}\label{sec:ai}

\textbf{Our main result was discovered using OpenAI GPT-6 in Codex.} Our starting point was the \emph{lower-bound} result in \cite{gur-li-2026-impossibility-of-one-way-one-round-quantum}, and we tried to use the LLM to prove a \emph{stronger} lower-bound result; the first step would be to extend the lower bound from $4$ colors to an arbitrarily large number of colors. To our surprise, GPT-6 showed that this is \emph{not} possible: there is, actually, an algorithm.

To the best of our knowledge, \cite{gur-li-2026-impossibility-of-one-way-one-round-quantum} was the critical human-supplied ingredient that enabled the breakthrough for the AI tools. We have tried to study the same question extensively (also with the latest models such as OpenAI GPT-6 and Claude Fable 5.1), but there was no tangible progress until \cite{gur-li-2026-impossibility-of-one-way-one-round-quantum} appeared.

Once \cite{gur-li-2026-impossibility-of-one-way-one-round-quantum} was available, the task was apparently easy for the LLMs. The AI agent proved the equivalents of \cref{res:main-epsilon} and \cref{res:main-whp} in a few hours, and in regular OpenAI API pricing the cost would have been in the ballpark of 50 euros (much less than that using the monthly ChatGPT subscription).

Beyond the use of AI for initial discovery, we have used AI tools extensively to ensure the correctness of the result---in particular, many versions of \cref{res:main-epsilon}, as well as the corresponding negative result for $4$-coloring from \cite{gur-li-2026-impossibility-of-one-way-one-round-quantum}, were formalized in Lean 4 with the help of GPT-6 and Codex. We have also used AI tools to e.g.\ better understand the proof, explore alternative proof techniques, simplify the proof, explore connections with finitely-dependent colorings. In addition to Codex and GPT-6, we have also used Claude Code and Fable 5.1.

While we used AI tools extensively to discover the proof, \textbf{this paper is entirely written by humans, for humans}. In writing, AI was only used to correct spelling errors and grammar mistakes and to verify the proofs that we wrote. To ensure a clean separation between AI slop and human-authored material, we had two Git repositories, one where the AI agents worked, and one where this manuscript was written.

\section*{Acknowledgements}

This work was supported in part by the Research Council of Finland, Grant 363558. Longcheng Li's work was supported by ERC Starting Grant 101163189.
We would like to thank Tom Gur, Marc-Olivier Renou and Xavier Coiteux-Roy for discussions. For AI usage, see \cref{sec:ai}.

\bibliographystyle{alphaurl}
\bibliography{da}

\newcommand{\etalchar}[1]{$^{#1}$}
\begin{thebibliography}{CRFdG{\etalchar{+}}26}

\bibitem[ACRd{\etalchar{+}}25]{akbari-coiteux-roy-etal-2025-online-locality-meets}
Amirreza Akbari, Xavier Coiteux-Roy, Francesco d'Amore, Fran{\c{c}}ois {Le Gall}, Henrik Lievonen, Darya Melnyk, Augusto Modanese, Shreyas Pai, Marc-Olivier Renou, V{\'a}clav Rozhon, and Jukka Suomela.
\newblock Online locality meets distributed quantum computing.
\newblock In Michal Kouck{\'y} and Nikhil Bansal, editors, {\em Proceedings of the 57th Annual ACM Symposium on Theory of Computing, STOC 2025, Prague, Czechia, June 23-27, 2025}, pages 1295--1306. ACM, 2025.
\newblock \href {https://doi.org/10.1145/3717823.3718211} {\path{doi:10.1145/3717823.3718211}}.

\bibitem[AEL{\etalchar{+}}23]{akbari-eslami-etal-2023-locality-in-online-dynamic}
Amirreza Akbari, Navid Eslami, Henrik Lievonen, Darya Melnyk, Joona S{\"a}rkij{\"a}rvi, and Jukka Suomela.
\newblock Locality in online, dynamic, sequential, and distributed graph algorithms.
\newblock In Kousha Etessami, Uriel Feige, and Gabriele Puppis, editors, {\em 50th International Colloquium on Automata, Languages, and Programming, ICALP 2023, Paderborn, Germany, July 10-14, 2023}, volume 261 of {\em LIPIcs}, pages 10:1--10:20. Schloss Dagstuhl - Leibniz-Zentrum f{\"u}r Informatik, 2023.
\newblock \href {https://doi.org/10.4230/LIPIcs.ICALP.2023.10} {\path{doi:10.4230/LIPIcs.ICALP.2023.10}}.

\bibitem[AF14]{arfaoui-fraigniaud-2014-what-can-be-computed-without}
Heger Arfaoui and Pierre Fraigniaud.
\newblock What can be computed without communications?
\newblock {\em SIGACT News}, 45(3):82--104, 2014.
\newblock \href {https://doi.org/10.1145/2670418.2670440} {\path{doi:10.1145/2670418.2670440}}.

\bibitem[BBCR{\etalchar{+}}25]{balliu-brandt-etal-2025-distributed-quantum-advantage}
Alkida Balliu, Sebastian Brandt, Xavier Coiteux-Roy, Francesco d'Amore, Massimo Equi, Fran{\c{c}}ois {Le Gall}, Henrik Lievonen, Augusto Modanese, Dennis Olivetti, Marc-Olivier Renou, Jukka Suomela, Lucas Tendick, and Isadora Veeren.
\newblock Distributed quantum advantage for local problems.
\newblock In Michal Kouck{\'y} and Nikhil Bansal, editors, {\em Proceedings of the 57th Annual ACM Symposium on Theory of Computing, STOC 2025, Prague, Czechia, June 23-27, 2025}, pages 451--462. ACM, 2025.
\newblock \href {https://doi.org/10.1145/3717823.3718233} {\path{doi:10.1145/3717823.3718233}}.

\bibitem[BBH{\etalchar{+}}21]{balliu-brandt-etal-2021-lower-bounds-for-maximal}
Alkida Balliu, Sebastian Brandt, Juho Hirvonen, Dennis Olivetti, Mika{\"e}l Rabie, and Jukka Suomela.
\newblock Lower bounds for maximal matchings and maximal independent sets.
\newblock {\em Journal of the ACM}, 68(5):39:1--39:30, 2021.
\newblock \href {https://doi.org/10.1145/3461458} {\path{doi:10.1145/3461458}}.

\bibitem[BBOS20]{balliu-brandt-etal-2020-how-much-does-randomness-help}
Alkida Balliu, Sebastian Brandt, Dennis Olivetti, and Jukka Suomela.
\newblock How much does randomness help with locally checkable problems?
\newblock In Yuval Emek and Christian Cachin, editors, {\em PODC '20: ACM Symposium on Principles of Distributed Computing, Virtual Event, Italy, August 3-7, 2020}, pages 299--308. ACM, 2020.
\newblock \href {https://doi.org/10.1145/3382734.3405715} {\path{doi:10.1145/3382734.3405715}}.

\bibitem[BBOS21]{balliu-brandt-etal-2021-almost-global-problems-in-the}
Alkida Balliu, Sebastian Brandt, Dennis Olivetti, and Jukka Suomela.
\newblock Almost global problems in the {LOCAL} model.
\newblock {\em Distributed Computing}, 34(4):259--281, 2021.
\newblock \href {https://doi.org/10.1007/s00446-020-00375-2} {\path{doi:10.1007/s00446-020-00375-2}}.

\bibitem[BCC{\etalchar{+}}25]{balliu-coupette-etal-2025-new-limits-on-distributed}
Alkida Balliu, Corinna Coupette, Antonio Cruciani, Francesco d'Amore, Massimo Equi, Henrik Lievonen, Augusto Modanese, Dennis Olivetti, and Jukka Suomela.
\newblock New limits on distributed quantum advantage: Dequantizing linear programs.
\newblock In Dariusz~R. Kowalski, editor, {\em 39th International Symposium on Distributed Computing, DISC 2025, Berlin, Germany, October 27-31, 2025}, volume 356 of {\em LIPIcs}, pages 11:1--11:22. Schloss Dagstuhl - Leibniz-Zentrum f{\"u}r Informatik, 2025.
\newblock \href {https://doi.org/10.4230/LIPIcs.DISC.2025.11} {\path{doi:10.4230/LIPIcs.DISC.2025.11}}.

\bibitem[BCd{\etalchar{+}}26]{balliu-casagrande-etal-2026-distributed-quantum}
Alkida Balliu, Filippo Casagrande, Francesco d'Amore, Massimo Equi, Barbara Keller, Henrik Lievonen, Dennis Olivetti, Gustav Schmid, and Jukka Suomela.
\newblock Distributed quantum advantage in locally checkable labeling problems.
\newblock In Kasper~Green Larsen and Barna Saha, editors, {\em Proceedings of the 2026 Annual ACM-SIAM Symposium on Discrete Algorithms, SODA 2026, Vancouver, BC, Canada, January 11-14, 2026}, pages 1268--1308. SIAM, 2026.
\newblock \href {https://doi.org/10.1137/1.9781611978971.49} {\path{doi:10.1137/1.9781611978971.49}}.

\bibitem[BE13]{barenboim-elkin-2013-distributed-graph-coloring}
Leonid Barenboim and Michael Elkin.
\newblock {\em Distributed Graph Coloring: Fundamentals and Recent Developments}.
\newblock Synthesis Lectures on Distributed Computing Theory. Morgan \& Claypool Publishers, 2013.
\newblock \href {https://doi.org/10.2200/S00520ED1V01Y201307DCT011} {\path{doi:10.2200/S00520ED1V01Y201307DCT011}}.

\bibitem[BEG22]{barenboim-elkin-goldenberg-2022-locally-iterative}
Leonid Barenboim, Michael Elkin, and Uri Goldenberg.
\newblock Locally-iterative distributed {$(\Delta+1)$}-coloring and applications.
\newblock {\em Journal of the ACM}, 69(1):5:1--5:26, 2022.
\newblock \href {https://doi.org/10.1145/3486625} {\path{doi:10.1145/3486625}}.

\bibitem[BFH{\etalchar{+}}16]{brandt-fischer-etal-2016-a-lower-bound-for-the}
Sebastian Brandt, Orr Fischer, Juho Hirvonen, Barbara Keller, Tuomo Lempi{\"a}inen, Joel Rybicki, Jukka Suomela, and Jara Uitto.
\newblock A lower bound for the distributed {L}ov{\'a}sz local lemma.
\newblock In Daniel Wichs and Yishay Mansour, editors, {\em Proceedings of the 48th Annual ACM SIGACT Symposium on Theory of Computing, STOC 2016, Cambridge, MA, USA, June 18-21, 2016}, pages 479--488. ACM, 2016.
\newblock \href {https://doi.org/10.1145/2897518.2897570} {\path{doi:10.1145/2897518.2897570}}.

\bibitem[BG26]{brandt-gottlicher-2026-a-post-quantum-lower-bound-for}
Sebastian Brandt and Tim G{\"o}ttlicher.
\newblock A post-quantum lower bound for the distributed {L}ov{\'a}sz local lemma.
\newblock In Kasper~Green Larsen and Barna Saha, editors, {\em Proceedings of the 2026 Annual ACM-SIAM Symposium on Discrete Algorithms, SODA 2026, Vancouver, BC, Canada, January 11-14, 2026}, pages 3298--3312. SIAM, 2026.
\newblock \href {https://doi.org/10.1137/1.9781611978971.119} {\path{doi:10.1137/1.9781611978971.119}}.

\bibitem[BHK{\etalchar{+}}18]{balliu-hirvonen-etal-2018-new-classes-of-distributed}
Alkida Balliu, Juho Hirvonen, Janne~H. Korhonen, Tuomo Lempi{\"a}inen, Dennis Olivetti, and Jukka Suomela.
\newblock New classes of distributed time complexity.
\newblock In Ilias Diakonikolas, David Kempe, and Monika Henzinger, editors, {\em Proceedings of the 50th Annual ACM SIGACT Symposium on Theory of Computing, STOC 2018, Los Angeles, CA, USA, June 25-29, 2018}, pages 1307--1318. ACM, 2018.
\newblock \href {https://doi.org/10.1145/3188745.3188860} {\path{doi:10.1145/3188745.3188860}}.

\bibitem[BKM{\etalchar{+}}26]{boudier-kuhn-etal-2026-classification-of-local}
Thomas Boudier, Fabian Kuhn, Augusto Modanese, Ronja Stimpert, and Jukka Suomela.
\newblock Classification of local optimization problems in directed cycles.
\newblock In Sayan Bhattacharya, Danupon Nanongkai, Michael Benedikt, and Gabriele Puppis, editors, {\em 53rd International Colloquium on Automata, Languages, and Programming, ICALP 2026, Royal Holloway, University of London, Egham, United Kingdom, July 7-10, 2026}, volume 374 of {\em LIPIcs}, pages 42:1--42:23. Schloss Dagstuhl - Leibniz-Zentrum f{\"u}r Informatik, 2026.
\newblock \href {https://doi.org/10.4230/LIPIcs.ICALP.2026.42} {\path{doi:10.4230/LIPIcs.ICALP.2026.42}}.

\bibitem[Cha20]{chang-2020-the-complexity-landscape-of-distributed}
Yi-Jun Chang.
\newblock The complexity landscape of distributed locally checkable problems on trees.
\newblock In Hagit Attiya, editor, {\em 34th International Symposium on Distributed Computing, DISC 2020, Virtual Conference, October 12-16, 2020}, volume 179 of {\em LIPIcs}, pages 18:1--18:17. Schloss Dagstuhl - Leibniz-Zentrum f{\"u}r Informatik, 2020.
\newblock \href {https://doi.org/10.4230/LIPIcs.DISC.2020.18} {\path{doi:10.4230/LIPIcs.DISC.2020.18}}.

\bibitem[CKP19]{chang-kopelowitz-pettie-2019-an-exponential-separation}
Yi-Jun Chang, Tsvi Kopelowitz, and Seth Pettie.
\newblock An exponential separation between randomized and deterministic complexity in the {LOCAL} model.
\newblock {\em SIAM Journal on Computing}, 48(1):122--143, 2019.
\newblock \href {https://doi.org/10.1137/17M1117537} {\path{doi:10.1137/17M1117537}}.

\bibitem[CP19]{chang-pettie-2019-a-time-hierarchy-theorem-for-the}
Yi-Jun Chang and Seth Pettie.
\newblock A time hierarchy theorem for the {LOCAL} model.
\newblock {\em SIAM Journal on Computing}, 48(1):33--69, 2019.
\newblock \href {https://doi.org/10.1137/17M1157957} {\path{doi:10.1137/17M1157957}}.

\bibitem[CRdG{\etalchar{+}}24]{coiteux-roy-d-amore-etal-2024-no-distributed-quantum}
Xavier Coiteux-Roy, Francesco d'Amore, Rishikesh Gajjala, Fabian Kuhn, Fran{\c{c}}ois {Le Gall}, Henrik Lievonen, Augusto Modanese, Marc-Olivier Renou, Gustav Schmid, and Jukka Suomela.
\newblock No distributed quantum advantage for approximate graph coloring.
\newblock In Bojan Mohar, Igor Shinkar, and Ryan O'Donnell, editors, {\em Proceedings of the 56th Annual ACM Symposium on Theory of Computing, STOC 2024, Vancouver, BC, Canada, June 24-28, 2024}, pages 1901--1910. ACM, 2024.
\newblock \href {https://doi.org/10.1145/3618260.3649679} {\path{doi:10.1145/3618260.3649679}}.

\bibitem[CRFdG{\etalchar{+}}26]{coiteux-roy-flin-etal-2026-distributed-quantum}
Xavier Coiteux-Roy, Maxime Flin, Carlos de~Gois, Marc-Olivier Renou, Jukka Suomela, and Isadora Veeren.
\newblock Distributed quantum algorithms cannot color cycles with probability 1.
\newblock {\em CoRR}, abs/2608.11720, 2026.
\newblock \href {https://arxiv.org/abs/2608.11720} {\path{arXiv:2608.11720}}, \href {https://doi.org/10.48550/arXiv.2608.11720} {\path{doi:10.48550/arXiv.2608.11720}}.

\bibitem[C{\'S}06]{collins-sniady-2006-integration-with-respect-to-the-haar}
Beno{\^\i}t Collins and Piotr {\'S}niady.
\newblock Integration with respect to the {Haar} measure on unitary, orthogonal and symplectic group.
\newblock {\em Communications in Mathematical Physics}, 264(3):773--795, 2006.
\newblock URL: \url{https://arxiv.org/abs/math-ph/0402073}, \href {https://arxiv.org/abs/math-ph/0402073} {\path{arXiv:math-ph/0402073}}, \href {https://doi.org/10.1007/s00220-006-1554-3} {\path{doi:10.1007/s00220-006-1554-3}}.

\bibitem[CV86]{cole-vishkin-1986-deterministic-coin-tossing-with}
Richard Cole and Uzi Vishkin.
\newblock Deterministic coin tossing with applications to optimal parallel list ranking.
\newblock {\em Information and Control}, 70(1):32--53, 1986.
\newblock \href {https://doi.org/10.1016/S0019-9958(86)80023-7} {\path{doi:10.1016/S0019-9958(86)80023-7}}.

\bibitem[Dav96]{davidson1996c}
Kenneth~R Davidson.
\newblock {\em C*-algebras by example}, volume~6.
\newblock American Mathematical Soc., 1996.

\bibitem[DKL{\etalchar{+}}24]{dhar-kujawa-etal-2024-local-problems-in-trees-across-a}
Anubhav Dhar, Eli Kujawa, Henrik Lievonen, Augusto Modanese, Mikail M{\"u}ft{\"u}o{\u{g}}lu, Jan Studen{\'y}, and Jukka Suomela.
\newblock Local problems in trees across a wide range of distributed models.
\newblock In Silvia Bonomi, Letterio Galletta, Etienne Rivi{\`e}re, and Valerio Schiavoni, editors, {\em 28th International Conference on Principles of Distributed Systems, OPODIS 2024, Lucca, Italy, December 11-13, 2024}, volume 324 of {\em LIPIcs}, pages 27:1--27:17. Schloss Dagstuhl - Leibniz-Zentrum f{\"u}r Informatik, 2024.
\newblock \href {https://doi.org/10.4230/LIPIcs.OPODIS.2024.27} {\path{doi:10.4230/LIPIcs.OPODIS.2024.27}}.

\bibitem[dL26]{d-amore-lievonen-2026-superlogarithmic-gap-result-for}
Francesco d'Amore and Henrik Lievonen.
\newblock Superlogarithmic gap result for {LCL}s on trees in quantum-{LOCAL}.
\newblock {\em CoRR}, abs/2608.16854, 2026.
\newblock \href {https://arxiv.org/abs/2608.16854} {\path{arXiv:2608.16854}}, \href {https://doi.org/10.48550/arXiv.2608.16854} {\path{doi:10.48550/arXiv.2608.16854}}.

\bibitem[FG17]{fischer-ghaffari-2017-sublogarithmic-distributed}
Manuela Fischer and Mohsen Ghaffari.
\newblock Sublogarithmic distributed algorithms for {L}ov{\'a}sz local lemma, and the complexity hierarchy.
\newblock In Andr{\'e}a~W. Richa, editor, {\em 31st International Symposium on Distributed Computing, DISC 2017, Vienna, Austria, October 16-20, 2017}, volume~91 of {\em LIPIcs}, pages 18:1--18:16. Schloss Dagstuhl - Leibniz-Zentrum f{\"u}r Informatik, 2017.
\newblock \href {https://doi.org/10.4230/LIPIcs.DISC.2017.18} {\path{doi:10.4230/LIPIcs.DISC.2017.18}}.

\bibitem[FMZ26]{fraigniaud-magniez-ziccardi-2026-no-distributed-quantum}
Pierre Fraigniaud, Fr{\'e}d{\'e}ric Magniez, and Isabella Ziccardi.
\newblock No distributed quantum advantage for 3-coloring rooted trees and 2-coloring even cycles.
\newblock {\em CoRR}, abs/2607.04852, 2026.
\newblock \href {https://arxiv.org/abs/2607.04852} {\path{arXiv:2607.04852}}, \href {https://doi.org/10.48550/arXiv.2607.04852} {\path{doi:10.48550/arXiv.2607.04852}}.

\bibitem[GHK18]{ghaffari-harris-kuhn-2018-on-derandomizing-local}
Mohsen Ghaffari, David~G. Harris, and Fabian Kuhn.
\newblock On derandomizing local distributed algorithms.
\newblock In Mikkel Thorup, editor, {\em 59th IEEE Annual Symposium on Foundations of Computer Science, FOCS 2018, Paris, France, October 7-9, 2018}, pages 662--673. IEEE Computer Society, 2018.
\newblock \href {https://doi.org/10.1109/FOCS.2018.00069} {\path{doi:10.1109/FOCS.2018.00069}}.

\bibitem[GKM09]{gavoille-kosowski-markiewicz-2009-what-can-be-observed}
Cyril Gavoille, Adrian Kosowski, and Marcin Markiewicz.
\newblock What can be observed locally?
\newblock In Idit Keidar, editor, {\em Distributed Computing, 23rd International Symposium, DISC 2009, Elche, Spain, September 23-25, 2009. Proceedings}, volume 5805 of {\em Lecture Notes in Computer Science}, pages 243--257. Springer, 2009.
\newblock \href {https://doi.org/10.1007/978-3-642-04355-0_26} {\path{doi:10.1007/978-3-642-04355-0_26}}.

\bibitem[GKM17]{ghaffari-kuhn-maus-2017-on-the-complexity-of-local}
Mohsen Ghaffari, Fabian Kuhn, and Yannic Maus.
\newblock On the complexity of local distributed graph problems.
\newblock In Hamed Hatami, Pierre McKenzie, and Valerie King, editors, {\em Proceedings of the 49th Annual ACM SIGACT Symposium on Theory of Computing, STOC 2017, Montreal, QC, Canada, June 19-23, 2017}, pages 784--797. ACM, 2017.
\newblock \href {https://doi.org/10.1145/3055399.3055471} {\path{doi:10.1145/3055399.3055471}}.

\bibitem[GKZ19]{gavoille-kachigar-zemor-2019-localisation-resistant}
Cyril Gavoille, Ghazal Kachigar, and Gilles Z{\'e}mor.
\newblock Localisation-resistant random words with small alphabets.
\newblock In Robert Mercas and Daniel Reidenbach, editors, {\em Combinatorics on Words - 12th International Conference, WORDS 2019, Loughborough, UK, September 9-13, 2019, Proceedings}, volume 11682 of {\em Lecture Notes in Computer Science}, pages 193--206. Springer, 2019.
\newblock \href {https://doi.org/10.1007/978-3-030-28796-2_15} {\path{doi:10.1007/978-3-030-28796-2_15}}.

\bibitem[GL26]{gur-li-2026-impossibility-of-one-way-one-round-quantum}
Tom Gur and Longcheng Li.
\newblock {Impossibility of One-Way One-Round Quantum 4-Coloring via Matrix-Space Stability}, 2026.
\newblock \href {https://doi.org/10.48550/arxiv.2609.09091} {\path{doi:10.48550/arxiv.2609.09091}}.

\bibitem[GPS88]{goldberg-plotkin-shannon-1988-parallel-symmetry}
Andrew~V. Goldberg, Serge~A. Plotkin, and Gregory~E. Shannon.
\newblock Parallel symmetry-breaking in sparse graphs.
\newblock {\em SIAM Journal on Discrete Mathematics}, 1(4):434--446, 1988.
\newblock \href {https://doi.org/10.1137/0401044} {\path{doi:10.1137/0401044}}.

\bibitem[GRB22]{grunau-rozhon-brandt-2022-the-landscape-of-distributed}
Christoph Grunau, V{\'a}clav Rozhon, and Sebastian Brandt.
\newblock The landscape of distributed complexities on trees and beyond.
\newblock In Alessia Milani and Philipp Woelfel, editors, {\em PODC '22: ACM Symposium on Principles of Distributed Computing, Salerno, Italy, July 25 - 29, 2022}, pages 37--47. ACM, 2022.
\newblock \href {https://doi.org/10.1145/3519270.3538452} {\path{doi:10.1145/3519270.3538452}}.

\bibitem[HHL18]{holroyd-hutchcroft-levy-2018-finitely-dependent-cycle}
Alexander~E. Holroyd, Tom Hutchcroft, and Avi Levy.
\newblock {Finitely dependent cycle coloring}.
\newblock {\em Electronic Communications in Probability}, 23, 2018.
\newblock \href {https://doi.org/10.1214/18-ecp118} {\path{doi:10.1214/18-ecp118}}.

\bibitem[HHL20]{holroyd-hutchcroft-levy-2020-mallows-permutations-and}
Alexander~E. Holroyd, Tom Hutchcroft, and Avi Levy.
\newblock {Mallows permutations and finite dependence}.
\newblock {\em The Annals of Probability}, 48(1), 2020.
\newblock \href {https://doi.org/10.1214/19-aop1363} {\path{doi:10.1214/19-aop1363}}.

\bibitem[HL16]{holroyd-liggett-2016-finitely-dependent-coloring}
Alexander~E. Holroyd and Thomas~M. Liggett.
\newblock {Finitely Dependent Coloring}.
\newblock {\em Forum of Mathematics, Pi}, 4, 2016.
\newblock \href {https://doi.org/10.1017/fmp.2016.7} {\path{doi:10.1017/fmp.2016.7}}.

\bibitem[HS20]{hirvonen-suomela-2020-distributed-algorithms-2020}
Juho Hirvonen and Jukka Suomela.
\newblock Distributed algorithms 2020, 2020.
\newblock URL: \url{https://jukkasuomela.fi/da2020/}.

\bibitem[KKP{\etalchar{+}}17]{kohonen-korhonen-etal-2017-distributed-colour-reduction}
Jukka Kohonen, Janne~H. Korhonen, Christopher Purcell, Jukka Suomela, and Przemys{\l}aw Uzna{\'n}ski.
\newblock {Distributed Colour Reduction Revisited}, 2017.
\newblock \href {https://doi.org/10.48550/arxiv.1709.00901} {\path{doi:10.48550/arxiv.1709.00901}}.

\bibitem[KMW16]{kuhn-moscibroda-wattenhofer-2016-local-computation}
Fabian Kuhn, Thomas Moscibroda, and Roger Wattenhofer.
\newblock Local computation: Lower and upper bounds.
\newblock {\em Journal of the ACM}, 63(2):17:1--17:44, 2016.
\newblock \href {https://doi.org/10.1145/2742012} {\path{doi:10.1145/2742012}}.

\bibitem[KSV11]{korman-sereni-viennot-2011-toward-more-localized-local}
Amos Korman, Jean-S{\'e}bastien Sereni, and Laurent Viennot.
\newblock Toward more localized local algorithms: removing assumptions concerning global knowledge.
\newblock In Cyril Gavoille and Pierre Fraigniaud, editors, {\em Proceedings of the 30th Annual ACM Symposium on Principles of Distributed Computing, PODC 2011, San Jose, CA, USA, June 6-8, 2011}, pages 49--58. ACM, 2011.
\newblock \href {https://doi.org/10.1145/1993806.1993814} {\path{doi:10.1145/1993806.1993814}}.

\bibitem[{Le }25]{le-gall-2025-recent-developments-in-quantum-distributed}
Fran{\c{c}}ois {Le Gall}.
\newblock {Recent Developments in Quantum Distributed Algorithms}.
\newblock In {\em Algorithmic Foundations for Social Advancement}, pages 263--273. Springer Nature Singapore, 2025.
\newblock \href {https://doi.org/10.1007/978-981-96-0668-9_17} {\path{doi:10.1007/978-981-96-0668-9_17}}.

\bibitem[Lin92]{linial-1992-locality-in-distributed-graph-algorithms}
Nathan Linial.
\newblock Locality in distributed graph algorithms.
\newblock {\em SIAM Journal on Computing}, 21(1):193--201, 1992.
\newblock \href {https://doi.org/10.1137/0221015} {\path{doi:10.1137/0221015}}.

\bibitem[LNR19]{le-gall-nishimura-rosmanis-2019-quantum-advantage-for}
Fran{\c{c}}ois {Le Gall}, Harumichi Nishimura, and Ansis Rosmanis.
\newblock Quantum advantage for the {LOCAL} model in distributed computing.
\newblock In Rolf Niedermeier and Christophe Paul, editors, {\em 36th International Symposium on Theoretical Aspects of Computer Science, STACS 2019, Berlin, Germany, March 13-16, 2019}, volume 126 of {\em LIPIcs}, pages 49:1--49:14. Schloss Dagstuhl - Leibniz-Zentrum f{\"u}r Informatik, 2019.
\newblock \href {https://doi.org/10.4230/LIPIcs.STACS.2019.49} {\path{doi:10.4230/LIPIcs.STACS.2019.49}}.

\bibitem[LR22]{le-gall-rosmanis-2022-non-trivial-lower-bound-for-3}
Fran{\c{c}}ois {Le Gall} and Ansis Rosmanis.
\newblock Non-trivial lower bound for 3-coloring the ring in the quantum {LOCAL} model.
\newblock {\em CoRR}, abs/2212.02768, 2022.
\newblock \href {https://arxiv.org/abs/2212.02768} {\path{arXiv:2212.02768}}, \href {https://doi.org/10.48550/arXiv.2212.02768} {\path{doi:10.48550/arXiv.2212.02768}}.

\bibitem[MT22]{maus-tonoyan-2022-linial-for-lists}
Yannic Maus and Tigran Tonoyan.
\newblock Linial for lists.
\newblock {\em Distributed Computing}, 35(6):533--546, 2022.
\newblock \href {https://doi.org/10.1007/s00446-022-00424-y} {\path{doi:10.1007/s00446-022-00424-y}}.

\bibitem[Nao91]{naor-1991-a-lower-bound-on-probabilistic-algorithms-for}
Moni Naor.
\newblock A lower bound on probabilistic algorithms for distributive ring coloring.
\newblock {\em SIAM Journal on Discrete Mathematics}, 4(3):409--412, 1991.
\newblock \href {https://doi.org/10.1137/0404036} {\path{doi:10.1137/0404036}}.

\bibitem[NS95]{naor-stockmeyer-1995-what-can-be-computed-locally}
Moni Naor and Larry~J. Stockmeyer.
\newblock What can be computed locally?
\newblock {\em SIAM Journal on Computing}, 24(6):1259--1277, 1995.
\newblock \href {https://doi.org/10.1137/S0097539793254571} {\path{doi:10.1137/S0097539793254571}}.

\bibitem[Pel00]{peleg-2000-distributed-computing-a-locality-sensitive}
David Peleg.
\newblock {\em Distributed Computing: A Locality-Sensitive Approach}.
\newblock Society for Industrial and Applied Mathematics, 2000.
\newblock \href {https://doi.org/10.1137/1.9780898719772} {\path{doi:10.1137/1.9780898719772}}.

\bibitem[PR01]{panconesi-rizzi-2001-some-simple-distributed-algorithms}
Alessandro Panconesi and Romeo Rizzi.
\newblock Some simple distributed algorithms for sparse networks.
\newblock {\em Distributed Computing}, 14(2):97--100, 2001.
\newblock \href {https://doi.org/10.1007/PL00008932} {\path{doi:10.1007/PL00008932}}.

\bibitem[RG20]{rozhon-ghaffari-2020-polylogarithmic-time-deterministic}
V{\'a}clav Rozhon and Mohsen Ghaffari.
\newblock Polylogarithmic-time deterministic network decomposition and distributed derandomization.
\newblock In Konstantin Makarychev, Yury Makarychev, Madhur Tulsiani, Gautam Kamath, and Julia Chuzhoy, editors, {\em Proceedings of the 52nd Annual ACM SIGACT Symposium on Theory of Computing, STOC 2020, Chicago, IL, USA, June 22-26, 2020}, pages 350--363. ACM, 2020.
\newblock \href {https://doi.org/10.1145/3357713.3384298} {\path{doi:10.1145/3357713.3384298}}.

\bibitem[Suo20]{suomela-2020-landscape-of-locality-invited-talk}
Jukka Suomela.
\newblock Landscape of locality (invited talk).
\newblock In Susanne Albers, editor, {\em 17th Scandinavian Symposium and Workshops on Algorithm Theory, SWAT 2020, T{\'o}rshavn, Faroe Islands, June 22-24, 2020}, volume 162 of {\em LIPIcs}, pages 2:1--2:1. Schloss Dagstuhl - Leibniz-Zentrum f{\"u}r Informatik, 2020.
\newblock \href {https://doi.org/10.4230/LIPIcs.SWAT.2020.2} {\path{doi:10.4230/LIPIcs.SWAT.2020.2}}.

\bibitem[Wat18]{watrous-2018-the-theory-of-quantum-information}
John Watrous.
\newblock {\em The Theory of Quantum Information}.
\newblock Cambridge University Press, 2018.
\newblock \href {https://doi.org/10.1017/9781316848142} {\path{doi:10.1017/9781316848142}}.

\end{thebibliography}

\appendix
\section{Deferred Proofs}

\subsection{Proof of \texorpdfstring{\cref{lem:row-col-bound}}{lem:row-col-bound}}
\PropRowColBound*
\begin{proof}
    Decompose $Y(g,h)$ as
    \[
    Y^<(\ketbra{g}{h}) = \sum_{i: \phi_i(g,h) \prec (g,h)} e_{\phi_i(g,h)}
    \quad\text{and}\quad
    Y^>(\ketbra{g}{h}) = \sum_{i: (g,h) \prec \phi_i(g,h)} e_{\phi_i(g,h)}
    \]
    so that $Y(X) = Y^<(X) + Y^>(X)$ for all $X\in \fS$. We bound $\norm{ Y^<(X) }_F$ and the bound on $\norm{Y^>(X)}_F$ is obtained by an identical argument after flipping the ordering. We write $d^- = \sup_{x\in \Gamma} d^-(x)$ and $d^+ = \sup_{x\in \Gamma} d^+(x)$ for succinctness.

    To bound $\norm{Y^<(X)}_F$, let us first consider the coefficients of $Y^<(X)$ individually: for $x\in \Gamma$, note that
    \[
    [Y^<(X)](x) = \sum_{g,h} X(g,h) \sum_{i: \phi_i(g,h) \prec (g,h)} 1(x = \phi_i(g,h)) = \sum_{i: x \prec \phi_i(x)} X(\phi_i(x))
    \]
    where we use that $x = \phi_i(g,h)$ iff $\phi_i(x) = (g,h)$ because each $\phi_i$ is an involution. We therefore have that
    \begin{align*}
        \norm{Y^<(X)}_F^2 
        = \sum_{x\in \Gamma} |[Y^<(X)](x)|^2 
        = \sum_{x\in \Gamma} \abs*{ \sum_{i: x \prec \phi_i(x)} X(\phi_i(x)) }^2
        \leq d^+ \sum_{x\in \Gamma} \sum_{i: x \prec \phi_i(x)} \abs{X(\phi_i(x))}^2 
    \end{align*}
    where the inequality uses Cauchy-Schwarz and the fact that the sum is over at most $d^+$ terms. Now we make the change of variable $y = \phi_i(x)$ for each $i$ and obtain that 
    \begin{align*}
        \norm{ Y^<(X) }_F^2 
        \leq d^+ \sum_{y \in \Gamma} \sum_{i: \phi_i(y) \prec y} \abs{X(y)}^2 
        \leq d^+ d^- \sum_{y\in \Gamma} \abs{X(y)}^2 = d^+ d^- \norm{X}_F^2 \ ,
    \end{align*}
    where the second inequality is by definition of $d^-$. This concludes the proof since 
    \[
    \norm{Y(X)}_F \leq \norm{Y^<(X)}_F + \norm{Y^>(X)}_F 
    \leq 2\sqrt{d^+d^-}\norm{X}_F \ . \qedhere
    \]
\end{proof}

\subsection{Proof of \texorpdfstring{\Cref{lem:normalize-povm}}{lem:normalize-povm}}
  \label[appendix]{appx:proof-normalize-povm}

We first prove a proposition bounding the energy of an operator
in terms of its distance to a zero-energy operator.

\begin{proposition} \label{lem:energy-close}
Let $Z\in\mathrm{L}(\C^N\otimes\C^N)$ be PSD with $\Es(Z)=0$.
Then, for any PSD operator $M\in\mathrm{L}(\C^N\otimes\C^N)$ satisfying $0\preceq M\preceq I$
and any factorization $M=\Gamma^\dagger \Gamma$ with $\Gamma\in\mathrm{L}(\C^N\otimes\C^N)$, we have
\[
\Es(M) \leq 4N \|\Gamma - \sqrt{Z}\|_F^2.
\]
In particular, taking $\Gamma=\sqrt{M}$ gives $\Es(M) \leq 4N \|\sqrt{M} - \sqrt{Z}\|_F^2$.
\end{proposition}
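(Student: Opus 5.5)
The plan is to exploit that the energy is a quadratic form. Define the symmetric bilinear form
\[
\Es(X,Y) := \tfrac12\Tr\bigl(\Tr_\As(X)^\top \Tr_\Bs(Y)\bigr)+\tfrac12\Tr\bigl(\Tr_\As(Y)^\top \Tr_\Bs(X)\bigr),
\]
so that $\Es(M)=\Es(M,M)$. Two elementary facts will be used. First, since the transpose of a PSD matrix is PSD, $\Es(X,Y)\ge 0$ whenever $X,Y\succeq 0$; consequently $\Es$ is monotone in each argument on the PSD cone. Second, if $0\preceq X\preceq I$, then $\Tr_\As(X)\preceq N\cdot I_N$, which gives the bound $\Es(X,Y)\le N\Tr(Y)$ for every PSD $Y$.

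Next I use \cref{lem:zero-energy}. It gives a decomposition $\mathcal{V}\oplus\mathcal{V}^\perp=\C^N$ such that $\supp(Z)\subseteq \mathcal{V}\otimes\bar{\mathcal{V}}^\perp$. Let $P$ be the orthogonal projector onto $\mathcal{V}\otimes\bar{\mathcal{V}}^\perp$ and set $Q=I-P$. Every PSD operator supported inside $\operatorname{range}(P)$ has zero energy by the same lemma. Applying this to $PXP+PYP$ and expanding the quadratic form, positivity of the cross term forces $\Es(PXP,PYP)=0$ for all PSD $X,Y$. Moreover $\sqrt{Z}=\sqrt{Z}P$. Writing $D:=\Gamma-\sqrt{Z}$, this gives $\Gamma Q = DQ$, and hence
\[
\Tr(QMQ)=\|\Gamma Q\|_F^2=\|DQ\|_F^2\le \|D\|_F^2 .
\]
So the whole error is captured by the $Q$-part of $M$, and that part is controlled by $\|\Gamma-\sqrt Z\|_F$.

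Now the estimate. For $t>0$ there is the operator inequality $M\preceq (1+t)PMP+(1+t^{-1})QMQ$, which follows from $(\sqrt t P-Q/\sqrt t)^\dagger$-type positivity, i.e. $(\sqrt t P-t^{-1/2}Q)M(\sqrt t P-t^{-1/2}Q)\succeq 0$. By monotonicity, and since the $PMP$–$PMP$ term vanishes,
\[
\Es(M)\le 2(1+t)(1+t^{-1})\,\Es(PMP,QMQ)+(1+t^{-1})^2\,\Es(QMQ).
\]
Both $PMP$ and $QMQ$ lie between $0$ and $I$, so each remaining term is at most $N\Tr(QMQ)\le N\|D\|_F^2$ times its coefficient. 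This already yields $\Es(M)=O(N)\,\|\Gamma-\sqrt Z\|_F^2$.

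The main obstacle is getting the constant down to exactly $4$; the crude split above gives $12$ at $t=1$. To sharpen it, I would avoid the operator split and instead expand $\Es(M,M)$ directly as $\Es(M, PMP) + \Es(M, QMQ) + \Es(M, PMQ+QMP)$. The middle term is at most $N\Tr(QMQ)$. The first term equals $\Es(QMQ+\text{cross},PMP)$, since the $PMP$–$PMP$ part vanishes. The off-diagonal terms would be bounded by Cauchy–Schwarz for the positive form $\Es$ together with $\Tr_\As(M)\preceq N I$. This should give two contributions of size $N\|DQ\|_F^2$ each, plus a cross contribution of size at most $2N\|DQ\|_F^2$, for a total of $4N\|D\|_F^2$. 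If this tighter bookkeeping fails, a constant of $12$ still suffices for the downstream use in \cref{lem:normalize-povm}, up to changing its numerical constant.
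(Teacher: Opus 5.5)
Your rigorous argument only reaches a weaker constant, and the proposed sharpening to $4$ does not go through. Everything up to the operator split is correct. It yields $\Es(M)\le\bigl(5+2t+4/t+1/t^2\bigr)N\|\Gamma-\sqrt{Z}\|_F^2$, which is $12N\|\Gamma-\sqrt Z\|_F^2$ at $t=1$. Optimizing over $t$ only lowers the factor to $(11+5\sqrt5)/2\approx 11.09$, so this route cannot give the stated $4N$.

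The gap is in the sharpening. You want to control the cross terms $\Es(M,PMQ+QMP)$ and $\Es(PMQ+QMP,PMP)$ by ``Cauchy--Schwarz for the positive form $\Es$'', but $\Es(\cdot,\cdot)$ is not a positive semidefinite form. It is indefinite on Hermitian operators. Since $PMQ+QMP$ is not PSD, neither positivity nor monotonicity applies to those terms. Cauchy--Schwarz even fails on the PSD cone: for $X=\ketbra{0}{0}\otimes\ketbra{1}{1}$ and $Y=\ketbra{1}{1}\otimes\ketbra{0}{0}$ one has $\Es(X,X)=\Es(Y,Y)=0$ but $\Es(X,Y)=1$. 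So the bookkeeping $1+1+2$ is unsupported, and the proposal proves the proposition only with $12$ in place of $4$. As you note, $12$ would still suffice downstream, with $7\epsilon N$ becoming $15\epsilon N$ in \cref{lem:normalize-povm}. But that is not the stated result.

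The missing idea is to split at the level of the marginals rather than splitting $M$. Set $A=\Tr_\As(M)^\top$ and $B=\Tr_\Bs(M)$. Then $\sqrt{\Es(M)}=\|B^{1/2}A^{1/2}\|_F\le\|B^{1/2}\Pi_{\mathcal V^\perp}A^{1/2}\|_F+\|B^{1/2}\Pi_{\mathcal V}A^{1/2}\|_F$. The triangle inequality for the Frobenius norm is the Cauchy--Schwarz that actually holds here.

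Using $A,B\preceq NI$, the two terms are at most $\sqrt{N\alpha}$ and $\sqrt{N\beta}$, where $\alpha=\Tr((\Pi_{\mathcal V^\perp}\otimes I)M)=\Tr(\Pi_{\mathcal V^\perp}B)$ and $\beta=\Tr((I\otimes\Pi_{\bar{\mathcal V}})M)=\Tr(\Pi_{\mathcal V}A)$ are one-sided leakages. Both one-sided projectors are annihilated by $\sqrt Z$, and indeed both are $\preceq Q$. So the same computation you did for $Q$ gives $\alpha,\beta\le\|\Gamma-\sqrt Z\|_F^2$. Hence $\Es(M)\le N(\sqrt\alpha+\sqrt\beta)^2\le 2N(\alpha+\beta)\le 4N\|\Gamma-\sqrt Z\|_F^2$, which is the paper's proof.
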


\begin{proof}
    By \Cref{lem:zero-energy}, $\Es(Z)=0$ gives an orthogonal decomposition
    $\mathcal{V}\oplus \mathcal{V}^\perp = \C^N$ such that
    \[\supp(Z) \subseteq \mathcal{V} \otimes \bar{\mathcal{V}}^\perp.\]

    Let $Q:=\Pi_{\mathcal{V}^\perp}\otimes I$ denote the orthogonal projector onto the subspace
    $\mathcal{V}^\perp\otimes \C^N$. Since $\mathcal{V}^\perp\otimes \C^N$ is orthogonal to
    $\mathcal{V}\otimes \bar{\mathcal{V}}^\perp$ and $\supp(\sqrt{Z}) = \supp(Z) \subseteq \mathcal{V}\otimes \bar{\mathcal{V}}^\perp$, we have
    \(
    \sqrt{Z}Q=0.
    \)
    Then the leakage of $M$ into $\mathcal{V}^\perp\otimes \C^N$ is bounded by
    \begin{align*}
        \alpha
        :=\Tr(QM)
        =\Tr(Q\Gamma^\dagger \Gamma Q)
        =\|\Gamma Q\|_F^2
        =\|(\Gamma-\sqrt{Z})Q\|_F^2
        \leq\|\Gamma-\sqrt{Z}\|_F^2.
    \end{align*}

    Similarly, let $R:=I\otimes \Pi_{\bar{\mathcal{V}}}$ denote the orthogonal projector
    onto $\C^N\otimes \bar{\mathcal{V}}$. The leakage of $M$ into this subspace is bounded by
    \[
    \beta := \Tr(RM) \leq \|\Gamma-\sqrt{Z}\|_F^2.
    \]

    We now relate these leakage bounds to the energy of $M$. Define
    \(A:=\Tr_\As(M)^\top\) and \(B:=\Tr_\Bs(M)\).
    Since \(0\preceq M\preceq I_N\otimes I_N\), we have \(0\preceq A,B\preceq NI_N\).
    By the triangle inequality,
    \begin{align*}
        \sqrt{\Es(M)}
        =\sqrt{\Tr(AB)}=\|B^{1/2}A^{1/2}\|_F
        &\leq
        \|B^{1/2}\Pi_{\mathcal{V}^\perp}A^{1/2}\|_F
        +
        \|B^{1/2}\Pi_{\mathcal{V}}A^{1/2}\|_F.
    \end{align*}

    The first term can be bounded as
    \[
    \|B^{1/2}\Pi_{\mathcal{V}^\perp} A^{1/2}\|_F
    = \sqrt{\Tr(\Pi_{\mathcal{V}^\perp}A\Pi_{\mathcal{V}^\perp}\cdot
    \Pi_{\mathcal{V}^\perp} B \Pi_{\mathcal{V}^\perp})}
    \leq \sqrt{N \Tr(\Pi_{\mathcal{V}^\perp} B)}
    = \sqrt{N\alpha},
    \]
    where the inequality is by
    $\Pi_{\mathcal{V}^\perp}A\Pi_{\mathcal{V}^\perp}\preceq 
    \Pi_{\mathcal{V}^\perp}(NI)\Pi_{\mathcal{V}^\perp}=
    N\Pi_{\mathcal{V}^\perp}$,
    and the final equality is by
    \[
    \alpha
    = \Tr((\Pi_{\mathcal{V}^\perp}\otimes I)M)
    = \Tr(\Pi_{\mathcal{V}^\perp}\Tr_\Bs(M)) = \Tr(\Pi_{\mathcal{V}^\perp}B).
    \]

    Similarly, using $B\preceq NI_N$, we obtain
    \[
    \|B^{1/2}\Pi_{\mathcal{V}}A^{1/2}\|_F
    \leq \sqrt{N\Tr(\Pi_{\mathcal{V}} A)}
    = \sqrt{N\beta},
    \]
    where
    \[
    \beta
    = \Tr((I\otimes\Pi_{\bar{\mathcal{V}}})M)
    = \Tr(\Pi_{\bar{\mathcal{V}}}\Tr_\As(M))
    = \Tr(\Pi_{\mathcal{V}} A).
    \]
    Consequently,
    \[
    \Es(M)
    \leq
    N(\sqrt{\alpha}+\sqrt{\beta})^2
    \leq
    2N(\alpha+\beta)
    \leq
    4N\|\Gamma-\sqrt{Z}\|_F^2.
    \]
    In particular, we can take $\Gamma=\sqrt{M}$.
\end{proof}

Then we prove \Cref{lem:normalize-povm}.

\begin{proof}[Proof of \Cref{lem:normalize-povm}]
Let $S:=\sum_{a=1}^{q}\tilde{M}_a$, and $S^+$ be the pseudoinverse of $S$.
Define $\{M_a\}_{a\in[q]}$ by
\[
M_a :=
\begin{cases}
R_1+Q & a=1\\
R_a & 2\leq a\leq q
\end{cases},
\quad\text{where}\quad 
R_a := \sqrt{S^+} \tilde{M}_a \sqrt{S^+},
\quad
Q := \Pi_{\ker(S)}.
\]
Then $\{M_a\}_{a\in[q]}$ is a POVM
 since $M_a \succeq 0$ and
\[
\sum_{a=1}^{q}M_a
=
\sqrt{S^{+}}
\left(\sum_{a=1}^{q}\tilde{M}_a\right)
\sqrt{S^{+}}
+Q
=
\sqrt{S^{+}}S\sqrt{S^{+}}
+\Pi_{\ker(S)}
=I.
\]

First, we bound the energy of $R_a$ for each $a\in[q]$.
We can factor $R_a$ as
\[
R_a = (\sqrt{\tilde{M}_a}\sqrt{S^+})^\dagger(\sqrt{\tilde{M}_a}\sqrt{S^+}).
\]
Since $\Es(\tilde{M}_a)=0$, \Cref{lem:energy-close} gives
\[
\Es(R_a) \leq 4N \delta_a,
\qquad\text{where}\qquad
\delta_a := \|\sqrt{\tilde{M}_a}\sqrt{S^+}-\sqrt{\tilde{M}_a}\|_F^2.
\]
Let $\{\lambda_j\}_{j=1}^{N^2}$ denote the eigenvalues of $S$.
Summing the $\delta_a$ gives
\begin{align*}
\sum_{a=1}^{q}\delta_a
=
\sum_{a=1}^{q}
\left\|
\sqrt{\tilde{M}_a}(\sqrt{S^+}-I)
\right\|_F^2
&=
\Tr\!\left[
(\sqrt{S^+}-I)
\left(\sum_{a=1}^{q}\tilde{M}_a\right)
(\sqrt{S^+}-I)
\right]\\
&= \Tr\!\left[
(\sqrt{S^+}-I)S(\sqrt{S^+}-I)
\right]
=
\sum_{\lambda_j>0}(\sqrt{\lambda_j}-1)^2.
\end{align*}
Since \((\sqrt{\lambda}-1)^2\leq(\lambda-1)^2\) for all $\lambda\geq 0$, we have
\begin{equation*}
\sum_{a=1}^{q}\delta_a
\leq
\sum_{j=1}^{N^2}(\lambda_j-1)^2
=
\|S-I\|_F^2
\leq\epsilon.
\end{equation*}
Thus
\begin{equation}\label{eq:energy-bound-i}
\sum_{a=1}^{q}\Es(R_a)
\leq 4N\sum_{a=1}^{q}\delta_a
\leq 4N\epsilon.
\end{equation}

Next, we bound the difference $\Es(M_1)-\Es(R_1)$.
For $X\in\{R_1,Q\}$, write
$A_X:=\Tr_\As(X)^\top$ and $B_X:=\Tr_\Bs(X)$.
Since $M_1=R_1+Q$, we have
\[
\Es(M_1)-\Es(R_1)
=\Tr(A_QB_Q)+\Tr(A_{R_1}B_Q)+\Tr(A_QB_{R_1}).
\]
For each $X\in\{R_1,Q\}$, the bound $0\preceq X\preceq I$ gives
$0\preceq A_X,B_X\preceq NI_N$.
Thus the three terms can be bounded by
\[
\Tr(A_QB_Q), \Tr(A_{R_1}B_Q)\leq N\Tr(B_Q)=N\Tr(Q),
\quad
\Tr(A_QB_{R_1})\leq N\Tr(A_Q)=N\Tr(Q).
\]
Moreover,
\[
\Tr(Q)=\dim(\ker S)
=\sum_{\lambda_j=0}1
\leq\sum_{j=1}^{N^2}(\lambda_j-1)^2
=\|S-I\|_F^2
\leq\epsilon.
\]
Therefore $\Es(M_1)-\Es(R_1)\leq 3\epsilon N$.
Combining this with \eqref{eq:energy-bound-i}, we obtain
\[
\sum_{a=1}^{q}\Es(M_a)
\leq \sum_{a=1}^{q}\Es(R_a)+3\epsilon N
\leq 4\epsilon N+3\epsilon N
=7\epsilon N. \qedhere
\]
\end{proof}

\section{Positivity of the 1-Dependent Distribution}
\label{app:proof-1-dep}

In this section, we prove that the construction from \cref{sec:fin-dep} indeed defines a 1-dependent distribution over proper $q$-colorings of $\bZ$. 
We give a proof \emph{using more colors than \cref{res:main-epsilon}}. This proof uses the contraction rate of \cref{lem:contraction}, which was proven for the weighted norm and can be improved if we use only Frobenius norms.

The contraction lemma (\cref{lem:contraction}) shows that $\norm{R_{n+1}}_w \leq \lambda\norm{R_n}_w$ where $R_n = E - S(V_n)$ and $\lambda \in (0,1)$, which implies that the $V_1, V_2, \ldots$ form a Cauchy sequence as for every $n \leq m$,
\begin{equation}
\label{eq:cauchy}
\norm{V_m - V_n}_F 
\leq \sum_{i = n}^{m-1} \norm{V_{i+1}-V_i}_F 
\leq \frac{2}{s}\sum_{i=n}^{m-1} \norm{R_i}_w 
\leq \frac{2}{s}\sum_{i=n}^{m-1} \lambda^{i-n} \norm{R_n}_w
\leq \frac{2\norm{R_n}_w}{(1 - \lambda)s} 
\leq \frac{2 \lambda^{n}}{(1 - \lambda)s} 
\end{equation}
which goes to zero as $n \to \infty$.
We can see each of the $V_n$ as an element from the complete space $\ell_2(G \times G)$, hence there exists some $V \in \ell_2(G \times G)$ that is the limit of the $V_n$. We prove the following theorem.

\begin{theorem}
    For $s \geq 2 \cdot 10^6$, 
    the $T_{\pm i} = P_{\pm i} V P_{\mp i}$ for $i\in[s]$ satisfy \ref{T1}-\ref{T3}.
\end{theorem}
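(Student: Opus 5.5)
We would verify \ref{T1} and \ref{T2} by an algebraic and continuity argument, and prove \ref{T3} by a perturbative induction on the word length. The induction writes $V$ as the "very positive" term $(2/s)E$ plus a small error $\Delta$.

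For \ref{T1}, note that $T_{\pm i}^2 = P_{\pm i} V P_{\mp i} P_{\pm i} V P_{\mp i} = 0$, since $P_{\mp i}P_{\pm i}=0$ by \cref{lem:orth-projections}. For \ref{T2}, observe that $\sum_{a} T_a = S(V)$. Each map $X\mapsto P_{\pm i}XP_{\mp i}$ is a contraction in Frobenius norm, because the $P_{\pm i}$ are orthogonal projections. Hence $S$ is bounded on $\ell_2(G\times G)$ with $\norm{S(X)}_F\le 2s\norm{X}_F$. Since $V_n\to V$ in Frobenius norm by \eqref{eq:cauchy}, we get $S(V_n)\to S(V)$. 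Together with $\norm{E-S(V_n)}_F\le\lambda^n$ from \cref{thm:fin-dep-V}, this gives $S(V)=E$.

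For \ref{T3}, set $\Delta := V-(2/s)E$. By \eqref{eq:cauchy} with $n=1$ (recall $V_1=(2/s)E$, $\norm{R_1}_w\le\lambda$), we have $\norm{\Delta}_F\le 2\lambda/((1-\lambda)s)$. For $s\ge 2\cdot10^6$, $\lambda$ is close to $3/4$, so $\norm{\Delta}_F \le c/s$ for a modest constant $c$; more usefully, $\Delta$ is small compared with the $(2/s)$ coefficient of $E$. Expanding $p(w)=\braUket{e}{T_{a_1}\cdots T_{a_k}}{e}$, we replace each $V$ by either $(2/s)E$ or $\Delta$. Every occurrence of $E=\ketbra{e}{e}$ cuts the product into independent factors. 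This yields a renewal-type identity: $p(w)$ is a sum, over the position $j$ of the last $E$-slot, of $p(a_1\cdots a_{j-1})\cdot(2/s)\cdot\braUket{e}{P_{-a_j}\Delta$-block$\cdots}{e}$-type terms. The pure-$E$ contribution is explicit and positive, since $\braUket{e}{P_{\pm i}}{e}=1/2$, while all other terms contain at least one $\Delta$-string.

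The plan is then to prove by induction on $k$ the two-sided ratio bound
\[
\frac{c_1}{s}\,p(a_1\cdots a_{k-1}) \;\le\; p(a_1\cdots a_k)\;\le\;\frac{c_2}{s}\,p(a_1\cdots a_{k-1})
\]
for suitable constants $0<c_1<c_2$. We feed the upper bound for shorter prefixes into the renewal identity. A term whose last $E$ sits $m$ letters before the end has size at most $p(\text{prefix})\cdot(2/s)\cdot\norm{\Delta}_{\mathrm{op}}^{m}$ times bounded factors. Using the inductive ratio bounds, this is at most $p(a_1\cdots a_{k-1})\cdot (s/c_1)^{m-1}(c/s)^m$, which is geometrically summable. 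The sum is dominated by the leading positive term $(1/2)(2/s)\cdot(1/2)\,p(a_1\cdots a_{k-1})$ once $s$ is large. The main obstacle is exactly this step. One must control the $\Delta$-strings with a norm fine enough that each extra $\Delta$ costs a factor $O(1/s)$ and not merely $O(1)$, so that it beats the $s/c_1$ loss from dividing prefix probabilities.

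We would first try the operator norm, bounded by $\norm{\Delta}_F$. If that is insufficient, we would instead use the refined estimate $\norm{\Delta - (1/s)\sum_i\ketbra{i}{i}}$-type bounds, i.e.\ isolating the first correction $V_2-V_1$ explicitly, which is exactly why $s$ is taken as large as $2\cdot10^6$.
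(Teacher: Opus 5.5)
Your arguments for \ref{T1} and \ref{T2} are fine. The proof of \ref{T3}, however, has a genuine gap, and it starts with the estimate on $\Delta$.

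From $\|R_1\|_w\le\lambda$ you only get $\|\Delta\|_F\le 2\lambda/((1-\lambda)s)$. For $s\ge 2\cdot 10^6$ this is about $6.4/s$, and it is never below $6/s$. So, contrary to your claim, $\Delta$ is \emph{not} small compared with the coefficient $2/s$ of $E$. The unperturbed letter operator $(2/s)P_aEP_{-a}=(1/q)\ketbra{u_a}{v_a}$ has operator norm $1/s$, while $P_a\Delta P_{-a}$ is only bounded by roughly $6.4/s$. In your renewal sum each additional $\Delta$ then costs $(c/s)(s/c_1)=c/c_1$, with $c\approx 6.4$ and necessarily $c_1\lesssim 1/2$. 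The series is therefore not dominated by the leading term, and no change of norm helps as long as the input is $\|R_1\|_w\le\lambda$.

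The missing computation is that $R_1$ is explicit. Since $S(E)=(s/2)E-(1/2)\sum_i\ketbra{i}{i}$, you get $R_1=(1/s)\sum_i\ketbra{i}{i}$ and $\|R_1\|_w=2/\sqrt{s}$. Then \eqref{eq:cauchy} gives $\|V-V_1\|_F\le 4/((1-\lambda)s^{3/2})=:\gamma$, which is $o(1/s)$. This is what the paper uses, and nothing further needs to be isolated. Note also that your guess $(1/s)\sum_i\ketbra{i}{i}$ for the first correction is $R_1$ itself; the actual correction $V_2-V_1=(2/s^2)\sum_i\ketbra{i}{i}$ is already $O(s^{-3/2})$ in Frobenius norm.

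Even with this $\gamma$, a scalar induction on $p(w)/p(w')$ does not close. Cutting at an $E$-slot in position $j$, with $a_j=\sigma i$, produces the prefix factor $\bra{e}T_{a_1}\cdots T_{a_{j-1}}P_{a_j}\ket{e}=\tfrac12\, p(a_1\cdots a_{j-1})+\tfrac{\sigma}{2}\bra{e}T_{a_1}\cdots T_{a_{j-1}}\ket{i}$, not $p(a_1\cdots a_{j-1})$. So the sign of your leading term and the size of the error terms depend on the row vector $\bra{e}T_{w'}$, not only on $p(w')$.

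Concretely, in the unperturbed model $\bra{e}T^{(1)}_{w'}$ is proportional to $\bra{v_a}$, where $a$ is the last letter of $w'$. The one-step ratio is $(1/q)\braket{v_a}{u_b}$, which equals $1/q$, $2/q$, or $0$ according as $b\notin\{a,-a\}$, $b=-a$, or $b=a$. Your leading term $(1/2)(2/s)(1/2)\,p(w')$ is correct only in the first case, and your lower ratio bound fails outright for improper words, where $p=0$.

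The paper handles this with a vector-valued inductive hypothesis: for every proper word $w$ ending in $a$, $p_n(w)>0$ and $\|\bra{e}T^{(n)}_w/p_n(w)-\bra{v_a}\|_2\le\eta=4q\gamma$. Writing $\bra{\ell}=\bra{e}T^{(n)}_w/p_n(w)$ and $\Delta_b=P_b(V_n-V_1)P_{-b}$, a single step gives $p_n(wb)=p_n(w)\bigl(p_1(b)\braket{\ell}{u_b}+\braUket{\ell}{\Delta_b}{e}\bigr)$. Here $\braket{v_a}{u_b}\in\{1,2\}$ for $b\ne a$, and the perturbation is of order $\gamma\ll 1/q$. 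No multi-letter $\Delta$-strings ever need to be summed. You need a hypothesis of this kind, or an equivalent control of $\|\bra{e}T_w\|_2$ and its direction in terms of $p(w)$, for your expansion to work.
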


\begin{proof}
The proof of \ref{T1} is direct from $P_{-i} P_{+i} = 0$ (\cref{lem:orth-projections}) as $T_{\pm i}^2 = P_{\pm i} V P_{\mp i} P_{\pm i} V P_{\mp i} = 0$. \ref{T2} states that $E = S(V)$. By \ref{V2} in \cref{thm:fin-dep-V}, we have that $\norm{E - S(V_n)}_F \leq \lambda^n \to 0$ as $n \to \infty$, and since $S : \ell_2(G \times G) \to \ell_2(G \times G)$ is a continuous map, we obtain that $S(V) = E$. The remainder of this proof is dedicated to proving \ref{T3}. In fact, we prove that all the $T_{\pm i}^{(n)} = P_{\pm i} V_n P_{\mp i}$ satisfy \ref{T3}, hence so do the $T_{\pm i}$ by continuity of the scalar product. That is, for every finite word $w$, one can prove that $\norm{ T_w^{(n)} - T_w }_{\operatorname{op}} \to 0$ because $\norm{ V_n - V }_F \to 0$ and the $P_{\pm i}$ are orthogonal projections. We omit this step for the sake of succinctness. 

Recall that $V_1 = (2/s)E$ and $R_1 = E - (2/s)S(E)$ and that $S(E) = (s/2)E - (1/2)\sum_i \ketbra{i}{i}$; hence $\norm{R_1}_w = (1/s)\norm*{\sum_i \ketbra{i}{i}}_w \leq 2/\sqrt{s}$. As the $V_n$ form a Cauchy sequence, we get that every $V_n$ is close to $V_1$. Plugging $\norm{R_1}_w$ into \cref{eq:cauchy}, we get a concrete bound
\[
\norm*{ V_n - V_1 } \leq \frac{4}{(1 - \lambda)s^{3/2}} = \gamma \ .
\]

Consider the case of $T_a^{(1)}$ first. For each color $a = \sigma i$ with $i\in[s]$ and $\sigma \in \set{+1,-1}$, define $\ket{u_a} = \ket{e} + \sigma \ket{i}$ and $\ket{v_a} = \ket{e} - \sigma \ket{i}$ so that $P_{a}\ket{e} = \ket{u_a}/2$ and $P_{-a}\ket{e} = \ket{v_a}/2$.
Consider words of single letters. Then $T_{a}^{(1)} = (2/s)P_{a} E P_{-a} = (1/q)\ketbra{u_a}{v_a}$. For any pairs of colors $a,b$, we have
\[
\braket{v_a}{u_b} = \braket{u_b}{v_a} = \begin{cases}
    0 &\text{if } a = b \\
    2 &\text{if } a = -b \\
    1 &\text{if } a \notin \set{ b, -b }
\end{cases} \ .
\]
and so it is easy to see that $p_1(a) = \braUket{e}{T^{(1)}_a}{e} = 1/q$ for all colors $a$, and for all words $w = w_1 w_2 \ldots w_k$ we have
\[
\braUket{e}{T_w^{(1)}}{e} = \braUket{e}{T_{w_1}^{(1)} T_{w_2}^{(1)} \ldots T_{w_k}^{(1)}}{e}
= (1/q)^{k} \braket{e}{u_{w_1}} \braket{v_{w_1}}{u_{w_2}} \ldots \braket{v_{w_{k-1}}}{u_{w_k}} \braket{v_{w_k}}{e} \geq 0
\]

Fix some $n \geq 1$. We prove by induction on $|w|$ that for all proper colorings $w$ ending in $a$
\[
p_n(w) = \braUket{e}{T_w^{(n)}}{e} > 0
\quad\text{and}\quad
\norm*{ \frac{\bra{e}T_w^{(n)}}{p_n(w)} - \bra{v_a} }_2 \leq \eta := 4q\gamma
= \frac{32}{(1-\lambda)\sqrt{s}}\ .
\]
We define $\Delta_a = P_{+a}(V_n - V_1)P_{-a}$ so that 
$T_a^{(n)} = (1/q) \ketbra{u_a}{v_a} + \Delta_a$
and as the difference between $V_n$ and $V_1$ is small, we argue it will have little impact on positivity. For the base case, $w = a = \pm i$ for some $i\in[s]$. Observe that
\[
p_n(a) 
= \braUket{e}{T_a^{(n)}}{e} = p_1(a) + \braUket{e}{\Delta_a}{e} 
\in [p_1(a) \pm \gamma]
\]
where we use that
$
|\braUket{e}{\Delta_a}{e}| 
\leq \norm{V_n - V_1}_{\operatorname{op}} \leq \norm{V_n - V_1}_{F} \leq \gamma
$. In particular $p_n(a) > 0$. Then we have that $\bra{e}T_a^{(1)} = p_1(a)\bra{v_a}$, hence $\bra{e}T_a^{(n)} = p_1(a) \bra{v_a} + \bra{e}\Delta_a$, which means that 
\[
\norm*{ \frac{\bra{e}T_a^{(n)}}{p_n(a)} - \bra{v_a} }_2 
= \frac{ \norm*{ \bra{e}{T_a^{(n)}} - p_n(a) \bra{v_a} }_2 }{p_n(a)}
\leq \frac{\abs{p_1(a) - p_n(a)}\norm{v_a}_2 + \norm*{ \bra{e}\Delta_a }_2}{p_n(a)}
\leq \frac{q(1 + \sqrt{2})\gamma}{1 - q\gamma}
\]
where the last inequality uses the derivation on $p_n(a)$ and $\norm*{ \bra{e}\Delta_a }_2 \leq \norm{ \Delta_a }_{\operatorname{op}} \leq \norm{\Delta_a}_F \leq \gamma$.

For the induction step, consider a proper coloring $w$ which ends in $a$. We prove that $p_n(wb) > 0$ and $\norm*{\frac{\bra{e}T_{wb}^{(n)}}{p_n(wb)} - \bra{v_b}} \leq \eta$ for all $b \neq a$. Let $\bra{ \ell }$ be so that $\bra{e}T_{w}^{(n)} = p_n(w)\bra{\ell}$. We have
\begin{equation}
\label{eq:pn-ratio}
p_n(wb) 
= \braUket{e}{T_{wb}^{(n)}}{e}
= \braUket{e}{T_{w}^{(n)} \cdot T_b^{(n)}}{e}
= p_n(w) \braUket{\ell}{T_b^{(n)}}{e}
= p_n(w) \paren*{ p_1(b)\braket{\ell}{u_b} + \braUket{\ell}{\Delta_b}{e} } \ .
\end{equation}
We have $\abs{\braUket{\ell}{\Delta_b}{e}} \leq \norm{\Delta_b}_F \norm{\ell}_2 /\sqrt{2} \leq (\sqrt{2} + \eta)\gamma/\sqrt{2}$ because $\norm{P_{-b}\ket{e}}_2 = 1/\sqrt{2}$, $\norm{v_a} = \sqrt{2}$ and $\norm{\ell - v_a}_2 \leq \eta$ by the induction hypothesis. We have
$\braket{\ell}{u_b} = \braket{v_a}{u_b} + \braket{\ell - v_a}{u_b}$ hence
\[
\abs{ \braket{\ell}{u_b} - \braket{v_a}{u_b} } 
\leq \norm{\ell - v_a}_2\norm{u_b}_2
\leq \eta\sqrt{2} \ .
\]
Observe that when $a \neq b$, the scalar product $\braket{v_a}{u_b}$ is either one or two.
Overall, this means that
\[
p_n(wb) = \braUket{e}{T_{wb}^{(n)}}{e}
\geq p_n(w)\paren*{ 1/q - \eta\sqrt{2}/q - (\sqrt{2} + \eta)\gamma/\sqrt{2} } > 0
\]
which is positive because $p_n(w) > 0$ and $\gamma = \Theta(s^{-3/2}) \ll 1/q$, $\eta/q = 4\gamma$, and $\eta\gamma = 4q\gamma^2 = \Theta(s^{-2}) \ll 1/q$ for our choice of $s$. For the second part of the induction hypothesis, note that $\bra{e}T_{wb}^{(n)} = p_n(w)\bra{\ell}T_b^{(n)} = p_n(w)p_1(b)\braket{\ell}{u_b}\bra{v_b} + p_n(w)\bra{\ell}\Delta_b$. In particular,
\begin{align*}
\norm*{ \frac{\bra{e}T_{wb}^{(n)}}{p_n(wb)} - \bra{v_b} }_2 
&= \frac{ \norm*{ \bra{e}T_{wb}^{(n)} - p_n(wb)\bra{v_b} }_2 }{p_n(wb)}\\
&\leq \frac{ p_n(w) }{p_n(wb)} \paren*{ \abs*{ p_1(b)\braket{\ell}{u_b} - \frac{p_n(wb)}{p_n(w)} } \cdot \norm{ \bra{v_b} }_2 + \norm{ \bra{\ell}\Delta_b  }_2 } \ .
\end{align*}
Notice that \cref{eq:pn-ratio} rewrites as 
\[
\frac{ p_n(wb) }{p_n(w)}
= p_1(b)\braket{\ell}{u_b} + \braUket{\ell}{\Delta_b}{e}
\]
hence
\begin{align*}
\norm*{ \frac{\bra{e}T_{wb}^{(n)}}{p_n(wb)} - \bra{v_b} }_2 
&\leq \frac{ p_n(w) }{p_n(wb)} \paren*{ \abs*{ \braUket{\ell}{\Delta_b}{e} } \cdot \norm{ \bra{v_b} }_2 + \norm{ \bra{\ell}\Delta_b }_2 } \\
&\leq \frac{1}{p_1(b)} \frac{ \abs*{ \braUket{\ell}{\Delta_b}{e} } \cdot \norm{ \bra{v_b} }_2 + \norm{ \bra{\ell}\Delta_b }_2 }{\braket{\ell}{u_b} + \braUket{\ell}{\Delta_b}{e}/p_1(b) } \\
&\leq q \frac{ (\sqrt{2} + \eta)\gamma + (\sqrt{2} + \eta)\gamma }{1 - \eta\sqrt{2} - q(\sqrt{2} + \eta)\gamma/\sqrt{2} }\\
&= q \frac{ 2\sqrt{2}\gamma + 2\gamma\eta }{ 1 - \eta\sqrt{2} - q\gamma - q\eta\gamma/\sqrt{2} }
\leq \eta
\end{align*}
where the last inequality uses our value of $\gamma = \gamma(s)$ and that $s \geq 2 \cdot 10^6$.
\end{proof}

\end{document}